\numberwithin{equation}{section}
\newtheorem{theorem}{Theorem}[section]
\newtheorem{lemma}[theorem]{Lemma}
\newtheorem{define}[theorem]{Definition}
\newtheorem{remark}[theorem]{Remark}
\newtheorem{coro}[theorem]{Corollary}
\newtheorem{prop}[theorem]{Property}
\newcommand{\rank}[1]{\text{rank}{#1}}
\begin{document}

\title[Partial-skew-orthogonal polynomials and integrable lattices]{Partial-skew-orthogonal  polynomials and related integrable lattices with Pfaffian tau-functions}

\author[Xiang-Ke Chang]{Xiang-Ke Chang $^{\dag}$}

\thanks{$^\ast$Corresponding author: Shi-Hao Li (lishihao@lsec.cc.ac.cn)}

\thanks{$^\dag$LSEC, ICMSEC, Academy of Mathematics and Systems Science, Chinese Academy of Sciences, P.O.Box 2719, Beijing 100190, PR China}

\author[Yi He]{Yi He $^{\ddag}$}
\thanks{$^\S$School of Mathematical Sciences, University of Chinese Academy of Sciences, Beijing 100049, PR China}
\thanks{$^\ddag$Wuhan Institute of Physics and Mathematics, Chinese Academy of
  Sciences, Wuhan 430071, PR China}
\author[Xing-Biao Hu]{Xing-Biao Hu $^{\dag,\S}$}
\author[Shi-Hao Li]{Shi-Hao Li $^{\dag,\S,\ast}$}


\thanks{Email addresses: changxk@lsec.cc.ac.cn; heyi@wipm.ac.cn; hxb@lsec.cc.ac.cn; lishihao@lsec.cc.ac.cn}




\subjclass[2010]{37K10,  33C47, 15A15, 41A21}
\date{}

\dedicatory{}

\keywords{Skew-orthogonal polynomials, Integrable lattices of Toda type, Pfaffian tau-function}

\begin{abstract}
Skew-orthogonal polynomials (SOPs) arise in the study of the $n$-point distribution function for orthogonal and symplectic random matrix ensembles.
Motivated by the average of characteristic polynomials of the Bures random matrix ensemble studied in \cite{forrester2016relating}, we propose the concept of {\textit{partial-skew-orthogonal polynomials (PSOPs)} as a modification of the SOPs}, and then the PSOPs with a variety of special skew-symmetric kernels and weight functions are addressed. By considering appropriate deformations of the weight functions, we derive nine integrable lattices in different dimensions. As a consequence, the tau-functions for these systems are shown to be expressed in terms of Pfaffians and the wave vectors PSOPs. In fact, the tau-functions also admit the representations of multiple integrals. Among these integrable lattices, some of them are known, while the others are novel to the best of our knowledge. In particular, one integrable lattice is related to the partition function of the Bures random matrix ensemble. Besides, we derive a discrete integrable lattice, which can be used to compute certain vector Pad\'e approximants. This yields the first example regarding the connection between integrable lattices and vector Pad\'e approximants, for which Hietarinta, Joshi and Nijhoff pointed out that `` This field remains largely to be explored. '' in the recent monograph \cite[Section 4.4]{hietarinta2016discrete} .

\end{abstract}

\maketitle
\setcounter{tocdepth}{3}
\tableofcontents
\section{Introduction}
The concept of  {\textit {universality}} in the random matrix theory (RMT) has been an incredibly fertile ground for the intimate connections in theoretical physics, statistical mechanics and wireless communication, and many objects in analytic number theory and combinatorics can also be characterized by random matrix models (see e.g. \cite{deift2000random,mehta2004random} and references therein). For example, the distances between scattering resonances, distances between the zeros of Riemann-Zeta function and the patience sorting and longest increasing subsequences are all related to the universality in RMT. From mathematical point of view, the study of universality in energy level correlations of random matrices is connected with the asymptotic behavior of certain families of polynomials. For instances, the spectrum problem of unitary ensembles of random matrices can be transformed into the study of large $N$ asymptotic of standard orthogonal polynomials (OPs), while the spectrum problem of orthogonal/symplectic ensembles of random matrices are related to those of skew orthogonal polynomials (SOPs), which actually originate from RMT \cite{adler2000classical,adler1999pfaff,deift2000orthogonal,dyson1972class,mahoux1991method,mehta2004random}.

The Bures ensemble was identified from the Bures metric, which is a natural choice in measuring the distance of density operators representing states in quantum mechanics \cite{bures1969extension}. It has nowadays attracted many studies in quantum chaos and 2D quantum gravity in RMT (see  \cite{forrester2016relating} and references therein). The Bures ensemble was shown to correspond to a Pfaffian point process by Forrester and Kieburg in \cite{forrester2016relating}.
We observe that the partition functions for the Bures ensemble admit Pfaffian expressions, but they are not exactly included in the frame of SOPs. A natural question arises: what kind of orthogonal polynomials are these partition functions related to? This motivates us to propose a generic notion \textit {partial-skew-orthogonal polynomials (PSOPs)} based on skew-symmetric inner products.  Interestingly, they also arise from the well-posedness of the family of polynomials satisfying some partial-skew orthogonality. And the specific PSOPs related to the Bures ensemble own a four-term recurrence relationship. All of these will be explained in detail in Section \ref{sec:psop}.

As is known, the connection between the theory of OPs and the integrable systems has been increasingly investigated and used by both communities since the early1990s. And OPs play central roles in integrable numerical algorithms in some sense.  In the rest of this paper, our particular interests lies in the investigation on the integrable systems and numerical algorithms associated to PSOPs, before which let's give more introduction on the related topic below.



Regarding OPs and integrable systems, one of the well-known examples is the connection between the ordinary OPs and the Toda lattice, and the Toda lattice is a celebrated completely integrable system. The ordinary OPs appear as wave functions of the Lax pair of the semi-discrete Toda lattice due to a one-parameter deformation of the measure \cite{chu2008linear,nakamura2004special,peherstorfer2007toda}. Besides, the compatibility of spectral transformations of OPs may yield the full-discrete Toda lattice \cite{papageorgiou1995orthogonal,spiridonov1995discrete}.  As a second example,  the semi-discrete Lotka-Volterra lattice (sometimes also called the Kac--van Moerbeke lattice or the Langmuir lattice)  \cite{chang2016multipeakons,chu2008linear,kac1975explicitly,tsujimoto2001discrete} can be obtained as a one-parameter deformation of the measure of symmetric OPs.  The compatibility of spectral transformations of symmetric OPs gives the full-discrete Lotka-Volterra lattice \cite{spiridonov1997discrete}.  For more examples, the readers are invited to consult \cite{adler1995matrix,adler1997string,adler1999generalized,chang2015about,chen2015three} and the references therein.


Sometimes the full-discrete Toda lattice is also called the qd algorithm \cite{chang2015about,rutishauser1954quotienten}, which can be used to compute eigenvalues of some kind of tridiagonal matrix.    The full-discrete Lotka-Volterra equation can be used as an efficient algorithm to compute singular values of certain tridiagonal matrix \cite{tsujimoto2001discrete}. Here we also mention that one step of the QR algorithm is equivalent to the time evolution of the finite semi-discrete Toda lattice \cite{symes1982qr}. It should be pointed out that, in some sense,  the orthogonality is one underlying ingredient, which plays a central role in the connection between the Toda (and Lotka-Volterra) lattices and the algorithms.  In fact, there exist more connections between the algorithms in numerical analysis and integrable systems. For example, the discrete relativistic Toda molecule equation can be used to design a new Pad\'{e} approximation algorithm \cite{minesaki2001discrete}.  Wynn's celebrated $\varepsilon$-algorithm \cite{wynn1956device} is nothing but the fully discrete potential KdV equation \cite{nakamura2000applied,papageorgiou1993integrable}.  For more information on this topic, please refer \cite{deift1983,hietarinta2016discrete,nakamura2000applied,nakamura2001algorithms,sun2013extended} etc.. Here we stress that, connections of  vector or matrix Pad\'{e} approximants with integrable lattices remain largely to be explored, which was also claimed in the recent monograph \cite[Section4.4]{hietarinta2016discrete} of Hietarinta, Joshi and Nijhoff.

The deformations of the OPs and symmetric OPs lead to the Toda and Lotka-Volterra lattices admiting tau-functions in terms of Hankel determinants of the moment sequence, which can be equivalently rewritten as the forms of multiple integral of Vandermonde determinants.  More deeply, the tau-function of the Toda lattice can be regarded as the partition function of unitary  ensembles in RMT, and the OPs may arise as the corresponding characteristic polynomials over the eigenvalues. In fact, the calculation of statistical properties of the eigenvalues involves many manifestations of integrable system theory and OPs. As a product originating from the study of RMT, the skew-orthogonal polynomials (SOPs) have been extensively studied in the literature ( see, e.g. \cite{adler2000classical,adler1999pfaff,chang2017coupled,deift2000orthogonal,dyson1972class,mahoux1991method,mehta2004random,miki2011discrete} ). Later, the so-called Pfaff lattice \cite{adler1999pfaff} was proposed, whose  tau-functions are Pfaffians and the wave vectors SOPs. And, in \cite{miki2011discrete}, the authors produced full-discrete integrable systems by considering the discrete spectral transformations of SOPs.

In the rest of the paper, we produce nine integrable systems by considering one (or two)-parameter deformations for PSOPs. All of these integrable lattices admit the tau-function representations, in terms of both Pfaffians and also multiple integrals. In particular, a 1+1 dimensional Toda of BKP type is derived by deforming a specific PSOPs, whose $\tau$-function is exactly the time-dependent partition function of the Bures ensemble. Besides, we also obtain two discrete integrable systems, which can be used to accelerate a Pfaffian sequence transformation and a vector Pad\'{e} approximation respectively. It deserves to point out that we provide the first example regarding the connection between the vector Pad\'e approximants and integrable systems.

The paper is organized as follows: The concept of  PSOPs  is introduced in Section \ref{sec:psop} (See Definition \ref{def:psop}). In the subsequent two sections, we derive several integrable lattices  including the two in \cite{chang2017new,chang2017application}  by imposing some special weight functions and deformations on PSOPs.  More exactly, in Section \ref{sec:is}, we obtain seven integrable lattices including (semi or full-discrete) generalized Toda and Lotka-Voterra lattices of BKP type in different dimensions.  In Section \ref{sec:alg}, two discrete integrable lattices are produced, one of which can be used for algorithms for vector Pad\'{e} approximants and the other for convergence acceleration of sequence transformations.  Section \ref{sec:conc} is devoted to conclusion and discussions.

 \section{Partial-skew-orthogonal polynomials}\label{sec:psop}

  SOPs arises in the theory of random matrices \cite{adler2000classical,dyson1972class,mahoux1991method} and have been extensively investigated. Please refer Appendix \ref{sec:sop} and related references for more details.  In this section, our main aim is to introduce a new concept called {\it partial-skew-orthogonal polynomials} (PSOPs), which are modifications of SOPs. Actually, it may naturally appear as we explain below. 
 
\subsection{Uniqueness of polynomial family with skew-symmetric inner product}
 Let $\langle \cdot, \cdot\rangle$ be a skew-symmetric inner product in the polynomial space over the field of real numbers, more exactly speaking, a bilinear 2-form from $\mathbb{R}(z)\times\mathbb{R}(z)\rightarrow \mathbb{R}$ satisfying the skew symmetric relation:
$$\langle f(z),g(z)\rangle=-\langle g(z),f(z)\rangle.$$
Introduce the bimoment sequence $\{\mu_{i,j}\}_{i,j=0}^\infty$ defined by
\begin{align*}
\mu_{i,j}=\langle z^i,z^j\rangle=-\langle z^j,z^i\rangle.
\end{align*}
 
Under the frame of this skew-symmetric inner product, let's consider the polynomial family $\{P_n(z)\}_{n=0}^{\infty}$
satisfying the usual orthogonality relation:
\begin{align}\label{ortho_even}
\langle P_{n}(z),z^j\rangle=\langle z^j, P_{n}(z)\rangle=0,\qquad 0\leq j\leq n-1.
\end{align}

\subsubsection{Even degree case} 
 As for the monic polynomials of exactly even degree with the ansatz
$$
P_{2n}(z)=z^{2n}+\sum_{i=1}^{2n-1}a_{n,i}z^{i},
$$
the orthogonality relations
$$
\langle P_{2n}(z),z^j\rangle=\langle z^j, P_{2n}(z)\rangle=0,\qquad 0\leq j\leq 2n-1
$$
lead to the linear system
$$
\sum_{i=0}^{2n-1}a_{n,i}\mu_{j,i}+\mu_{j,2n}=0,\qquad 0\leq j\leq 2n-1.
$$
If the determinant $\det\left((\mu_{i,j})_{i,j=0}^{2n-1}\right)$ of coefficient matrix is nonzero, then this linear system is uniquely solved by using Cramer's rule, leading to
\begin{align*}
 P_{2n}(z)=\frac{1}{\det\left(\mu_{i,j}\right)_{0\leq i,j\leq 2n-1}}
 \left|
 \begin{array}{cccc}
 \mu_{0,0}& \mu_{0,1}&\cdots& \mu_{0,2n}\\
  \mu_{1,0}& \mu_{1,1}&\cdots& \mu_{1,2n}\\
  \vdots&\vdots&\ddots&\vdots\vdots\\
   \mu_{2n-1,0}& \mu_{2n-1,1}&\cdots& \mu_{2n-1,2n}\\
   1&z&\cdots&z^{2n}
 \end{array}
 \right|.
\end{align*}
Actually, the skew-symmetric property of the moment matrix $(\mu_{i,j})_{i,j=0}^{k}$ promotes one to give the representations in terms of Pfaffians \footnote{Please see more information of Pfaffians in Appendix \ref{app_pf}. And note that we will use the notations due to Hirota \cite{hirota2004direct}.} , which was noticed by Adler, Horozov and van Moerbeke in \cite{adler1999pfaff} for the SOPs, whose even case is the same as here.  
 Based on two facts  \eqref{det_pf1}
and \eqref{det_pf_even} between the determinants and Pfaffians, one can obtain that
 the monic polynomials $\{P_{2n}(z)\}_{n=0}^{\infty}$ have the following explicit form in terms of Pfaffians 
\begin{align*}
&P_{2n}(z)=\frac{1}{\tau_{2n}}Pf(0,1,\cdots,2n-1,2n,z),
\end{align*}
where 
the Pfaffian entries are defined as 
$$Pf(i,j)=\mu_{i,j},\qquad Pf(i,z)=z^i$$ 
and
$$\tau_{2n}\triangleq Pf(0,1,\cdots,2n-1)\neq0.$$

 \subsubsection{Odd degree case}For the the monic polynomials of odd degree with the ansatz 
 $$
P_{2n+1}(z)=z^{2n+1}+\sum_{i=1}^{2n}b_{n,i}z^{i},
$$ if one still employ the orthogonality relations
$$
\langle P_{2n+1}(z),z^j\rangle=\langle z^j, P_{2n+1}(z)\rangle=0,\qquad 0\leq j\leq 2n,
$$ 
an ill-defined linear system
$$
\sum_{i=0}^{2n}b_{n,i}\mu_{j,i}+\mu_{j,2n+1}=0,\qquad 0\leq j\leq 2n
$$
 arises since the coefficient matrix is a skew symmetric matrix of odd order, which is singular. It shows that it is not a good way for such orthogonality definition for odd case, which motivates one to search for other definitions. 
 
Generally,  for any monic polynomial  family of odd order, there hold
\begin{align}\label{gene_inn}
\langle P_{2n+1}(z),z^j\rangle=\alpha_{j,2n+1},\qquad 0\leq j\leq 2n+1.
\end{align}
Here one question is naturally raised:  can we choose some appropriate $\{\alpha_{j,2n+1}\}_{0\leq j\leq 2n+1}$ so that the monic polynomial  family might be uniquely determined?  Let's give some specific assignments below.

The inner product relations  \eqref{gene_inn} yield
\begin{align}
\sum_{i=0}^{2n}b_{n,i}\mu_{j,i}+\mu_{j,2n+1}=-\alpha_{j,2n+1},\qquad 0\leq j\leq 2n+1.
\end{align}
Let $A_{k,l} $ be a matrix defined by 
$$ A_{k,l}=\left(\mu_{i,j}\right)_{0\leq i\leq k-1, 0\leq j\leq l-1}$$
and 
$B_{k,l}$ be a matrix of $A_{k,l-1}$ added an extra column, namely
$$
B_{k,l}=
\left(
\begin{array}{ccc}
A_{k,l-1}&\vline&
\begin{array}{c}
\mu_{0,l-1}+\alpha_{0,l-1}\\
\vdots\\
\mu_{k-1,l-1}+\alpha_{k-1,l-1}
\end{array}
\end{array}
\right).$$
To intend the above linear system has a unique nonzero solution is equivalent to need
$$\rank(A_{2n+2,2n+1})=\rank(B_{2n+2,2n+2})=2n+1.$$

One one hand, it follows from the assumption on the even case that $A_{2n+2,2n+2}$ is of full rank, which implies that $A_{2n+2,2n+1}$, i.e. the coefficient matrix, has rank $2n+1$.

On the other hand, if we can find appropriate $\alpha_{j,2n+1}$ satisfying $\det(B_{2n+2,2n+2})=0$, then the rank of $B_{2n+2,2n+2}$ is 2n+1. And the monic polynomial  family of odd degree will be determined uniquely.

The specific choice 
\begin{align*}
&\alpha_{2n,2n+1}=\det(A_{2n+2,2n+2}){\Big/}\det\left(
\begin{array}{ccc}
A_{2n+1,2n}&\vline&
\begin{array}{c}
\mu_{0,2n+1}\\
\vdots\\
\mu_{2n,2n+1}
\end{array}
\end{array}
\right),\\
 & \alpha_{j,2n+1}=0,\qquad j=0,1\cdots,2n-1,2n+1
\end{align*}
leads to $\det(B_{2n+2,2n+2})=0$. Consequently, this choice uniquely determines a class of monic polynomials of odd degree, that is, \begin{align}
 P_{2n+1}(z)=\frac{1}{\det\left(\mu_{i,j}\right)_{0\leq i,j\leq 2n-1}}
 \left|
 \begin{array}{ccccc}
 \mu_{0,0}& \mu_{0,1}&\cdots& \mu_{0,2n-1}&\mu_{0,2n+1}\\
  \mu_{1,0}& \mu_{1,1}&\cdots& \mu_{1,2n-1}&\mu_{1,2n+1}\\
  \vdots&\vdots&\ddots&\vdots&\vdots\\
     \mu_{2n-1,0}& \mu_{2n-1,1}&\cdots& \mu_{2n-1,2n-1}&\mu_{2n-1,2n+1}\\
      1&z&\cdots&z^{2n-1}&z^{2n+1}
 \end{array}
 \right|,\label{sop_odd_det}
\end{align}
which are nothing but the case of SOPs.

In the following, let us consider another possibility. Assume that the $\alpha_{j,2n+1}$ are restricted by
\begin{align}\label{ass_alpha}
\alpha_{j,2n+1}=-\beta_j\frac{\det(A_{2n+2,2n+2})}{\det(C_{2n+2,2n+2})},
\end{align}
where $\beta_j$ are some constants and the nonsingular matrix $C_{k,l}$ are defined by
$$
C_{k,l}=
\left(
\begin{array}{ccc}
A_{k,l-1}&\vline&
\begin{array}{c}
\beta_0\\
\vdots\\
\beta_{k-1}
\end{array}
\end{array}
\right).$$
 By multi-linearity, we have
\begin{align*}
\det(B_{2n+2,2n+2})=\det(A_{2n+2,2n+2})-\frac{\det(A_{2n+2,2n+2})}{\det(C_{2n+2,2n+2})}\det(C_{2n+2,2n+2})=0.
\end{align*}
Thus this specific choice also uniquely determines the odd polynomial family satisfying \eqref{gene_inn}. Although they can not be explicitly solved by use of Cramer's rule, it is not hard to guess that 
they can be written as 
\begin{align*}
P_{2n+1}(z)=\frac{1}{\det\left(C_{2n+2,2n+2}\right)}
 \left|
 \begin{array}{ccccc}
 \mu_{0,0}& \mu_{0,1}&\cdots& \mu_{0,2n+1}&-\beta_0\\
  \mu_{1,0}& \mu_{1,1}&\cdots& \mu_{1,2n+1}&-\beta_1\\
  \vdots&\vdots&\ddots&\vdots&\vdots\\
     \mu_{2n+1,0}& \mu_{2n+1,1}&\cdots& \mu_{2n+1,2n+1}&-\beta_{2n+1}\\
      1&z&\cdots&z^{2n+1}&0
 \end{array}
 \right|,
\end{align*}
if $\det\left(C_{2n+2,2n+2}\right)\neq0$. One could confirm the inner product relations \eqref{gene_inn} with $\alpha_{j,2n+1}$ in \eqref{ass_alpha} are satisfied. 

It is noted that the expressions above can also be written in terms of Pfaffians. By using
\eqref{det_pf1}
and \eqref{det_pf_odd}, one can obtain that
 the monic polynomials $\{P_{2n}(z)\}_{n=0}^{\infty}$ have the following explicit form in terms of Pfaffians 
\begin{align*}
P_{2n+1}(z)=\frac{1}{\tau_{2n+1}}Pf(d_0,0,1,\cdots,2n,2n+1,z),
\end{align*}
where 
the Pfaffian entries are defined as 
\begin{align*}
&Pf(i,j)=\mu_{i,j}\triangleq \langle z^i,z^j\rangle,\qquad Pf(d_0,i)=\beta_i,\\
&Pf(i,z)=z^i, \qquad\qquad\qquad\quad Pf(d_0,z)=0,
\end{align*}
and
$$\tau_{2n+1} \triangleq Pf(d_0,0,1,\cdots,2n)\neq0.$$

\begin{remark}
The specific assignment \eqref{ass_alpha} seems strange, but it does come from something interesting. In fact, we are motivated by the average of characteristic polynomials of Bures ensemble in \cite{forrester2016relating}. The relation will be explained in the subsection \ref{subsec:psop_bures} rather than here.
\end{remark}

\subsection{PSOPs with Pfaffian expressions}

Based on the argument in the above subsection, we are finally inspired to give the following definition. 
 \begin{define}\label{def:psop}
Let $\langle \cdot, \cdot\rangle$ be a  skew-symmetric inner product in the polynomial space over the field of real numbers, more exactly speaking, a bilinear 2-form from $\mathbb{R}(z)\times\mathbb{R}(z)\rightarrow \mathbb{R}$ satisfying the skew symmetric relation:
$$\langle f(z),g(z)\rangle=-\langle g(z),f(z)\rangle.$$
A family of monic polynomials $\{P_n(z)\}_{n=0}^{\infty}$ are called monic partial-skew-orthogonal polynomials (PSOPs)  with respect to the  skew-symmetric inner product $\langle \cdot, \cdot\rangle$ if they satisfy the following relations:
\begin{subequations}\label{pskew_inner}
\begin{align}
&\langle P_{2n}(z),z^m\rangle=\zeta_n\delta_{2n+1,m},\qquad 0\leq j\leq 2n-1
,\label{pskew_inner1}\\
&\langle P_{2n+1}(z),z^m\rangle=\beta_m\gamma_n,\quad \qquad 0\leq j\leq 2n+1
,\label{pskew_inner2}
\end{align}
\end{subequations}
for some appropriate nonzero numbers $\beta_m,\gamma_n,\zeta_n$.
 \end{define}


As is discussed in the last subsection, we are mainly interested in the monic PSOP sequence $\{P_n(z)\}_{n=0}^\infty$ written as:
\begin{subequations}\label{exp:PSOP}
\begin{align} 
&P_{2n}(z)=\frac{1}{\tau_{2n}}Pf(0,1,\cdots,2n-1,2n,z), \label{exp:PSOP_even}\\
&P_{2n+1}(z)=\frac{1}{\tau_{2n+1}}Pf(d_0,0,1,\cdots,2n,2n+1,z),\label{exp:PSOP_odd}
\end{align}
\end{subequations}
with the assumption
$$
\tau_{2n}\triangleq Pf(0,1,\cdots,2n-1)\neq 0,\qquad \tau_{2n+1} \triangleq Pf(d_0,0,1,\cdots,2n)\neq0,
$$
where the Pfaffian entries are defined as
\begin{align*}
&Pf(i,j)=\mu_{i,j}\triangleq \langle z^i,z^j\rangle,\qquad Pf(d_0,i)=\beta_i,\\
&Pf(i,z)=z^i,\qquad\qquad\qquad\quad Pf(d_0,z)=0.
\end{align*}

The ``partial-skew-orthogonality'' property for this class of PSOPs is listed in the following theorem.
\begin{theorem} As for the polynomial sequence \eqref{exp:PSOP},  there hold
\begin{align*}
&\langle P_{2n}(z), z^m\rangle=\frac{\tau_{2n+2}}{\tau_{2n}}\delta_{2n+1,m},\qquad 0\leq m\leq 2n+1,\\
& \langle P_{2n+1}(z), z^m\rangle=-\frac{\tau_{2n+2}}{\tau_{2n+1}}\beta_m,\qquad\ 0\leq m\leq 2n+1.
\end{align*}

\end{theorem}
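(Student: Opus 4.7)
The plan rests on two routine facts: the inner product $\langle\cdot,\cdot\rangle$ is bilinear, and the Pfaffian is multilinear in each of its columns. Since the only entries depending on $z$ are $Pf(i,z)=z^i$ and (in the odd case) $Pf(d_0,z)=0$, both $Pf(0,1,\cdots,2n,z)$ and $Pf(d_0,0,1,\cdots,2n+1,z)$ are linear in the $z$-column. I would therefore push $\langle\cdot,z^m\rangle$ inside the Pfaffian, replacing the index $z$ by a formal index $m^{\ast}$ with Pfaffian entries
\[
Pf(i,m^{\ast})=\langle z^{i},z^{m}\rangle=\mu_{i,m},\qquad Pf(d_{0},m^{\ast})=\langle 0,z^{m}\rangle=0.
\]
Both identities then reduce to the evaluation of a single Pfaffian.

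For the even case, I would evaluate $Pf(0,1,\cdots,2n,m^{\ast})$ by a direct column comparison. When $m=2n+1$, each entry $Pf(i,m^{\ast})=\mu_{i,2n+1}$ matches $Pf(i,2n+1)$ in $\tau_{2n+2}=Pf(0,1,\cdots,2n+1)$, so the Pfaffian equals $\tau_{2n+2}$. When $0\le m\le 2n$, the columns labelled $m$ and $m^{\ast}$ have identical entries (the diagonal terms $\mu_{m,m}=\langle z^{m},z^{m}\rangle=0$ agree, and skew-symmetry forces the corresponding rows to agree as well), so the underlying skew-symmetric matrix has two coinciding columns and its Pfaffian vanishes. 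Combining the two sub-cases yields $\langle P_{2n}(z),z^{m}\rangle=\frac{\tau_{2n+2}}{\tau_{2n}}\delta_{2n+1,m}$.

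For the odd case, I would expand $Pf(d_{0},0,1,\cdots,2n+1,m^{\ast})$ along the index $d_{0}$. Because $Pf(d_{0},m^{\ast})=0$, the expansion reduces to a signed sum of $\beta_{i}\,Pf(0,\cdots,\hat{i},\cdots,2n+1,m^{\ast})$ over $i=0,\cdots,2n+1$. The duplicate-column argument above annihilates every cofactor with $i\neq m$; only $i=m$ survives. For that term, transporting $m^{\ast}$ back into the $m$-th slot contributes the sign $(-1)^{2n+1-m}$, and combined with the sign $(-1)^{m}$ from the $d_{0}$-expansion this produces an overall factor of $-1$. The surviving cofactor is $\tau_{2n+2}$, so the whole Pfaffian equals $-\beta_{m}\tau_{2n+2}$, and division by $\tau_{2n+1}$ gives the claimed formula.

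The one genuinely fiddly point is the sign bookkeeping in the Pfaffian expansion and in the rearrangement that moves $m^{\ast}$ into position $m$; everything else follows from bilinearity of $\langle\cdot,\cdot\rangle$, multilinearity of the Pfaffian, and the elementary observation that a skew-symmetric matrix with two equal columns has vanishing Pfaffian.
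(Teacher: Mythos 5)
Your proof is correct and follows essentially the same route as the paper's: both push the linear functional $\langle\cdot,z^m\rangle$ into the $z$-column of the Pfaffian via bilinearity and then reduce everything to the vanishing of a Pfaffian with a repeated index. The only cosmetic difference is in the odd case, where you expand along $d_0$ and annihilate the cofactors by the duplicate-column argument, while the paper expands along $z$ and invokes the vanishing of $Pf(d_0,0,1,\cdots,2n+1,m)$ as a whole to isolate the $Pf(d_0,m)$ term; the two computations are term-for-term equivalent, and your sign bookkeeping $(-1)^{m}(-1)^{2n+1-m}=-1$ is right.
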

 \begin{proof} 
Actually, the conclusion for even-odd polynomials immediately follows from those for SOPs. Here we would like to give a direct proof as follows: 
   \begin{align*}
\langle P_{2n}(z), z^{m}\rangle&=\frac{1}{\tau_{2n}}\langle Pf(0,1,\cdots,2n-1,2n,z), z^m\rangle\\
  &=\frac{1}{\tau_{2n}} \sum_{j=0}^{2n}(-1)^jPf(0,1,\cdots,\hat j,\cdots,2n) \langle z^j, z^m\rangle\\
  &=\frac{1}{\tau_{2n}} \sum_{j=0}^{2n}(-1)^jPf(0,1,\cdots,\hat j,\cdots,2n)Pf(j,m)\\
  &=\frac{1}{\tau_{2n}}Pf(0,1,\cdots,2n,m)
    \end{align*}
    leading to
    $$
     \langle P_{2n}(z), z^{m}\rangle=0, \quad 0\leq m\leq 2n,\qquad \qquad \langle P_{2n}(z), z^{2n+1}\rangle=\frac{\tau_{2n+2}}{\tau_{2n}},
    $$
    which can be equivalently written as 
    \begin{align*}
&\langle P_{2n}(z), P_{2m-1}(z)\rangle=\langle P_{2n}(z),  P_{2m}(z)\rangle=0,\qquad 0\leq m\leq n,\\
&\langle P_{2n}(z), P_{2n+1}(z)\rangle=\frac{\tau_{2n+2}}{\tau_{2n}}.
    \end{align*}
    
    Regarding $\langle P_{2n+1}(z), z^{m}\rangle,\ 0\leq m\leq 2n+1$, we have
     \begin{align*}
\langle P_{2n+1}(z), z^{m}\rangle&=\frac{1}{\tau_{2n+1}}\langle Pf(d_0,0,1,\cdots,2n,2n+1,z), z^m\rangle\\
  &=\frac{1}{\tau_{2n+1}} \sum_{j=0}^{2n+1}(-1)^{j+1}Pf(d_0,0,1,\cdots,\hat j,\cdots,2n,2n+1) \langle z^j, z^m\rangle\\
  &=\frac{1}{\tau_{2n+1}} \sum_{j=0}^{2n+1}(-1)^{j+1}Pf(d_0,0,1,\cdots,\hat j,\cdots,2n,2n+1)Pf(j,m)\\
  &=-\frac{1}{\tau_{2n+1}} Pf(0,1,\cdots,2n,2n+1)Pf(d_0,m)=-\frac{\tau_{2n+2}}{\tau_{2n+1}}\beta_m,
    \end{align*}
 where we have used the expansion formula for $Pf(d_0,0,1,\cdots,2n,2n+1,m)$ (which equals to zero for $0\leq m\leq 2n+1$) in the last second equality. 

 \end{proof}

At the end of this subsection, we restrict ourselves into the specific  skew-symmetric inner products 
$$
\langle f(z),g(z)\rangle= \iint_{\Gamma^2}f(x)g(y)\omega(x,y)\rho(x)\rho(y)dxdy,
$$
and the single moments
$$\beta_i=\int_\Gamma x^i\rho(x)dx$$
where $\omega(x,y)=-\omega(y,x)$ and the $\Gamma$ is an appropriate integral range and $\rho(x)dx$ is taken as a positive weight ensuring the validity and finiteness of the integral. Usually, if $\Gamma$ is replaced by $\mathbb R$ or $\mathbb R_+$, then $\rho(x)$ is required to decay fast enough at $\infty$. We will provide a sufficient condition to ensure the validity of the definition of the PSOPs in \eqref{exp:PSOP},  in other words,  the nonzero property of $\tau_n$ therein.

\begin{theorem}\label{th:exis}
Assume that 
$$Pf(i,j)=\mu_{i,j}= \iint_{\Gamma^2}x^iy^j\omega(x,y)\rho(x)\rho(y)dxdy,\qquad Pf(d_0,i)= \beta_i=\int_\Gamma x^i\rho(x)dx,$$
and $\Gamma$ is an ordered measure space. Let any $W(x_1,\cdots,x_{N})$ be a skew symmetric matrix defined by
$$ 
 W(x_1,\cdots,x_{N})=\left\{
 \begin{array}{ll}
 \left(\begin{array}{c}
\omega(x_i,x_j)
\end{array}
\right)_{(2n)\times (2n)}, & N=2n,\\
\ \\
\left(
\begin{array}{ccc}
0&\vline&
\begin{array}{ccc}
1&\cdots&1
\end{array}
\\
\hline
\begin{array}{c}
-1\\
\vdots\\
-1
\end{array}
&\vline&\omega(x_i,x_j)
\end{array}
\right)_{(2n+2)\times (2n+2)}, &N=2n+1.
 \end{array} 
 \right. 
$$ 
Then PSOPs in \eqref{exp:PSOP} are well-defined if the Pfaffians $Pf(W(x_1,\cdots,x_{N}))$ are positive (or negative) a.e. for all $x_1<\cdots<x_{N}$ over the region $\Gamma$. 

\end{theorem}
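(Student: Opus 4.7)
The plan is to rewrite each $\tau_n$ as a multi-fold integral over $\Gamma^N$ whose integrand has a definite sign under the stated hypothesis, thereby forcing the integral to be nonzero. The main tool is a de Bruijn-type identity that converts Pfaffians of moment matrices into integrals against $Pf(W(x_1,\ldots,x_N))$ multiplied by a Vandermonde-type determinant, combined with an antisymmetrization onto the ordered chamber $x_1<\cdots<x_N$.

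First I would treat the even case, where $\tau_{2n}=Pf(\mu_{i,j})_{0\le i,j\le 2n-1}$ with $\mu_{i,j}=\iint_{\Gamma^2} x^i y^j\omega(x,y)\rho(x)\rho(y)\,dx\,dy$. Applying the classical de Bruijn formula should yield
$$\tau_{2n}=\frac{1}{(2n)!}\int_{\Gamma^{2n}}\det\!\left(x_j^{i-1}\right)_{i,j=1}^{2n}\,Pf\!\left(\omega(x_i,x_j)\right)_{i,j=1}^{2n}\prod_{k=1}^{2n}\rho(x_k)\,dx_k.$$
Since both the determinant and the Pfaffian of $\omega$ are alternating in $(x_1,\ldots,x_{2n})$, the integrand is symmetric; restricting to the chamber $x_1<\cdots<x_{2n}$ at the cost of a factor of $(2n)!$ turns the determinant into the strictly positive Vandermonde $\prod_{i<j}(x_j-x_i)$, while the $\omega$-Pfaffian becomes exactly $Pf(W(x_1,\ldots,x_{2n}))$ as defined in the theorem.

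Next I would handle the odd case, where $\tau_{2n+1}=Pf(d_0,0,1,\ldots,2n)$ and the formal label $d_0$ contributes entries $Pf(d_0,i)=\beta_i=\int_\Gamma x^i\rho(x)\,dx$. The key step is to derive the augmented identity
$$\tau_{2n+1}=\frac{1}{(2n+1)!}\int_{\Gamma^{2n+1}}\det\!\left(x_j^{i-1}\right)_{i,j=1}^{2n+1}\,Pf\!\left(W(x_1,\ldots,x_{2n+1})\right)\prod_{k=1}^{2n+1}\rho(x_k)\,dx_k,$$
which I would establish either by direct Pfaffian expansion along the $d_0$ row, matching each $\beta_i$ with the single integrals produced by the bordering row/column of $\pm 1$'s in $W$, or by introducing a virtual variable paired trivially so that the even-case identity applies to a $(2n+2)$-variable problem. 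Restriction to the ordered chamber then produces a positive Vandermonde multiplied by $Pf(W)$ exactly as in the even case.

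Finally, on the ordered chamber both $\prod_{i<j}(x_j-x_i)$ and $\prod_k\rho(x_k)$ are strictly positive, while $Pf(W(x_1,\ldots,x_N))$ keeps a fixed sign almost everywhere by hypothesis; hence the integrand has a definite sign on a set of positive measure and $\tau_n\ne 0$ in both parities. I expect the main obstacle to be the careful derivation of the augmented de Bruijn identity in the odd case, since the $d_0$ bookkeeping against the border row/column of $W$ requires a non-trivial combinatorial expansion; once that identity is in place, the sign analysis is routine.
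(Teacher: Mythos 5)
Your proposal is correct and follows essentially the same route as the paper: the paper invokes de Bruijn's identities \eqref{id_deBr1}--\eqref{id_deBr2} (stated in its appendix directly over the ordered chamber, including the $d_0$-augmented odd case) to write $\tau_N$ as $\idotsint_{x_1<\cdots<x_N} Pf(W)\prod_{i<j}(x_j-x_i)\prod_k\rho(x_k)\,dx_k$ and concludes by sign-definiteness. Your only deviations are cosmetic — starting from the unordered $1/N!$ form and symmetrizing, and re-deriving the odd-case augmented identity that the paper simply cites.
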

\begin{proof}
By employing the de Bruijn's equality \eqref{id_deBr1},  we have
\begin{align*}
Pf(0,1,\ldots, 2n-1)&=\idotsint\limits_{x_1<\cdots<x_{2n}\in \Gamma} Pf(W(x_1,\cdots,x_{2n}))\det\left(x_j^{i-1}\right)_{1\leq i,j\leq 2n}\prod_{j=1}^{2n} \rho(x_i)dx_1\cdots dx_{2n}\\ 
&=\idotsint\limits_{x_1<\cdots<x_{2n}\in \Gamma} Pf(W(x_1,\cdots,x_{2n}))\prod_{1\leq i<j\leq 2n}(x_j-x_i)\prod_{j=1}^{2n} \rho(x_i)dx_1\cdots dx_{2n},
\end{align*}
which conclude the validity of the even-degree PSOPs. Similarly,  the odd case is just a consequence of applying the de Bruijn's equality \eqref{id_deBr2}.
\end{proof}

\subsection{Relation with characteristic polynomials of Bures ensemble}\label{subsec:psop_bures}
 The idea of the definition of PSOPs  (Def. \ref{def:psop}) partially comes from the characteristic polynomials of Bures ensemble appeared  in \cite{forrester2016relating}. In the following, we demonstrate their relations.
 
Consider the specified PSOPs \eqref{exp:PSOP} equipped with 
$$\mu_{i,j}= \iint_{\mathbb{R}^2}x^iy^j\omega(x,y)\frac{x-y}{x+y}\rho(x)\rho(y)dxdy,\quad \beta_i=\int_\mathbb{R} x^i\rho(x)dx.$$
Applying the de Brujin's formulae \eqref{id_deBr1}-\eqref{id_deBr2} and the Schur's Pfaffian identities \eqref{id_schur}-\eqref{id_schur_odd}, it is not difficult to see that 
\begin{align*}
\tau_n=(-1)^{\lfloor\frac{n}{2}\rfloor}\idotsint\limits_{0<x_1<\cdots<x_n<\infty}\prod_{1\leq i<j\leq n}\frac{(x_j-x_i)^2}{(x_i+x_j)}\prod_{1\leq j\leq n}\rho(x_j)d x_1\cdots dx_n\neq 0,
\end{align*}
where $\lfloor a\rfloor$ denotes the greatest integer less than or equal to $a$. Thus the specified PSOPs are well defined.

In fact, the specified PSOPs may arise as the average of characteristic polynomials of Bures ensemble \cite{forrester2016relating}.  Observe that there holds the equality 
\begin{align*}
\det\left(\begin{array}{cccccc}
1&x_1&\cdots&\widehat{x_1^{n-k}}&\cdots&x_1^n\\
1&x_2&\cdots&\widehat{x_2^{n-k}}&\cdots&x_2^n\\
\vdots&\vdots&\ddots&\vdots&\ddots&\vdots\\
1&x_n&\cdots&\widehat{x_n^{n-k}}&\cdots&x_n^n
\end{array}
\right)=\sum_\sigma x_{i_1}x_{i_2}\cdots x_{i_k}\prod_{1\leq i<j\leq n}(x_j-x_i),
\end{align*}
where $\sum_{\sigma}$ means the summation from all possible combination satisfying the condition $1\leq i_1<\cdots<i_k\leq n$ and \ $\widehat{\cdot}$ implies the corresponding column is removed. Then, applying the de Brujin's formulae \eqref{id_deBr1} and  \eqref{id_deBr2}, 
one can obtain that the specified PSOPs own a matrix integral form in terms of 
\begin{align*}
{P}_{n}(z)=\frac{(-1)^{\lfloor\frac{n}{2}\rfloor}}{\tau_n(t)}\ \ \ \ \idotsint\limits_{0<x_1<\cdots<x_n<\infty}\prod_{1\leq i<j\leq n}\frac{(x_j-x_i)^2}{(x_i+x_j)}{\prod_{1\leq i<j\leq n}(z-x_j)}\rho(x_j)d x_1\cdots dx_n.
\end{align*}
This is nothing but the average of the characteristic polynomials of Bures ensemble, i.e.
\begin{align*}
{P}_{n}(z)=\langle \det(zI-J)\rangle=\langle \prod_{j=1}^n (z-x_j)\rangle.
\end{align*}

Furthermore, it is noted that, the specified PSOPs can be expressed by Meijer-G functions \cite{forrester2016relating} if the Laguerre weight function $\rho(x)=e^{-x}x^a$ is taken into account. And, it also deserves to remark that this specific PSOPs admit a four-term recurrence relationship, which will be presented Theorem \ref{psop_4term}.

  \section{PSOPs and related integrable lattices}\label{sec:is}
  In the previous section, a class of PSOPs are introduced based on the formal moments $\mu_{i,j}$ and $\beta_i$. Now, we will assign some more specific $\mu_{i,j}$ and $\beta_i$ and impose different $t,s$-deformation on the PSOPs to produce some integrable lattices together with their tau-functions. 
  
Again, it is noted that, for our convenience, we mainly consider some specific well-defined $\mu_{i,j}$ and $\beta_i$ on a positive weight $\rho(x)dx$  on $\mathbb{R}$ (or $\mathbb{R_+}$) with $\rho(x)$ decaying fast enough at $\infty$ and a skew integral kernel $\omega(x,y)$ enjoying the property $\omega(x,y)=-\omega(y,x)$. Roughly speaking,  $\mu_{i,j}$ and $\beta_i$ are taken as the following form:
  $$\mu_{i,j}= \iint_{\mathbb{R}^2}x^iy^j\omega(x,y)\rho(x)\rho(y)dxdy,\qquad  \beta_i=\int_\mathbb{R} x^i\rho(x)dx,$$
  which is beneficial for producing integrable lattices.
 Note that the sufficient condition for the validity of the corresponding PSOPs is given in Theorem \ref{th:exis}.


\subsection{Generalized Lotka-Volterra lattices} \label{subsec:glv}
This subsection is devoted to the derivations of a semi-discrete generalized Lotka-Volterra lattice and a full-discrete counterpart. Note that the skew-symmetric kernel $\omega(x,y)$ is not assigned. As we will see soon in the subsection \ref{subsec:btoda}, when the skew-symmetric kernel is assigned as some special function, some different lattices will be derived. 
\subsubsection{A semi-discrete gLV lattice in $1+2$ dimension}
Consider the deformation on the weight $\rho(x;t,k)dx=x^k\exp(tx)\rho(x)dx$  leading to the $k$-index moments  depending on $t \in \mathbb{R}$:
\begin{align*}
&\mu_{i,j}^k(t)=\iint_{\mathbb{R}^2}x^{k+i}y^{k+j}\omega(x,y)\exp\left(t(x+y)\right)\rho(x)\rho(y)dxdy,\\
&\beta_i^k(t)=\int_{\mathbb{R}}x^{k+i}\exp\left (tx\right)\rho(x)dx.
\end{align*}
And, define a sequence of monic PSOPs ($\{P_n^k(z;t)\}_{n=0}^\infty, k\in \mathbb{N}$) depending on $t$:
\begin{subequations}\label{psop_sem_lv}
\begin{align} 
&P_{2n}^k(z;t)=\frac{1}{\tau_{2n}^k}Pf(0,1,\cdots,2n-1,2n,z), \label{psop_sem_lv_even}\\
&P_{2n+1}^k(z;t)=\frac{1}{\tau_{2n+1}^k}Pf(d_0,0,1,\cdots,2n,2n+1,z),\label{psop_sem_lv_odd}
\end{align}
\end{subequations}
where
\begin{align}
\tau_{2n}^k(t)\triangleq Pf(0,1,\cdots,2n-1)^k\neq 0,\qquad \tau_{2n+1}^k(t) \triangleq Pf(d_0,0,1,\cdots,2n)^k\neq0, \label{psop_sem_lv_tau}
\end{align}
with the Pfaffian entries
\begin{align*}
&Pf(i,j)^k=\mu_{i,j}^k,\qquad\qquad Pf(d_0,i)^k=\beta_i^k,\\
&Pf(i,z)^k=z^i,\qquad\qquad\ \  Pf(d_0,z)^k=0.
\end{align*}
 
 First, we observe some interrelation among $\tau_n^k$ and claim that the evolution of $\tau_n^k$ can also be explicitly expressed in terms of Pfaffians.
 \begin{lemma}
 There hold the following relations for the $\tau_n^k$ defined in \eqref{psop_sem_lv_tau}:
  \begin{align}
  \tau_{2n}^{k+1}=Pf(1,2,\cdots,2n)^k,\qquad \tau_{2n+1}^{k+1}= Pf(d_0,1,2,\cdots,2n+1)^k, \label{psop_sem_lv_tau_rela}
  \end{align}
  subsequently leading to
  \begin{align}
z{\tau_{2n}^{k+1}}P_{2n}^{k+1}=Pf(1,\cdots,2n+1,z)^k,\quad z{\tau_{2n+1}^{k+1}}P_{2n+1}^{k+1}=Pf(d_0,1,\cdots,2n+2,z)^k.\label{psop_sem_lv_rel1}
\end{align}
Besides, the $\tau_n^k$  evolves as follows:
\begin{align}
\frac{d}{d{t}}\tau_{2n}^{k}=Pf(0,1,\cdots,2n-2,2n)^k,\qquad \frac{d}{dt} \tau_{2n+1}^{k}= Pf(d_0,0,1,\cdots,2n-1,2n+1)^k.\label{psop_sem_lv_tau_evo}
\end{align}
 \end{lemma}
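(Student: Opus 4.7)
The entire lemma will follow from two elementary facts about the moments, combined with the Pfaffian expansion formula and the vanishing of any Pfaffian with two identical integer indices.

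\emph{Step 1: shift identity in $k$.} Directly from the integral definitions, the factor $x^{k}y^{k}$ becoming $x^{k+1}y^{k+1}$ gives $\mu_{i,j}^{k+1}=\mu_{i+1,j+1}^{k}$ and $\beta_{i}^{k+1}=\beta_{i+1}^{k}$, i.e.\ $Pf(i,j)^{k+1}=Pf(i+1,j+1)^{k}$ and $Pf(d_{0},i)^{k+1}=Pf(d_{0},i+1)^{k}$. Viewing a Pfaffian as a signed sum over perfect matchings (with the symbol $d_{0}$ treated as a fixed label that is never shifted), I simply relabel every integer index $i\mapsto i+1$ to obtain
\[\tau_{2n}^{k+1}=Pf(0,1,\ldots,2n-1)^{k+1}=Pf(1,2,\ldots,2n)^{k}\]
and, in the odd case, $\tau_{2n+1}^{k+1}=Pf(d_{0},1,2,\ldots,2n+1)^{k}$. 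This is \eqref{psop_sem_lv_tau_rela}.

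\emph{Step 2: wave-vector relation \eqref{psop_sem_lv_rel1}.} Expanding along the $z$-column exactly as in the proof of the previous theorem,
\[\tau_{2n}^{k+1}P_{2n}^{k+1}(z)=Pf(0,1,\ldots,2n,z)^{k+1}=\sum_{p=0}^{2n}(-1)^{p}Pf(0,1,\ldots,\hat{p},\ldots,2n)^{k+1}\,z^{p}.\]
Applying the shift from Step 1 to each sub-Pfaffian gives $Pf(0,1,\ldots,\hat{p},\ldots,2n)^{k+1}=Pf(1,2,\ldots,\widehat{p+1},\ldots,2n+1)^{k}$. On the other hand, expanding $Pf(1,2,\ldots,2n+1,z)^{k}$ along $z$ yields the same Pfaffian coefficients multiplied by $z^{p+1}=z\cdot z^{p}$. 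Pulling the common $z$ out yields $Pf(1,2,\ldots,2n+1,z)^{k}=z\,\tau_{2n}^{k+1}P_{2n}^{k+1}(z)$. The odd case is identical, with the inert symbol $d_{0}$ sitting in the first slot throughout.

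\emph{Step 3: time derivative.} Differentiation under the integral sign gives $\partial_{t}\mu_{i,j}^{k}=\mu_{i+1,j}^{k}+\mu_{i,j+1}^{k}$ and $\partial_{t}\beta_{i}^{k}=\beta_{i+1}^{k}$. Writing the Pfaffian as a signed sum over matchings and applying the Leibniz rule, one obtains the formal identity
\[\frac{d}{dt}Pf(i_{1},\ldots,i_{2n})=\sum_{l=1}^{2n}Pf(i_{1},\ldots,i_{l-1},i_{l}+1,i_{l+1},\ldots,i_{2n}),\]
with $d_{0}$ inert. Applied to $\tau_{2n}^{k}=Pf(0,1,\ldots,2n-1)^{k}$, every shift $l-1\mapsto l$ with $l\leq 2n-1$ produces a repeated integer index and hence a vanishing Pfaffian (the underlying skew matrix has two equal rows); only $2n-1\mapsto 2n$ survives, yielding $Pf(0,1,\ldots,2n-2,2n)^{k}$. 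The odd case is the same, with the only non-vanishing shift being $2n\mapsto 2n+1$, which gives $Pf(d_{0},0,1,\ldots,2n-1,2n+1)^{k}$.

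\emph{Main obstacle.} The one subtlety is Step 3: making the passage from the Leibniz rule on the matching sum to the stated index-shift identity, and then justifying that only a single term in that sum survives because any Pfaffian with a repeated integer index vanishes. Once this cancellation mechanism is in place, the rest is bookkeeping on top of Step 1.
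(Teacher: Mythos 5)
Your proof is correct and follows essentially the same route as the paper: the shift relations $\mu_{i,j}^{k+1}=\mu_{i+1,j+1}^{k}$, $\beta_i^{k+1}=\beta_{i+1}^{k}$ plus the Pfaffian expansion along the $z$-column give \eqref{psop_sem_lv_tau_rela} and \eqref{psop_sem_lv_rel1}, and the evolution \eqref{psop_sem_lv_tau_evo} follows from the Wronski-type derivative formula (the paper cites \eqref{der1} and \eqref{der1_odd} from the appendix, whose derivation via the Leibniz rule on the matching sum and the vanishing of Pfaffians with repeated indices is exactly what you spell out in Step 3).
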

 \begin{proof} 
The conclusion \eqref{psop_sem_lv_tau_rela} and \eqref{psop_sem_lv_rel1} are obtained by using the definition of the Pfaffian and 
noticing that 
 $$\mu_{i,j}^{k+1}=\mu_{i+1,j+1}^k,\qquad \beta_i^{k+1}=\beta_{i+1}^k.$$
 
 It is not hard to see that 
 $$
 \frac{d}{d{t}} \mu_{i,j}^k=\mu_{i+1,j}^k+\mu_{i,j+1}^k,\qquad  \frac{d}{d{t}} \beta_i^k=\beta_{i+1}^k,
 $$
from which, we can conclude the evolution \eqref {psop_sem_lv_tau_evo} by employing the derivative formula \eqref{der1} and \eqref{der1_odd}.
 \end{proof}
 
 This lemma leads to an obvious corollary on the representations of the PSOPs in \eqref{psop_sem_lv}, that is, the coefficient of the second highest degree and the constant term can be written in terms of $\tau_n^k$.
\begin{coro}
The PSOPs in \eqref{psop_sem_lv} admit the following expressions:
\begin{align}
P_n^k(z;t)=z^n-(\frac{d}{d t}\log\tau_n^k )z^{n-1}+\cdots+(-1)^n\frac{\tau_n^{k+1}}{\tau_n^k}. \label{psop_sem_lv_exp_tau}
\end{align}
\end{coro}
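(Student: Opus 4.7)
The approach is purely mechanical: expand the defining Pfaffians of $P_n^k(z;t)$ along the entry $z$, using the special rules $Pf(i,z)=z^i$ and $Pf(d_0,z)=0$, and then read off the coefficients of $z^n$, $z^{n-1}$, and $z^0$ directly in terms of the Pfaffians that appear in the preceding lemma.

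For the even case, expanding $Pf(0,1,\dots,2n,z)$ along $z$ gives
\[
Pf(0,1,\dots,2n,z)=\sum_{j=0}^{2n}(-1)^{j}Pf(0,1,\dots,\hat j,\dots,2n)\,z^{j},
\]
the same expansion used in the proof of the preceding theorem. The $j=2n$ term produces the leading coefficient $Pf(0,\dots,2n-1)=\tau_{2n}^{k}$, so after division by $\tau_{2n}^{k}$ the polynomial is monic. The $j=2n-1$ term contributes $-Pf(0,\dots,2n-2,2n)$, which by \eqref{psop_sem_lv_tau_evo} equals $-\frac{d}{dt}\tau_{2n}^{k}$; dividing by $\tau_{2n}^{k}$ yields the coefficient $-\frac{d}{dt}\log\tau_{2n}^{k}$ of $z^{2n-1}$. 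The $j=0$ term gives the constant term $Pf(1,2,\dots,2n)$, which by \eqref{psop_sem_lv_tau_rela} equals $\tau_{2n}^{k+1}$; since $(-1)^{2n}=1$ this matches the claim.

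For the odd case, expanding $Pf(d_0,0,1,\dots,2n+1,z)$ along $z$, the term attached to $d_0$ is killed by $Pf(d_0,z)=0$, and after re-indexing one obtains
\[
Pf(d_0,0,1,\dots,2n+1,z)=\sum_{i=0}^{2n+1}(-1)^{i+1}Pf(d_0,0,1,\dots,\hat i,\dots,2n+1)\,z^{i}.
\]
The extra sign reflects the $d_0$-slot that is skipped in the expansion. Reading off the leading, subleading, and constant coefficients and applying the odd-case identities in \eqref{psop_sem_lv_tau_rela} and \eqref{psop_sem_lv_tau_evo} gives monic normalization, coefficient $-\frac{d}{dt}\log\tau_{2n+1}^{k}$ for $z^{2n}$, and constant term $-\tau_{2n+1}^{k+1}/\tau_{2n+1}^{k}=(-1)^{2n+1}\tau_{2n+1}^{k+1}/\tau_{2n+1}^{k}$, as required.

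The only obstacle is bookkeeping of the sign conventions in the Pfaffian expansion, in particular the extra $(-1)^{i+1}$ in the odd case produced by the vanishing $d_0$-entry; once this is handled, the three asserted coefficients fall out by direct substitution from the lemma, with no residual calculation.
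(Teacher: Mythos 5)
Your proposal is correct and follows exactly the route the paper intends: the paper states this corollary as an ``obvious consequence'' of the preceding lemma, and the intended argument is precisely your Laplace expansion of the Pfaffian along the entry $z$, reading off the $j=n$, $j=n-1$ and $j=0$ terms via \eqref{psop_sem_lv_tau_evo} and \eqref{psop_sem_lv_tau_rela}. Your sign bookkeeping (including the extra $(-1)^{j+1}$ in the odd case and the vanishing of the $Pf(d_0,z)$ term) matches the expansion already used in the paper's proof of the partial-skew-orthogonality theorem.
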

 
 Usually, the transformation between adjacent families of the polynomials $\{P_{n}^{k+1}\}_{n=0}^\infty$ and $\{P_{n}^k\}_{n=0}^\infty$ also plays and important role in the construction of integrable systems from OPs. Actually, as for the PSOPs \eqref{psop_sem_lv}, we have 
\begin{coro}
There holds the following relation between adjacent families of the polynomials $\{P_{n}^{k+1}\}_{n=0}^\infty$ and $\{P_{n}^k\}_{n=0}^\infty$:
\begin{align}
z\left(P_{n}^{k+1}+\frac{\tau_{n-1}^{k+1}\tau_{n+2}^k}{\tau_{n}^{k+1}\tau_{n+1}^k}P_{n-1}^{k+1}\right)=P_{n+1}^k+\frac{\tau_{n+1}^{k+1}\tau_{n}^k}{\tau_{n}^{k+1}\tau_{n+1}^k}P_{n}^k. \label{psop_sem_lv_rel_adj}
\end{align}
\end{coro}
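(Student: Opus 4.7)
The plan is to clear denominators in the claimed identity and reduce it to a single quadratic Pfaffian identity in level-$k$ data. Multiplying both sides by $\tau_n^{k+1}\tau_{n+1}^k$ rewrites the claim as
\begin{align*}
\tau_{n+1}^k\cdot(z\tau_n^{k+1}P_n^{k+1}) + \tau_{n+2}^k\cdot(z\tau_{n-1}^{k+1}P_{n-1}^{k+1}) = \tau_n^{k+1}\tau_{n+1}^k P_{n+1}^k + \tau_{n+1}^{k+1}\tau_n^k P_n^k,
\end{align*}
and \eqref{psop_sem_lv_tau_rela}--\eqref{psop_sem_lv_rel1} together with \eqref{psop_sem_lv} express each of the eight factors as an explicit Pfaffian in the level-$k$ moments, drawn from the common index set $\{d_0,0,1,\ldots,n+1,z\}$.

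I would then split into parities. In the even case $n=2m$, the reduction produces exactly the classical four-term Plücker (Jacobi) identity for Pfaffians, with base block $S=\{1,\ldots,2m\}$ of even size and border $B=\{b_1,b_2,b_3,b_4\}=\{d_0,0,2m+1,z\}$:
\begin{align*}
Pf(S)\,Pf(S\cup B) &= Pf(S\cup\{b_1,b_2\})\,Pf(S\cup\{b_3,b_4\}) - Pf(S\cup\{b_1,b_3\})\,Pf(S\cup\{b_2,b_4\}) \\
&\quad + Pf(S\cup\{b_1,b_4\})\,Pf(S\cup\{b_2,b_3\}).
\end{align*}
In the odd case $n=2m+1$, the reduction produces the companion identity, valid whenever $|S|$ is odd, with $S=\{1,\ldots,2m+1\}$ and $B=\{d_0,0,2m+2,z\}$:
\begin{align*}
\sum_{i=1}^{4}(-1)^{i-1}\,Pf(S\cup\{b_i\})\,Pf(S\cup(B\setminus\{b_i\}))=0.
\end{align*}
This odd-$|S|$ Pfaffian identity is itself a direct consequence of the Pfaffian expansion formula and can be verified once and for all on the five-index universal template before inserting $S$ as a common block.

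The main obstacle is the sign bookkeeping required to rearrange each Pfaffian from the border-first form natural to the Plücker identity into the canonical order $Pf(d_0,0,1,\ldots)$ used by \eqref{psop_sem_lv}. Each rearrangement sweeps a border index past the block $S$, contributing $(-1)^{|S|}$; in the even case every such sign is $+1$, while in the odd case the two sweeps occurring inside each product combine to $(-1)\cdot(-1)=+1$. Once these signs are tracked, the reduced Plücker templates align term-by-term with the multiplied-out identity in both parities, completing the proof.
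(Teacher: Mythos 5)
Your proposal is correct and follows essentially the same route as the paper: clear denominators, substitute the level-$k$ Pfaffian expressions from \eqref{psop_sem_lv_tau_rela}--\eqref{psop_sem_lv_rel1} and \eqref{psop_sem_lv}, and recognize the even and odd cases as the Pfaffian bilinear identities \eqref{pf1} and \eqref{pf2}, respectively. Your symmetric four-term odd-block identity is just \eqref{pf2} rewritten with the index $2N$ treated as a fourth border element, and your sign bookkeeping (exactly two sweeps of a border index past the common block in each product, hence an overall factor $(-1)^{2|S|}=+1$) checks out.
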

\begin{proof}
 Replace the terms including $(k+1)$-index in  \eqref{psop_sem_lv_rel_adj} by using \eqref{psop_sem_lv_tau_rela} and \eqref{psop_sem_lv_rel1},  and then we can see that it suffices to prove  
 \begin{align*}
&Pf(d_0,0,1,\cdots,2n,2n+1,z)^kPf(1,\cdots,2n)^k=Pf(d_0,0,1,\cdots,2n)^kPf(1,\cdots,2n,2n+1,z)^k\\
&\ \ -Pf(d_0,1,\cdots,2n,2n+1)^kPf(0,1,\cdots,2n,z)^k+Pf(d_0,1,\cdots,2n,z)^kPf(0,1,\cdots,2n,2n+1)^k,\\
&Pf(d_0,0,1,\cdots,2n-1,2n)^kPf(1,\cdots,2n-1,z)^k=Pf(d_0,0,1,\cdots,2n-1,z)^kPf(1,\cdots,2n-1,2n)^k\\
&\ \ -Pf(d_0,1,\cdots,2n-1,2n,z)^kPf(0,1,\cdots,2n-1)^k+Pf(0,1,\cdots,2n-1,2n,z)^kPf(d_0,1,\cdots,2n-1)^k,
\end{align*}
which are nothing but Pfaffian identities in \eqref{pf1} and \eqref{pf2}. Thus the proof is completed.
\end{proof}

The compatibility of the relations \eqref{psop_sem_lv_exp_tau} and \eqref{psop_sem_lv_rel_adj} can yield an ODE system. To this end, we notice that the coefficients of the highest degree in $z$ on the two sides of \eqref{psop_sem_lv_rel_adj} are both 1, while the coefficients of the second highest degree are not trivial.
Actually, substituting \eqref{psop_sem_lv_exp_tau} into \eqref{psop_sem_lv_rel_adj} and comparing the coefficients of $z^{n}$ on two sides, we obtain
\begin{align}\label{eq:bi_glv}
\tau_{n+1}^k\frac{d}{dt} \tau_{n}^{k+1}-\tau_{n}^{k+1}\frac{d}{dt} \tau_{n+1}^k =\tau_{n+2}^{k}\tau_{n-1}^{k+1}-\tau_{n}^k\tau_{n+1}^{k+1}.
\end{align}

If we introduce the variables
\begin{align*}
r_n^k=\frac{\tau_n^{k+1}}{\tau_n^k},\qquad v_n^k=\frac{\tau_{n+1}^k}{\tau_{n}^{k+1}}, 
\end{align*}
then $r_n^k$ and $v_n^k$ satisfy
\begin{align}
\frac{d}{dt}v_n^k=&\frac{v_n^kr_{n+1}^k}{v_{n-1}^kr_{n}^k}(v_{n-1}^k-v_{n+1}^k),\qquad v_n^kr_{n}^k=r_{n+1}^{k-1}v_n^{k-1}, \quad n,k=0,1,2,\cdots, \label{eq:21glv}
\end{align}
which we shall call ``generalized Lotka-Volterra lattice in 1+2 dimension'' (1+2 gLV lattice). It is noted that the bilinear form \eqref{eq:bi_glv} reduces to the Lotka-Volterra lattice when we remove the index $k$, which is the origin that we call \eqref{eq:21glv} 1+2 gLV lattice. And ``1+2'' means one continuous and two discrete variables.

The above derivation implies the following theorem, which concludes the tau-function representation of the equation  \eqref{eq:21glv}.
\begin{theorem}
The 1+2 gLV lattice \eqref{eq:21glv} admits the following tau-function representation
\begin{align*}
r_n^k=\frac{\tau_n^{k+1}}{\tau_n^k},\qquad v_n^k=\frac{\tau_{n+1}^k}{\tau_{n}^{k+1}}, 
\end{align*}
which enjoys the explicit form in terms of Pfaffians:
$$
\tau_{2n}^k(t)=Pf(0,1,\cdots,2n-1)^k,\qquad \tau_{2n+1}^k(t) = Pf(d_0,0,1,\cdots,2n)^k,
$$
with the Pfaffian entries satisfying
\begin{align*}
&Pf(i,j)^k=\mu_{i,j}^k,\qquad\qquad\qquad Pf(d_0,i)^k=\beta_i^k,\\
&\mu_{i,j}^{k+1}=\mu_{i+1,j+1}^k,\qquad\qquad\qquad\beta_i^{k+1}=\beta_{i+1}^k,\\
& \frac{d}{d{t}} \mu_{i,j}^k=\mu_{i+1,j}^k+\mu_{i,j+1}^k,\qquad\ \ \frac{d}{d{t}} \beta_i^k=\beta_{i+1}^k.
\end{align*}
\end{theorem}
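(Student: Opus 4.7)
The proof essentially packages the derivations already assembled earlier in the section, so my plan is to verify the theorem in three stages.

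First, I would check the moment shift and derivative rules stated at the end of the theorem. The identity $\mu_{i,j}^{k+1}=\mu_{i+1,j+1}^k$ and $\beta_i^{k+1}=\beta_{i+1}^k$ come immediately from the definition of the deformed moments, by pulling a factor $xy$ (respectively $x$) out of the integral when shifting $k$ to $k+1$. Likewise $\frac{d}{dt}\mu_{i,j}^k=\mu_{i+1,j}^k+\mu_{i,j+1}^k$ and $\frac{d}{dt}\beta_i^k=\beta_{i+1}^k$ follow by differentiating the deformation factor $\exp(t(x+y))$ or $\exp(tx)$ under the integral. Lifting these to the Pfaffian tau-functions is then routine via the standard Pfaffian derivative formulas \eqref{der1} and \eqref{der1_odd}, yielding the shift relations \eqref{psop_sem_lv_tau_rela} and the evolution \eqref{psop_sem_lv_tau_evo} already stated as a lemma.

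Second, I would establish the bilinear identity \eqref{eq:bi_glv}, which is the analytic heart of the theorem. Starting from the adjacency relation \eqref{psop_sem_lv_rel_adj}, substitute the tau-function expansion \eqref{psop_sem_lv_exp_tau} for each PSOP appearing on both sides and compare the coefficient of $z^n$. On the left side, after the factor $z$ is absorbed, this coefficient contributes $-\frac{d}{dt}\log\tau_n^{k+1}$ together with the multiplicative shift proportional to $\tau_{n-1}^{k+1}\tau_{n+2}^k/(\tau_n^{k+1}\tau_{n+1}^k)$; on the right side it contributes $-\frac{d}{dt}\log\tau_{n+1}^k$ plus $\tau_{n+1}^{k+1}\tau_n^k/(\tau_n^{k+1}\tau_{n+1}^k)$. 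Clearing denominators yields \eqref{eq:bi_glv}. The bookkeeping of coefficients is the one step where attention is needed, but no new idea is required beyond what is already written in the paragraph preceding the theorem.

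Third, I would convert the bilinear form to the nonlinear system. Defining $r_n^k=\tau_n^{k+1}/\tau_n^k$ and $v_n^k=\tau_{n+1}^k/\tau_n^{k+1}$, dividing \eqref{eq:bi_glv} through by $\tau_n^{k+1}\tau_{n+1}^k$ and regrouping reproduces the evolution equation $\frac{d}{dt}v_n^k=\frac{v_n^k r_{n+1}^k}{v_{n-1}^k r_n^k}(v_{n-1}^k-v_{n+1}^k)$. The auxiliary constraint $v_n^k r_n^k=r_{n+1}^{k-1}v_n^{k-1}$ is a tautology: unpacking the definitions one finds that both sides equal $\tau_{n+1}^k/\tau_n^k$, so it merely records the compatibility of the two tau-quotient parametrizations. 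I expect the main obstacle to be in stage two, namely the careful identification of the $z^n$-coefficients in \eqref{psop_sem_lv_rel_adj}; every other step is either definitional or a direct consequence of the Pfaffian identities \eqref{pf1}--\eqref{pf2} already invoked.
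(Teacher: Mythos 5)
Your proposal is correct and follows essentially the same route as the paper, which proves this theorem by packaging the preceding lemma on the moment/tau shift and evolution rules, the corollaries \eqref{psop_sem_lv_exp_tau} and \eqref{psop_sem_lv_rel_adj} (the latter via the Pfaffian identities \eqref{pf1}--\eqref{pf2}), the $z^n$-coefficient comparison yielding \eqref{eq:bi_glv}, and the change of variables to $r_n^k$, $v_n^k$. Your observation that the constraint $v_n^k r_n^k = r_{n+1}^{k-1} v_n^{k-1}$ is a definitional identity in the tau-quotients is accurate and consistent with the paper's (implicit) treatment.
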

\begin{remark}
It is not hard to see that the de Bruijn's formulae \eqref{id_deBr1}-\eqref{id_deBr2} imply that the tau-function owns an equivalent integral formula, that is,
\begin{align*}
\tau_{n}^k(t)
&=\idotsint\limits_{-\infty<x_1<\cdots<x_{n}<\infty} Pf(W(x_1,\cdots,x_{n}))\prod_{1\leq i<j\leq n}(x_j-x_i)\prod_{j=1}^{n} x_j^ke^{tx_j}\rho(x_i)dx_1\cdots dx_{n},
\end{align*}
where the skew-symmetric matrix $W$ is defined as that in Theorem \ref{th:exis}.
\end{remark}

 \subsubsection{A full-discrete  gLV lattice in $3$ dimension}\label{subsec:disLV}
In this subsection, we will present a discrete counterpart of the content in the previous subsection. To this end, we shall consider the deformation on the weight $\rho(x;k,l)dx=x^k(1+x)^l\rho(x)dx$  leading to the $k,l$-index moments:
\begin{align*}
&\mu_{i,j}^{k,l}=\iint_{\mathbb{R}_+^2}x^{k+i}y^{k+j}\omega(x,y)(1+x)^l(1+y)^l\rho(x)\rho(y)dxdy,\\
&\beta_i^{k,l}=\int_{\mathbb{R_+}}x^{k+i}(1+x)^l\rho(x)dx.
\end{align*}
And, define a sequence of monic PSOPs ($\{P_n^{k,l}(z)\}_{n=0}^\infty, k,l\in \mathbb{N}$):
\begin{subequations}\label{psop_dis_lv}
\begin{align} 
&P_{2n}^{k,l}(z)=\frac{1}{\tau_{2n}^{k,l}}Pf(0,1,\cdots,2n-1,2n,z)^{k,l}, \label{psop_dis_lv_even}\\
&P_{2n+1}^{k,l}(z)=\frac{1}{\tau_{2n+1}^{k,l}}Pf(d_0,0,1,\cdots,2n,2n+1,z)^{k,l},\label{psop_dis_lv_odd}
\end{align}
\end{subequations}
where
\begin{align}
\tau_{2n}^{k,l}\triangleq Pf(0,1,\cdots,2n-1)^{k,l}\neq 0,\qquad \tau_{2n+1}^{k,l} \triangleq Pf(d_0,0,1,\cdots,2n)^{k,l}\neq0, \label{psop_dis_lv_tau}
\end{align}
with the Pfaffian entries
\begin{align*}
&Pf(i,j)^{k,l}=\mu_{i,j}^{k,l},\qquad Pf(d_0,i)^{k,l}=\beta_i^{k,l},\\
&Pf(i,z)^{k,l}=z^i,\qquad\ \  Pf(d_0,z)^{k,l}=0.
\end{align*}

Observe that there exist some interrelation among $\tau_n^{k,l}$.
 \begin{lemma}\label{lem:dis_lv1}
 There hold the following relations for the $\tau_n^{k,l}$ defined in \eqref{psop_dis_lv_tau}:
  \begin{align}
 & \tau_{2n}^{k+1,l}=Pf(1,2,\cdots,2n)^{k,l},\qquad\qquad\qquad\ \tau_{2n+1}^{k+1,l}= Pf(d_0,1,2,\cdots,2n+1)^{k,l}, \label{psop_dis_lv_tau_rela1}\\
  &\tau_{2n}^{k,l+1}=Pf(c_0,0,1,\cdots,2n)^{k,l},\qquad\qquad\ \ \ \tau_{2n+1}^{k,l+1}= Pf(d_0,c_0,0,1,\cdots,2n+1)^{k,l},\label{psop_dis_lv_tau_rela2}\\
  &\tau_{2n}^{k+1,l+1}=-Pf(c_0,1,\cdots,2n+1)^{k,l},\qquad\ \  \tau_{2n+1}^{k+1,l+1}=-Pf(d_0,c_0,1,\cdots,2n+2)^{k,l},\label{psop_dis_lv_tau_rela3}
  \end{align}
and also, 
  \begin{align}
&z{\tau_{2n}^{k+1,l}}P_{2n}^{k+1,l}=Pf(1,\cdots,2n+1,z)^{k,l},\label{psop_dis_lv_rel11}\\
&z{\tau_{2n+1}^{k+1,l}}P_{2n+1}^{k+1,l}=Pf(d_0,1,\cdots,2n+2,z)^{k,l},\label{psop_dis_lv_rel1}\\
&(z+1)\tau_{2n}^{k,l+1}P_{2n}^{k,l+1}=Pf(c_0,0,\cdots,2n+1,z)^{k,l},\label{psop_dis_lv_tau_evo_even}\\
&(z+1)\tau_{2n+1}^{k,l+1}P_{2n+1}^{k,l+1}=Pf(d_0,c_0,0,\cdots,2n+2,z)^{k,l}.\label{psop_dis_lv_tau_evo_odd}
\end{align}
Here 
\begin{align*}
Pf(c_0,i)^{k,l}=(-1)^i, \quad Pf(c_0,d_0)^{k,l}=0, \quad Pf(c_0,z)^{k,l}=0.
\end{align*}
 \end{lemma}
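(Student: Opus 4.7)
The seven identities split naturally into three groups. The $k$-shift identities \eqref{psop_dis_lv_tau_rela1} and their $z\tau P$ variants \eqref{psop_dis_lv_rel11}--\eqref{psop_dis_lv_rel1} are routine relabellings of Pfaffian indices, parallel to the preceding lemma. The $l$-shift identities \eqref{psop_dis_lv_tau_rela2} and their $z\tau P$ variants \eqref{psop_dis_lv_tau_evo_even}--\eqref{psop_dis_lv_tau_evo_odd} contain the main work: a Cauchy--Binet-type expansion of a Pfaffian matched with an expansion along the $c_0$-row of the right-hand-side Pfaffian. The combined-shift identities \eqref{psop_dis_lv_tau_rela3} then follow by iterating the two shifts while tracking the sign that arises from the altered starting index.

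\textbf{The $k$-shift (routine).} Since $x^{k+1}y^{k+1}=x\cdot x^k\cdot y\cdot y^k$, one has $\mu_{i,j}^{k+1,l}=\mu_{i+1,j+1}^{k,l}$ and $\beta_i^{k+1,l}=\beta_{i+1}^{k,l}$, so \eqref{psop_dis_lv_tau_rela1} is a pure relabelling of the Pfaffian indices, exactly as in the analogous lemma for the semi-discrete case. For the $z\tau P$ identities \eqref{psop_dis_lv_rel11}--\eqref{psop_dis_lv_rel1}, I additionally use that conjugating a skew matrix by $\mathrm{diag}(1,\ldots,1,z)$ in the $z$-slot multiplies the Pfaffian by $z$ and sends $Pf(i,z)=z^i$ to $z^{i+1}$, which converts the relabelled $k$-shifted Pfaffian into the one appearing on the right-hand side.

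\textbf{The $l$-shift (main step).} Expanding $(1+x)(1+y)=1+x+y+xy$ yields the four-term identity $\mu_{i,j}^{k,l+1}=\mu_{i,j}^{k,l}+\mu_{i+1,j}^{k,l}+\mu_{i,j+1}^{k,l}+\mu_{i+1,j+1}^{k,l}$ together with the two-term identity $\beta_i^{k,l+1}=\beta_i^{k,l}+\beta_{i+1}^{k,l}$. I encode these as a matrix congruence $M^{k,l+1}=D\,M^{k,l}\,D^T$ with $D$ a bidiagonal rectangular matrix whose rows are $e_i+e_{i+1}$ for the standard labels, augmented by $e_{d_0}$ in the odd case and by $e_z$ in the $z$-inclusive case. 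A Cauchy--Binet-type identity for Pfaffians then gives $Pf(D\,M^{k,l}\,D^T)=\sum_{I}\det(D_{*,I})\,Pf(M^{k,l}_{I,I})$, with $I$ running over column-label subsets of the right size. Most terms vanish: the $c_0$-column of $D$ is zero, and the $d_0$- and $z$-rows of $D$ are single-entry, so any contributing $I$ must contain $d_0$ and $z$ (when these labels are present) but cannot contain $c_0$. Each surviving $\det(D_{*,I})$ reduces to a bidiagonal minor equal to $1$, giving $\tau_{2n}^{k,l+1}=\sum_{p=0}^{2n}Pf(0,1,\ldots,\hat p,\ldots,2n)^{k,l}$ and its analogues. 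The right-hand side $Pf(c_0,0,1,\ldots,2n)^{k,l}$ is then seen to equal the same sum by expanding along the $c_0$-row: the entries $Pf(c_0,j)=(-1)^j$ cancel the $(-1)^j$-signs produced by the Pfaffian expansion, while the vanishing $Pf(c_0,d_0)=Pf(c_0,z)=0$ kills precisely the two terms that would not match. The $(z+1)$-factor in \eqref{psop_dis_lv_tau_evo_even}--\eqref{psop_dis_lv_tau_evo_odd} arises because the $z$-entries $z^i$ are unchanged by $l\mapsto l+1$, so the Cauchy--Binet expansion produces $(i,z)$-entries $z^i+z^{i+1}=(z+1)z^i$, which is absorbed by the diagonal rescaling factor $(z+1)$.

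\textbf{The combined shift.} For \eqref{psop_dis_lv_tau_rela3} I first apply the $k$-shift (raising indices by $1$) and then the $l$-shift argument above; the Cauchy--Binet step now yields $\sum_{p=1}^{2n+1}Pf(1,2,\ldots,\hat p,\ldots,2n+1)^{k,l}$, while the $c_0$-expansion of $Pf(c_0,1,2,\ldots,2n+1)^{k,l}$ now pairs the expansion signs $(-1)^j$ with $Pf(c_0,j)=(-1)^j$ starting at $j=1$, giving an extra factor $-1$ per term and hence the overall minus sign in the statement. The main technical obstacle, as noted, is the Pfaffian Cauchy--Binet step and the verification that every relevant bidiagonal minor of $D$ equals $1$; everything else reduces to Pfaffian bookkeeping.
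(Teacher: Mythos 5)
Your proposal is correct, and the $k$-shift part coincides with the paper's argument (pure relabelling from $\mu_{i,j}^{k+1,l}=\mu_{i+1,j+1}^{k,l}$, $\beta_i^{k+1,l}=\beta_{i+1}^{k,l}$, plus scaling the $z$-row and column by $z$). Where you genuinely diverge is the $l$-shift. The paper establishes \eqref{psop_dis_lv_tau_rela2} ``by induction'' — in effect by iterating the discrete Wronski-type addition formulae \eqref{add_w1}--\eqref{add_w2} of the appendix with $\lambda=1$ — and proves \eqref{psop_dis_lv_tau_evo_even}--\eqref{psop_dis_lv_tau_evo_odd} by a direct term-by-term expansion of $(z+1)Pf(0,\dots,2n,z)^{k,l+1}$ whose final collapse to $Pf(c_0,0,\dots,2n+1,z)^{k,l}$ is again delegated to induction. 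You instead invoke a Cauchy--Binet/minor-summation formula for Pfaffians, $Pf(DMD^{\top})=\sum_I\det(D_{*,I})\,Pf(M_{I,I})$ with $D$ bidiagonal, which produces the intermediate identity $\tau_{2n}^{k,l+1}=\sum_{p}Pf(0,\dots,\hat p,\dots,2n)^{k,l}$ (and its $d_0$-, $z$-augmented analogues) in one shot, and then match this against the Laplace expansion of the right-hand Pfaffian along the $c_0$-row; the sign bookkeeping you do for \eqref{psop_dis_lv_tau_rela3} (labels starting at $1$ giving $(-1)^{m+1}(-1)^m=-1$ per term) is exactly right. This buys a non-inductive, closed-form derivation and makes the role of the entries $Pf(c_0,i)=(-1)^i$ completely transparent, at the cost of importing the Pfaffian minor summation formula, which is not among the tools listed in the paper's appendix and would itself need a proof or a citation (e.g.\ Ishikawa--Okada et al.). One cosmetic caveat: your phrase ``the $c_0$-column of $D$ is zero'' is misleading, since $c_0$ is not a label of the source matrix $M^{k,l}$ — it only appears in the target Pfaffian — but this does not affect the argument, because the subsets $I$ you sum over are correctly identified via the single-entry $d_0$- and $z$-rows.
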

 \begin{proof} 
The conclusion \eqref{psop_dis_lv_tau_rela1} , \eqref{psop_dis_lv_rel11} and \eqref{psop_dis_lv_rel1} may be easily obtained by using the definition of the Pfaffian and 
noticing that 
 $$\mu_{i,j}^{k+1,l}=\mu_{i+1,j+1}^{k,l},\qquad \beta_i^{k+1,l}=\beta_{i+1}^{k,l}.$$
 
Furthermore, it is not hard to see that 
 $$
\mu_{i,j}^{k,l+1}=\mu_{i,j}^{k,l}+\mu_{i+1,j}^{k,l}+\mu_{i,j+1}^{k,l}+\mu_{i+1,j+1}^{k,l},\qquad \beta_i^{k,l+1}=\beta_{i+1}^{k,l}+\beta_{i}^{k,l},
 $$
from which we can derive \eqref{psop_dis_lv_tau_rela2} by induction. The relation \eqref{psop_dis_lv_tau_rela3} can be established based on \eqref{psop_dis_lv_tau_rela1} and  \eqref{psop_dis_lv_tau_rela2} and their proofs.

Now we proceed the proofs to \eqref{psop_dis_lv_tau_evo_even} and \eqref{psop_dis_lv_tau_evo_odd}. First, we have 
\begin{align*}
&(z+1)Pf(0,\cdots,2n,z)^{k,l+1}=(z+1)\sum_{i=0}^{2n}(-z)^iPf(0,\cdots,\hat{i},\cdots,2n)^{k,l+1}\\
&=(z+1)\sum_{i=0}^{2n}(-z)^i\sum_{j\neq i}\sum_{\epsilon_j=0,1}Pf(0+\epsilon_0,\cdots,\widehat{i+\epsilon_i},\cdots,2n+\epsilon_{2n})^{k,l}\\
&=\sum_{j=0}^{2n}\sum_{\epsilon_j=0,1}\sum_{i=0}^{2n}(-1)^iz^{i+\epsilon_i}Pf(0+\epsilon_0,\cdots,\widehat{i+\epsilon_i},\cdots,2n+\epsilon_{2n})^{k,l}\\
&=\sum_{j=0}^{2n}\sum_{\epsilon_j=0,1}Pf(0+\epsilon_0,\cdots,2n+\epsilon_{2n},z)^{k,l}
\\&=Pf(c_0,0,\cdots,2n+1,z)^{k,l},
\end{align*}
where the last step is a consequence of induction. Thus the validity of \eqref{psop_dis_lv_tau_evo_even} is  confirmed. By using the similar technique, one can also verify \eqref{psop_dis_lv_tau_evo_odd}. 
 \end{proof}

The relation between adjacent families of the polynomials $\{P_{n}^{k+1,l}\}_{n=0}^\infty$ and $\{P_{n}^{k,l}\}_{n=0}^\infty$ are given in the following corollary.
\begin{coro}
There holds the following relation between adjacent families of the polynomials $\{P_{n}^{k+1,l}\}_{n=0}^\infty$ and $\{P_{n}^{k,l}\}_{n=0}^\infty$:
\begin{align}
(z+1)\tau_{n+1}^{k+1,l}\tau_{n}^{k,l+1}P_{n}^{k,l+1}&+\tau_{n}^{k+1,l+1}\tau_{n+1}^{k,l}P_{n+1}^{k,l}=\nonumber\\
&z(z+1)\tau_{n+2}^{k,l}\tau_{n-1}^{k+1,l+1}P_{n-1}^{k+1,l+1}+z\tau_{n+1}^{k,l+1}\tau_{n}^{k+1,l}P_{n}^{k+1,l}. \label{psop_dis_lv_rel_adj}
\end{align}
\end{coro}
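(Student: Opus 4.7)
The plan is to reduce the identity \eqref{psop_dis_lv_rel_adj} to a four-term Pfaffian identity on a common base set with four distinguished letters, mirroring the strategy used for the semi-discrete analogue (the proof of \eqref{psop_sem_lv_rel_adj}) but now complicated by the second shift variable $l$. The main idea is to rewrite each of the four quantities $\tau\cdot\tau\cdot P$ appearing in \eqref{psop_dis_lv_rel_adj} purely in terms of Pfaffians built from the $(k,l)$-entries $\mu_{i,j}^{k,l}$, $\beta_i^{k,l}$, $(-1)^i$ and $z^i$.

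First I would treat three of the four groupings using Lemma \ref{lem:dis_lv1} directly: the factor $\tau_{n+1}^{k,l}P_{n+1}^{k,l}$ is the defining Pfaffian in \eqref{psop_dis_lv_even}--\eqref{psop_dis_lv_odd}; the factor $(z+1)\tau_n^{k,l+1}P_n^{k,l+1}$ is \eqref{psop_dis_lv_tau_evo_even}--\eqref{psop_dis_lv_tau_evo_odd}; and $z\tau_n^{k+1,l}P_n^{k+1,l}$ is \eqref{psop_dis_lv_rel11}--\eqref{psop_dis_lv_rel1}. The $\tau$-prefactors $\tau_{n+1}^{k+1,l}$, $\tau_n^{k+1,l+1}$, $\tau_{n+2}^{k,l}$, $\tau_{n+1}^{k,l+1}$ are then all expressible as $(k,l)$-Pfaffians via \eqref{psop_dis_lv_tau_rela1}--\eqref{psop_dis_lv_tau_rela3} and the defining \eqref{psop_dis_lv_tau}.

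The remaining quantity $z(z+1)\tau_{n-1}^{k+1,l+1}P_{n-1}^{k+1,l+1}$ is not listed in the lemma and must be produced by composing the two shifts. I would first apply the $k$-shift: the relations $\mu_{i,j}^{k+1,l+1}=\mu_{i+1,j+1}^{k,l+1}$, $\beta_i^{k+1,l+1}=\beta_{i+1}^{k,l+1}$ combined with the observation that the $z$-row/column of the defining skew matrix carries entries $(z^i)$ which are rescaled by $z$ under the relabeling $i\mapsto i+1$ (so that the Pfaffian is rescaled by $z$), yield
\begin{equation*}
z\cdot Pf(d_0,0,1,\ldots,n-1,z)^{k+1,l+1}=Pf(d_0,1,2,\ldots,n,z)^{k,l+1}.
\end{equation*}
Multiplying by $(z+1)$ and imitating the Pfaffian expansion argument used in the proof of \eqref{psop_dis_lv_tau_evo_even}--\eqref{psop_dis_lv_tau_evo_odd} (using $\beta_i^{k,l+1}=\beta_i^{k,l}+\beta_{i+1}^{k,l}$ and the bimoment analogue) then converts this $^{k,l+1}$-Pfaffian into a $(k,l)$-Pfaffian carrying the auxiliary letter $c_0$, up to an overall sign which I would pin down by a small case check.

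After these substitutions the identity \eqref{psop_dis_lv_rel_adj}, taken separately for $n=2m$ and $n=2m+1$, becomes a four-term polynomial identity among Pfaffians based on a common odd-length index set $B$ (equal to $(1,\ldots,n+1)$ or to $(d_0,1,\ldots,n+1)$, depending on parity) with four distinguished letters $\{0,c_0,d_0,z\}$ playing the role of $a_1,a_2,a_3,a_4$. It is precisely the odd-length Hirota Pfaffian identity
\begin{equation*}
Pf(a_1,B)\,Pf(a_2,a_3,a_4,B)-Pf(a_2,B)\,Pf(a_1,a_3,a_4,B)+Pf(a_3,B)\,Pf(a_1,a_2,a_4,B)-Pf(a_4,B)\,Pf(a_1,a_2,a_3,B)=0,
\end{equation*}
the natural companion of the identities \eqref{pf1}--\eqref{pf2} invoked in the previous subsection. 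The hard part of the argument will be the sign bookkeeping in the composed-shift step: the sign relating $(z+1)Pf(d_0,1,\ldots,n,z)^{k,l+1}$ to $Pf(d_0,c_0,1,\ldots,n+1,z)^{k,l}$ does not simply copy the sign in \eqref{psop_dis_lv_tau_evo_odd}, because the index $0$ is absent from the numeric part, and the two parities of $n$ need to be handled independently since the role of $d_0$ and the length of $B$ shift. Once signs are settled, the reduction to the Hirota identity above is a routine combinatorial match.
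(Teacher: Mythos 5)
Your strategy coincides with the paper's: its proof consists exactly of using Lemma \ref{lem:dis_lv1} to rewrite all four groupings $\tau\cdot\tau\cdot P$ as Pfaffians with a uniform superscript $(k,l)$ and then recognizing the result, separately for the two parities of $n$, as the Pfaffian identities \eqref{pf1} and \eqref{pf2} with distinguished indices $d_0,c_0,0,z$. Your isolation of the one missing ingredient --- the composed-shift expression for $z(z+1)\tau_{n-1}^{k+1,l+1}P_{n-1}^{k+1,l+1}$, obtained by applying the $k$-shift relation \eqref{psop_dis_lv_rel1} at level $l+1$ and then the $c_0$-expansion --- is precisely what the paper uses implicitly, and the intermediate formula $z\,Pf(d_0,0,\ldots,n-1,z)^{k+1,l+1}=Pf(d_0,1,\ldots,n,z)^{k,l+1}$ is correct. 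One correction to your final step: only the even-$n$ case of \eqref{psop_dis_lv_rel_adj} reduces to the symmetric four-term identity you display, where every product has the shape (three letters $+$ odd base)$\times$(one letter $+$ odd base) with base $(1,\ldots,n+1)$ and letters $d_0,c_0,0,z$; that identity is \eqref{pf2} in disguise. The odd-$n$ case yields products of the shapes (four letters $+$ even base)$\times$(base) and (two letters $+$ even base)$\times$(two letters $+$ even base), i.e.\ Pfaffians of three different orders, so it cannot be matched to your four-term identity; it is instead \eqref{pf1} with the even-length base $(1,\ldots,n+1)$. Relatedly, your proposed base $(d_0,1,\ldots,n+1)$ is untenable, since $d_0$ cannot simultaneously sit inside the common base and serve as a distinguished letter, and several of the factors that arise (for instance $Pf(1,\ldots,n+1)^{k,l}=\tau_{n+1}^{k+1,l}$) contain no $d_0$ at all. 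With the target identities corrected to \eqref{pf1} for $n$ odd and \eqref{pf2} for $n$ even, your argument reproduces the paper's proof.
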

\begin{proof} By employ  Lemma \ref{lem:dis_lv1}, we can rewrite \eqref{psop_dis_lv_rel_adj} in terms of uniform index on $k,l$ as
\begin{align*}
&Pf(d_0,c_0,0,1,\cdots,2n,z)^{k,l}Pf(1,\cdots,2n)^{k,l}=Pf(d_0,c_0,1,\cdots,2n)^{k,l}Pf(0,1,\cdots,2n,z)^{k,l}\\
&\ -Pf(c_0,1,\cdots,2n,z)^{k,l}Pf(d_0,0,1,\cdots,2n)^{k,l}+Pf(d_0,1,\cdots,2n,z)^{k,l}Pf(c_0,0,1,\cdots,2n)^{k,l},\\
&Pf(d_0,c_0,1,\cdots,2n-1,z)^{k,l}Pf(0,1,\cdots,2n-1)^{k,l}=Pf(d_0,0,1,\cdots,2n-1,z)^{k,l}Pf(c_0,1,\cdots,2n-1)^{k,l}\\
&\ -Pf(c_0,0,1,\cdots,2n-1,z)^{k,l}Pf(d_0,1,\cdots,2n-1)^{k,l}+Pf(d_0,c_0,0,1,\cdots,2n-1)^{k,l}Pf(1,\cdots,2n-1,z)^{k,l},
\end{align*}
in the even and odd cases, respectively. These two equalities are valid since they are no other than Pfaffian identities in \eqref{pf1} and \eqref{pf2}.
\end{proof}

Comparing the coefficients of the highest degree on the both sides of \eqref{psop_dis_lv_rel_adj} , we obtain
\begin{align}\label{eq:21glv_dis_bi}
\tau_{n+2}^{k,l}\tau_{n-1}^{k+1,l+1}-\tau_{n}^{k,l+1}\tau_{n+1}^{k+1,l}=\tau_{n}^{k+1,l+1}\tau_{n+1}^{k,l}-\tau_{n}^{k+1,l}\tau_{n+1}^{k,l+1},
\end{align}
which belongs to the discrete BKP hierarchy \cite{gilson2003two,hirota2001soliton}.
Note that this bilinear equation is a discrete analogue of \eqref{eq:bi_glv}, since \eqref{eq:bi_glv} can be reproduced from \eqref{eq:21glv_dis_bi} by taking appropriate limits with respect to the index $l$.  In fact, by setting
$$\hat \tau_n^k(t)=\epsilon^{n+k}\tau_n^{k,l},\qquad t=l\epsilon,$$
one is led to 
$$\hat\tau_{n+2}^{k}(t)\hat\tau_{n-1}^{k+1}(t+\epsilon)-\hat\tau_{n}^{k}(t+\epsilon)\hat\tau_{n+1}^{k+1}(t)=\frac{1}{\epsilon}\left(\hat\tau_{n}^{k+1}(t+\epsilon)\hat\tau_{n+1}^{k}(t)-\hat\tau_{n}^{k+1}(t)\hat\tau_{n+1}^{k}(t+\epsilon)\right),$$
which yields \eqref{eq:bi_glv} in the small limit of $\epsilon$.


If we introduce the variables
\begin{align*}
v_n^{k,l}=\frac{\tau_{n+1}^{k,l}}{\tau_n^{k,l}},\qquad u_n^{k,l}=\frac{\tau_{n}^{k+1,l}}{\tau_n^{k,l}}, \qquad w_n^{k,l}=\frac{\tau_n^{k,l+1}}{\tau_n^{k,l}},
\end{align*}
then $u_n^{k,l}$, $v_n^{k,l}$  and $w_n^{k,l}$ satisfy
\begin{align}
&v_n^{k+1,l}u_n^{k,l}=v_n^{k,l}u_{n+1}^{k,l},\quad v_n^{k,l+1}w_n^{k,l}=v_n^{k,l}w_{n+1}^{k,l},\nonumber\\
&\frac{v_{n+1}^{k,l}}{v_{n-1}^{k+1,l+1}}=1+\frac{u_{n+1}^{k,l}}{u_n^{k,l+1}}-\frac{w_{n+1}^{k,l}}{w_n^{k+1,l}}, \label{eq:21glv_dis}\\
&\ n,l,k=0,1,2,\cdots, \nonumber
\end{align}
which we call ``full-discrete generalized Lotka-Voterra lattice in 3 dimension'' ( full-discrete gLV lattice). This makes sense because its bilinear form is a discrete analogue of the bilinear form of the semi-discrete 1+2 gLV lattice \eqref{eq:21glv}.

The above derivation implies the following theorem, which concludes the tau-function representation of the full-discrete gLV lattice  \eqref{eq:21glv_dis}.
\begin{theorem}
The equation  \eqref{eq:21glv_dis} admits the following tau-function representation
\begin{align*}
v_n^{k,l}=\frac{\tau_{n+1}^{k,l}}{\tau_n^{k,l}},\qquad u_n^{k,l}=\frac{\tau_{n}^{k+1,l}}{\tau_n^{k,l}}, \qquad w_n^{k,l}=\frac{\tau_n^{k,l+1}}{\tau_n^{k,l}},
\end{align*}
which enjoy the explicit forms in terms of Pfaffians:
$$
\tau_{2n}^{k,l}=Pf(0,1,\cdots,2n-1)^{k,l},\qquad \tau_{2n+1}^{k,l} = Pf(d_0,0,1,\cdots,2n)^{k,l},
$$
with the Pfaffian entries satisfying
\begin{align*}
&Pf(i,j)^{k,l}=\mu_{i,j}^{k,l},&&Pf(d_0,i)^{k,l}=\beta_i^{k,l},\\
&\mu_{i,j}^{k+1,l}=\mu_{i+1,j+1}^{k,l},&& \beta_i^{k+1,l}=\beta_{i+1}^{k,l},\\
&\mu_{i,j}^{k,l+1}=\mu_{i,j}^{k,l}+\mu_{i+1,j}^{k,l}+\mu_{i,j+1}^{k,l}+\mu_{i+1,j+1}^{k,l},&&\beta_i^{k,l+1}=\beta_{i+1}^{k,l}+\beta_{i}^{k,l}.
\end{align*}
\end{theorem}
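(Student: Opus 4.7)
The plan is to package the pieces that have already been assembled in the text into a clean verification. The Pfaffian expressions for $\tau_{2n}^{k,l}$ and $\tau_{2n+1}^{k,l}$ are true by definition \eqref{psop_dis_lv_tau}, and the three moment recurrences
\begin{align*}
\mu_{i,j}^{k+1,l}&=\mu_{i+1,j+1}^{k,l}, \qquad \beta_i^{k+1,l}=\beta_{i+1}^{k,l},\\
\mu_{i,j}^{k,l+1}&=\mu_{i,j}^{k,l}+\mu_{i+1,j}^{k,l}+\mu_{i,j+1}^{k,l}+\mu_{i+1,j+1}^{k,l}, \qquad \beta_i^{k,l+1}=\beta_{i+1}^{k,l}+\beta_i^{k,l}
\end{align*}
are immediate consequences of $\rho(x;k,l)\,dx = x^k(1+x)^l \rho(x)\,dx$ together with the double integral representations of $\mu_{i,j}^{k,l}$ and $\beta_i^{k,l}$. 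So nothing is required here beyond writing down the definitions.

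Next I would verify that the two ``transport'' equations $v_n^{k+1,l}u_n^{k,l}=v_n^{k,l}u_{n+1}^{k,l}$ and $v_n^{k,l+1}w_n^{k,l}=v_n^{k,l}w_{n+1}^{k,l}$ hold. After substituting the tau-function definitions of $u$, $v$, and $w$, both sides of each identity telescope to the same ratio of $\tau$-functions, so these identities are purely formal and require no use of the Pfaffian structure.

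The substantive step is the third equation of \eqref{eq:21glv_dis}. My plan is to clear denominators: expand $v_{n+1}^{k,l}/v_{n-1}^{k+1,l+1}$, $u_{n+1}^{k,l}/u_n^{k,l+1}$ and $w_{n+1}^{k,l}/w_n^{k+1,l}$ in terms of $\tau$-functions, observe that all three ratios share the common denominator $\tau_{n+1}^{k,l}\tau_n^{k+1,l+1}$, and then multiply through. The resulting identity is precisely the bilinear equation
\[
\tau_{n+2}^{k,l}\tau_{n-1}^{k+1,l+1}-\tau_{n}^{k,l+1}\tau_{n+1}^{k+1,l}=\tau_{n}^{k+1,l+1}\tau_{n+1}^{k,l}-\tau_{n}^{k+1,l}\tau_{n+1}^{k,l+1},
\]
which is \eqref{eq:21glv_dis_bi}, already obtained above by comparing the leading coefficients in the polynomial identity \eqref{psop_dis_lv_rel_adj}. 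The latter, in turn, was reduced via Lemma \ref{lem:dis_lv1} to the Pfaffian identities \eqref{pf1}--\eqref{pf2}, so the bilinear equation, and hence the full theorem, follow.

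The only genuinely nontrivial work has already been done in establishing Lemma \ref{lem:dis_lv1} and the adjacent-family identity \eqref{psop_dis_lv_rel_adj}, so I expect no real obstacle at this stage. The main thing to be careful about is bookkeeping: confirming that the three nonlinear variables $u$, $v$, $w$ are genuinely well-defined (i.e.\ all the $\tau$-functions appearing are nonzero, which is exactly the assumption \eqref{psop_dis_lv_tau} required for the PSOPs to exist and which Theorem \ref{th:exis} guarantees under the positivity hypothesis on $Pf(W)$), and writing the integral/Pfaffian data in the order expected by the recurrences above.
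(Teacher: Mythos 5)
Your proposal is correct and follows essentially the same route as the paper: the theorem is a repackaging of the preceding derivation, in which the nonlinear system \eqref{eq:21glv_dis} reduces (after clearing the common denominator $\tau_{n+1}^{k,l}\tau_n^{k+1,l+1}$ and the two telescoping identities) to the bilinear equation \eqref{eq:21glv_dis_bi}, itself obtained from the adjacent-family relation \eqref{psop_dis_lv_rel_adj} and hence from the Pfaffian identities via Lemma \ref{lem:dis_lv1}. Your added remarks on the nonvanishing of the $\tau$'s and the formal nature of the transport equations are consistent with the paper's assumptions and add no divergence in method.
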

\begin{remark}
Again, the de Bruijn's formulae \eqref{id_deBr1}-\eqref{id_deBr2} imply that the tau-function here admits an equivalent integral formula:
\begin{align*}
\tau_{n}^{k,l}
&=\idotsint\limits_{0<x_1<\cdots<x_{n}<\infty} Pf(W(x_1,\cdots,x_{n}))\prod_{1\leq i<j\leq n}(x_j-x_i)\prod_{j=1}^{n} x_j^k(1+x_j)^l\rho(x_i)dx_1\cdots dx_{n},
\end{align*}
where the skew-symmetric matrix $W$ is defined as that in Theorem \ref{th:exis}.
\end{remark}

\subsection{Toda lattices of BKP type}\label{subsec:btoda}
This subsection is devoted to the derivations of Toda lattices of BKP type. More exactly, five semi(or full)-discrete lattices are considered in different dimensions. Note that the skew-symmetric kernel $\omega(x,y)$ is now assigned as some specific functions, in contrast with that in the subsection \ref{subsec:glv}.
 
 \subsubsection{A semi-discrete B-Toda lattice in $2+1$ dimension}\label{subsec:21btoda}
Consider the special skew-symmetric integral  kernel $\omega(x,y)=\frac{q(x)-q(y)}{q(x)+q(y)}$ and the deformation on the weight $\rho(x;t,s)dx=\exp(tx+sq(x))\rho(x)dx$  leading to the moments $\mu_{i,j}(t,s)$ and $\beta_i(t,s)$ depending on $t,s$:
\begin{align*}
&\mu_{i,j}(t,s)=\iint_{\mathbb{R}_+^2}x^{i}y^{j}\frac{q(x)-q(y)}{q(x)+q(y)}\exp(t(x+y)+s(q(x)+q(y)))\rho(x)\rho(y)dxdy,\\
&\beta_i(t,s)=\int_{\mathbb{R_+}}x^{i}\exp(tx+sq(x))\rho(x)dx,
\end{align*}
where
$q(x)$ is some ``good'' function on $x$ so that the moments are well-defined \footnote{The reason that $q(x)$ is introduced is to make a distinction between the variables $t$ and $s$.  If the $q(x)$ is a positive and strictly monotone increasing function on $\mathbb{R}_+$, then the validity of the corresponding PSOPs is implied by using Theorem \ref{th:exis} and the Schur's Pfaffian identities \eqref{id_schur} and  \eqref{id_schur_odd}. As we will see in the following  subsection, the case $q(x)=x$ corresponds to a reduction in a lower dimension. 
}.  

This setup implies
$$
\frac{\partial \mu_{i,j}}{\partial s}=\frac{\partial \beta_i}{\partial s}\beta_j-\beta_i\frac{\partial \beta_j}{\partial s},
$$
which also leads to
\begin{align*}
\mu_{i,j}&=\iint\limits_{-\infty<\eta<\sigma<s}\frac{\partial \beta_i}{\partial s}(t;\sigma)\frac{\partial \beta_j}{\partial s}(t;\eta)-\frac{\partial \beta_i}{\partial s}(t;\eta)\frac{\partial \beta_j}{\partial s}(t;\sigma)d\eta d\sigma\\
&=\int_{-\infty}^s\int_{-\infty}^s sgn(\sigma-\eta)\frac{\partial \beta_i}{\partial s}(t;\sigma)\frac{\partial \beta_j}{\partial s}(t;\eta) d\sigma d\eta.
\end{align*}
Define a class of monic PSOPs ($\{P_n(z;t,s)\}_{n=0}^\infty$):
\begin{subequations}\label{psop_btoda}
\begin{align} 
&P_{2n}(z;t,s)=\frac{1}{\tau_{2n}}Pf(0,1,\cdots,2n-1,2n,z), \label{psop_btoda_even}\\
&P_{2n+1}(z;t,s)=\frac{1}{\tau_{2n+1}}Pf(d_0,0,1,\cdots,2n,2n+1,z),\label{psop_btoda_odd}
\end{align}
\end{subequations}
where
\begin{align}
\tau_{2n}(t,s)\triangleq Pf(0,1,\cdots,2n-1)\neq 0,\qquad \tau_{2n+1}(t,s) \triangleq Pf(d_0,0,1,\cdots,2n)\neq0, \label{psop_btoda_tau}
\end{align}
with the Pfaffian entries
\begin{align*}
&Pf(i,j)=\mu_{i,j}(t,s),\qquad\quad Pf(d_0,i)=\beta_i(t,s),\\
&Pf(i,z)=z^i,\qquad\qquad\quad\ \ Pf(d_0,z)=0.
\end{align*}

 We claim that the partial derivative of $\tau_n$ with respect to $t$ and $s$ can both be explicitly expressed in terms of Pfaffians.
 \begin{lemma} \label{lem:psop_btoda_tau_evo}
 The $\tau_n$  evolves as follows:
\begin{align}
&\frac{\partial}{\partial{t}}\tau_{2n}=Pf(0,1,\cdots,2n-2,2n),\qquad\quad\ \frac{\partial}{\partial{t}} \tau_{2n+1}= Pf(d_0,0,1,\cdots,2n-1,2n+1),\label{psop_btoda_tau_evo_t}\\ 
&\frac{\partial}{\partial{s}}\tau_{2n}=Pf(d_0,d_1,0,1,\cdots,2n-1),\qquad \frac{\partial}{\partial{s}} \tau_{2n+1}= Pf(d_1,0,1,\cdots,2n),\label{psop_btoda_tau_evo_s}
\end{align}
where 
$$Pf(d_1,i)= \frac{\partial}{\partial{s}}Pf(d_0,i)=\int_{\mathbb{R_+}}x^{i}q(x)\exp(tx+sq(x))\rho(x)dx,\quad Pf(d_0,d_1)=0.$$
 \end{lemma}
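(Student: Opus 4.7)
The plan is to differentiate the Pfaffian entries directly from their integral representations and then assemble the results by means of the Pfaffian derivative formulas \eqref{der1} and \eqref{der1_odd}. In both evolutions the structure of the underlying moments suggests the shape of the final Pfaffian, and the core task is to verify that the elementary identities satisfied by $\partial_t\mu_{i,j},\partial_t\beta_i$ (resp.\ $\partial_s\mu_{i,j},\partial_s\beta_i$) are exactly what is needed to invoke those formulas.

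For the $t$-evolution \eqref{psop_btoda_tau_evo_t}, differentiation under the integral sign yields the index-shift rules
\[
\tfrac{\partial}{\partial t}\mu_{i,j}=\mu_{i+1,j}+\mu_{i,j+1},\qquad \tfrac{\partial}{\partial t}\beta_i=\beta_{i+1},
\]
which coincide with those appearing in the $1+2$ gLV lattice derivation. Hence the conclusion follows word-for-word from the proof of \eqref{psop_sem_lv_tau_evo}, namely a direct application of \eqref{der1} and \eqref{der1_odd} in which the two-term shift $\mu_{i+1,j}+\mu_{i,j+1}$ collapses into a single Pfaffian having its last index incremented from $2n-1$ to $2n$ (or from $2n$ to $2n+1$ in the odd case).

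The substantive new content is the $s$-evolution \eqref{psop_btoda_tau_evo_s}. Differentiation under the integral sign gives $\partial_s\beta_i=Pf(d_1,i)$ by definition of the label $d_1$, while the factor $q(x)+q(y)$ produced by $\partial_s$ exactly cancels the denominator in the skew kernel $\omega(x,y)=(q(x)-q(y))/(q(x)+q(y))$, yielding the separable ``rank-one'' identity
\[
\tfrac{\partial}{\partial s}\mu_{i,j}=Pf(d_1,i)Pf(d_0,j)-Pf(d_0,i)Pf(d_1,j).
\]
For the odd case $\tau_{2n+1}=Pf(d_0,0,1,\ldots,2n)$, I would apply the Jacobi-type derivative formula for Pfaffians and separate the contributions of the $Pf(d_0,k)$-entries (which produce $Pf(d_1,k)$, i.e.\ a formal substitution $d_0\mapsto d_1$) from those of the $\mu_{i,j}$-entries (which produce a bilinear form in $Pf(d_0,\cdot)$ and $Pf(d_1,\cdot)$). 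Applying the Pfaffian identity \eqref{pf1} to recombine the bilinear piece with the substitution piece then collapses the sum into the single Pfaffian $Pf(d_1,0,1,\ldots,2n)$.

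For the even case $\tau_{2n}=Pf(0,1,\ldots,2n-1)$ only the $\mu_{i,j}$-entries appear, and their derivatives have the bilinear form displayed above. On the other hand, expanding $Pf(d_0,d_1,0,1,\ldots,2n-1)$ successively along the rows labelled $d_0$ and $d_1$ and using $Pf(d_0,d_1)=0$ to kill the diagonal-pair term produces exactly the same bilinear sum against the cofactors $Pf(0,\ldots,\hat i,\ldots,\hat j,\ldots,2n-1)$. The main obstacle is the careful bookkeeping of signs: the alternating signs $(-1)^{i+j+1}$ coming from the Pfaffian derivative formula must be matched against those arising from the two successive row-expansions of $Pf(d_0,d_1,0,\ldots,2n-1)$. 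Once the signs are reconciled the identification is complete and the lemma is established.
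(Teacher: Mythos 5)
Your proposal is correct and follows essentially the same route as the paper: both rest on computing the entry derivatives ($\partial_t$ acting as the Wronski-type shift $\mu_{i,j}\mapsto\mu_{i+1,j}+\mu_{i,j+1}$, $\beta_i\mapsto\beta_{i+1}$, and $\partial_s$ producing the separable form $Pf(d_1,i)Pf(d_0,j)-Pf(d_0,i)Pf(d_1,j)=Pf(d_0,d_1,i,j)$) and then assembling the result via the derivative formulas \eqref{der1}, \eqref{der1_odd}, \eqref{der2_1} and \eqref{der2_2}. The only difference is that for the $s$-evolution the paper simply cites the Gram-type formulas \eqref{der2_1}--\eqref{der2_2}, whereas you re-derive them inline by expanding and recombining the cofactor sums, which is just the standard proof of those formulas.
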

 \begin{proof} 
 It is not hard to see that 
\begin{align*}
&\frac{\partial}{\partial{t}}Pf(i,j)=Pf(i+1,j)+Pf(i,j+1),&& \frac{\partial}{\partial{t}} Pf(d_0,i)=Pf(d_0,i+1),\\
&\frac{\partial}{\partial{s}}Pf(i,j)=Pf(d_0,d_1,i,j),&& \frac{\partial}{\partial{s}} Pf(d_0,i)=Pf(d_1,i),
\end{align*}
from which, the result follows by employing the derivative formula \eqref{der1}, \eqref{der1_odd},  \eqref{der2_1} and  \eqref{der2_2}. 
 \end{proof}
  
 From this lemma, we can express the coefficient of the second highest degree of PSOPs \eqref{psop_btoda} in terms of $\tau_n$ and obtain some nice representations for the derivative with respect to $s$.
\begin{coro}\label{coro:btoda1}
The PSOPs in \eqref{psop_btoda} admit the following expressions:
\begin{align}
P_n(z;t,s)=z^n-(\frac{\partial}{\partial t}\log\tau_n )z^{n-1}+\text{lower order terms}, \label{psop_btoda_exp_tau}
\end{align}
and evolve with respect to $s$ as:
\begin{align}
&\frac{\partial}{\partial{s}}(\tau_{2n}P_{2n})=Pf(d_0,d_1,0,1,\cdots,2n,z),\quad \frac{\partial}{\partial{s}} (\tau_{2n+1}P_{2n+1})= Pf(d_1,0,1,\cdots,2n+1,z),\label{psop_btoda_evo_s}
\end{align}
where 
$$
Pf(d_0,z)=Pf(d_1,z)=0.
$$
\end{coro}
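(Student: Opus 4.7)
The first claim is a direct Pfaffian-expansion calculation, while the second claim is an application of the same Pfaffian derivative identities used to prove Lemma \ref{lem:psop_btoda_tau_evo}, extended so as to accommodate the extra index $z$. The plan is to verify that the extension is consistent and then essentially re-run the lemma's proof.

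For the polynomial expansion \eqref{psop_btoda_exp_tau}, I would expand the defining Pfaffian along the last index $z$. In the even case,
\begin{align*}
\tau_{2n}P_{2n}(z;t,s) = Pf(0,1,\ldots,2n,z) = \sum_{j=0}^{2n}(-1)^{j}\,Pf(j,z)\,Pf(0,1,\ldots,\hat j,\ldots,2n),
\end{align*}
since $Pf(j,z)=z^j$. The $z^{2n}$-coefficient collapses to $Pf(0,1,\ldots,2n-1)=\tau_{2n}$, confirming monicity after normalization, and the $z^{2n-1}$-coefficient is (up to sign) $Pf(0,1,\ldots,2n-2,2n)$, which by Lemma \ref{lem:psop_btoda_tau_evo} equals $\partial_t\tau_{2n}$. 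Dividing by $\tau_{2n}$ produces the coefficient $-\partial_t\log\tau_{2n}$ as claimed. The odd case runs identically: one expands $Pf(d_0,0,\ldots,2n+1,z)$ along $z$, where the term from $d_0$ drops out because $Pf(d_0,z)=0$, and then invokes the odd half of Lemma \ref{lem:psop_btoda_tau_evo} for $\partial_t\tau_{2n+1}$.

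For the $s$-derivative identities \eqref{psop_btoda_evo_s}, the key point is that the formulas \eqref{der2_1}–\eqref{der2_2} already applied in Lemma \ref{lem:psop_btoda_tau_evo} to compute $\partial_s\tau_{n}$ extend verbatim to the longer Pfaffians $Pf(0,1,\ldots,2n,z)$ and $Pf(d_0,0,1,\ldots,2n+1,z)$, provided the new $z$-entries are compatible with the derivative calculus. Concretely, we must check that the stated convention $Pf(d_0,z)=Pf(d_1,z)=0$ is consistent with $\partial_s Pf(i,z)=\partial_s z^i=0$. By the four-element Pfaffian expansion,
\begin{align*}
Pf(d_0,d_1,i,z) = Pf(d_0,d_1)\,Pf(i,z)-Pf(d_0,i)\,Pf(d_1,z)+Pf(d_0,z)\,Pf(d_1,i) = 0,
\end{align*}
using $Pf(d_0,d_1)=0$ together with the vanishing of $Pf(d_0,z)$ and $Pf(d_1,z)$. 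This matches $\partial_s Pf(i,z)=0$. Once this consistency check is in hand, applying the derivative formulas \eqref{der2_1} and \eqref{der2_2} to the enlarged Pfaffians immediately yields the two claimed identities.

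The only step requiring genuine thought is this last consistency verification — ensuring that the extended Pfaffian entries involving $z$ plug correctly into the $s$-derivative identities of Lemma \ref{lem:psop_btoda_tau_evo}. After that check, both assertions of the corollary are obtained by a direct re-application of the same derivative calculus already carried out in the proof of the lemma.
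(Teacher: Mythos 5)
Your proposal is correct and follows essentially the same route as the paper's (terse) proof: expand the defining Pfaffian along $z$ and invoke \eqref{psop_btoda_tau_evo_t} for the $z^{n-1}$ coefficient, then apply the Gram-type derivative formulae \eqref{der2_1}--\eqref{der2_2} to the enlarged Pfaffians. Your explicit consistency check that $\partial_s Pf(i,z)=0=Pf(d_0,d_1,i,z)$ under the conventions $Pf(d_0,z)=Pf(d_1,z)=0$ is exactly the detail the paper leaves implicit, and it is verified correctly.
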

 \begin{proof}
The \eqref{psop_btoda_exp_tau} is an obvious consequence of expanding the expression \eqref{psop_btoda} and using \eqref{psop_btoda_tau_evo_t}.
\eqref{psop_btoda_evo_s} can be obtained by using the expansion formulae and derivative formuae \eqref{der2_1} and \eqref{der2_2}.

 \end{proof}
 
 Moreover, we have
 \begin{coro}\label{coro:btoda2}
 There holds the following relationship between $P_{n}(z;t,s)$ and $P_{n-1}(z;t,s)$:
 \begin{align}
\tau_{n}^2\frac{\partial}{\partial s}P_{n}=\left(\tau_{n-1}\frac{\partial}{\partial s}\tau_{n+1} -\tau_{n+1}\frac{\partial}{\partial s}\tau_{n-1}\right)P_{n-1}-\tau_{n+1}\tau_{n-1}\frac{\partial}{\partial s}P_{n-1}. \label{psop_btoda_rel_adj}
\end{align}
 \end{coro}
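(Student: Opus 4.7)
The plan is to rewrite \eqref{psop_btoda_rel_adj} entirely in Pfaffian form, by substituting the expressions for $\tau_n$, $\tau_n P_n$, $\partial_s\tau_n$, and $\partial_s(\tau_n P_n)$ furnished by \eqref{psop_btoda}, Lemma \ref{lem:psop_btoda_tau_evo}, and Corollary \ref{coro:btoda1}, and then to recognize the resulting relation as an instance of the standard quartic/three-term Pfaffian identities \eqref{pf1}, \eqref{pf2} used repeatedly elsewhere in the paper. The two parities of $n$ are handled separately but by the same strategy.

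For the even case $n=2m$, I would apply \eqref{pf1} with the even-cardinality base set $B=(0,1,\ldots,2m-1)$ and the four extra indices $(d_0,d_1,2m,z)$. After rearranging each factor into the canonical order used in the Lemma and Corollary (all cyclic shifts that arise have even length and therefore produce sign $+1$), the eight Pfaffians appearing on the two sides become, respectively, $\partial_s(\tau_{2m}P_{2m})$ and $\tau_{2m}$; $\partial_s\tau_{2m}$ and $\tau_{2m}P_{2m}$; $\tau_{2m+1}$ and $\partial_s(\tau_{2m-1}P_{2m-1})$; $\tau_{2m-1}P_{2m-1}$ and $\partial_s\tau_{2m+1}$. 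Using the product rule $\tau_{2m}^{2}\,\partial_sP_{2m}=\tau_{2m}\,\partial_s(\tau_{2m}P_{2m})-\tau_{2m}P_{2m}\,\partial_s\tau_{2m}$ on the left, together with the analogous expansion of $\partial_s(\tau_{2m-1}P_{2m-1})$ on the right, then collapses the identity into \eqref{psop_btoda_rel_adj} at $n=2m$.

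For the odd case $n=2m+1$, I would apply the three-term identity \eqref{pf2} with the odd-cardinality base set $B=(0,1,\ldots,2m)$ and the four extras $(d_0,d_1,2m+1,z)$. Now some of the cyclic shifts involved have odd length and contribute a sign $-1$, but these signs come in matched pairs inside each product on the right, so that after identification the Pfaffians still match $\tau_{2m}$, $\tau_{2m+2}$, $\tau_{2m+1}P_{2m+1}$, $\tau_{2m}P_{2m}$, and the relevant $s$-derivatives supplied by the Lemma and Corollary. The same algebraic reduction as in the even case, now applied to $\tau_{2m+1}^{2}\,\partial_sP_{2m+1}=\tau_{2m+1}\partial_s(\tau_{2m+1}P_{2m+1})-\tau_{2m+1}P_{2m+1}\,\partial_s\tau_{2m+1}$, then yields the claim.

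The main obstacle is purely notational: tracking the signs incurred when cyclically permuting each Pfaffian into the canonical form recognized by the Lemma and Corollary, and making sure that the four extras $(d_0,d_1,n,z)$ are the correct choice so that each Pfaffian produced by the identity maps onto exactly one of the quantities $\tau_{n\pm 1}$, $\tau_{n\pm 1}P_{n\pm 1}$, $\partial_s\tau_{n\pm 1}$, $\partial_s(\tau_{n\pm 1}P_{n\pm 1})$ appearing in \eqref{psop_btoda_rel_adj}. Once these matchings are verified, no further manipulation is required.
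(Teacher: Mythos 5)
Your proposal is correct and follows essentially the same route as the paper's proof: both rewrite \eqref{psop_btoda_rel_adj} in Pfaffian form via Lemma \ref{lem:psop_btoda_tau_evo} and Corollary \ref{coro:btoda1}, and then recognize the even case as the identity \eqref{pf1} on the base $(0,\dots,2m-1)$ with extras $(d_0,d_1,2m,z)$ and the odd case as \eqref{pf2} with special indices $(d_0,d_1,2m+1,z)$ over the base $(0,\dots,2m)$, exactly as in the paper. The only cosmetic difference is that the paper treats $2m+1$ as part of the enlarged base set in \eqref{pf2} rather than as a fourth ``extra,'' which does not affect the argument.
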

\begin{proof} 
By using the relation \eqref{psop_btoda_evo_s} in Corollary \ref{coro:btoda1}, it is not difficult to see that \eqref{psop_btoda_rel_adj} can be equivalently written as
\begin{align*}
&Pf(d_0,d_1,0,\cdots,2n-1,2n,z)Pf(0,\cdots,2n-1)=Pf(d_0,d_1,0,\cdots,2n-1)Pf(0,\cdots,2n-1,2n,z)\\
&\ -Pf(d_0,0,\cdots,2n-1,2n)Pf(d_1,0,\cdots,2n-1,z)+Pf(d_0,0,\cdots,2n-1,z)Pf(d_1,0,\cdots,2n-1,2n),\\
&Pf(d_0,d_1,0,\cdots,2n,z)Pf(0,\cdots,2n,2n+1)=Pf(d_0,d_1,0,\cdots,2n,2n+1)Pf(0,\cdots,2n,z)\\
&\ -Pf(d_1,0,\cdots,2n,2n+1,z)Pf(d_0,0,\cdots,2n)+Pf(d_0,0,\cdots,2n,2n+1,z)Pf(d_1,0,\cdots,2n)
\end{align*}
in the even and odd cases, respectively. They are nothing but the Pfaffian identities in \eqref{pf1} and \eqref{pf2}, thus we complete the proof.
\end{proof}
 
 Inserting \eqref{psop_btoda_exp_tau} into  \eqref{psop_btoda_rel_adj} and comparing the coefficients of the highest degree in $z$ on the two sides, we get a bilinear differential-difference equation in 2+1 dimension (two continuous and one discrete variables), that is,
 \begin{align}
\left(\frac{\partial^2}{\partial t\partial s}\tau_n\right) \tau_n-\frac{\partial}{\partial t}\tau_n \frac{\partial}{\partial s}\tau_n=\left(\frac{\partial}{\partial s}\tau_{n-1}\right)\tau_{n+1}-\left(\frac{\partial}{\partial s}\tau_{n+1}\right)\tau_{n-1}. \label{eq:bi21btoda}
 \end{align}
 \begin{remark}
The bilinear equation \eqref{eq:bi21btoda} was first proposed as a special case of the so-called modified Toda lattice of BKP type in \cite{hirota2001soliton}, which enjoys multi-soliton solutions in terms of Pfaffians. Its B\"acklund transformation and Lax pair were studied in \cite{gilson2003two}. Note that we focus on the so-called ``molecule solution'' rather than soliton solution throughout the paper.
 \end{remark}

If we introduce the variables
\begin{align*}
u_n=\frac{\tau_{n+1}\tau_{n-1}}{\tau_n^2},\qquad b_n=\frac{\partial}{\partial s}\log\tau_n,
\end{align*}
then $u_n$ and $b_n$ satisfy
\begin{align}
&\frac{\partial}{\partial s} u_n=u_n(b_{n+1}-2b_n+b_{n-1}),\qquad \frac{\partial}{\partial t} b_n=u_n(b_{n-1}-b_{n+1}), \quad  n=1,2,\cdots,\label{eq:21btoda}
\end{align}
which we call `` Toda lattice of BKP type in 2+1 dimension '' (2+1 B-Toda lattice).

The above derivation implies the following theorem, which concludes the tau-function representation of the 2+1 B-Toda lattice  \eqref{eq:21btoda}.
\begin{theorem}
The 2+1 B-Toda lattice  \eqref{eq:21btoda} admits the following tau-function representation
\begin{align*}
u_n=\frac{\tau_{n+1}\tau_{n-1}}{\tau_n^2},\qquad b_n=\frac{\partial}{\partial s}\log\tau_n,
\end{align*}
which enjoys the explicit form in terms of Pfaffians:
$$
\tau_{2n}(t,s)=Pf(0,1,\cdots,2n-1),\qquad \tau_{2n+1}(t,s) = Pf(d_0,0,1,\cdots,2n),
$$
with the Pfaffian entries satisfying
\begin{align*}
&Pf(i,j)=\mu_{i,j},\qquad\qquad\quad Pf(d_0,i)=\beta_i,&& Pf(d_0,d_1)=0,\\
&\frac{\partial}{\partial t}Pf(d_0,i)=Pf(d_0,i+1),&&
\frac{\partial}{\partial s}Pf(d_0,i)=Pf(d_1,i),\\
&\frac{\partial}{\partial t}Pf(i,j)=Pf(i+1,j)+Pf(i,j+1),&&
\frac{\partial}{\partial s}Pf(i,j)=Pf(d_0,d_1,i,j).
\end{align*}
\end{theorem}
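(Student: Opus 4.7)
The plan is to take the derivation presented immediately above the theorem statement and repackage it as a clean verification, since the substantive work (Lemma \ref{lem:psop_btoda_tau_evo} and Corollaries \ref{coro:btoda1}, \ref{coro:btoda2}) is already done. First, I would check that the claimed evolution rules for the Pfaffian entries are consistent with the integral definitions of $\mu_{i,j}(t,s)$ and $\beta_i(t,s)$: differentiating under the integral sign gives $\partial_t \beta_i = \beta_{i+1}$ and $\partial_t \mu_{i,j} = \mu_{i+1,j}+\mu_{i,j+1}$ immediately, while $\partial_s \beta_i = \int x^i q(x) \exp(tx+sq(x))\rho(x)\,dx$ is exactly the declared entry $Pf(d_1,i)$. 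For the $\mu_{i,j}$-derivative with respect to $s$, the identity $\partial_s \mu_{i,j} = \partial_s\beta_i\cdot \beta_j - \beta_i \cdot \partial_s\beta_j$ noted just before \eqref{psop_btoda} translates into $Pf(d_0,d_1,i,j)$ by the 4-index expansion of this Pfaffian with $Pf(d_0,d_1)=0$.

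With the Pfaffian entry evolutions in hand, the $t$- and $s$-derivatives of $\tau_n$ claimed in \eqref{psop_btoda_tau_evo_t}--\eqref{psop_btoda_tau_evo_s} are exactly the content of Lemma \ref{lem:psop_btoda_tau_evo}, so $b_n=\partial_s \log\tau_n$ is well defined and matches a ratio of explicit Pfaffians. Next I would derive the bilinear equation \eqref{eq:bi21btoda}: starting from Corollary \ref{coro:btoda2}'s relation \eqref{psop_btoda_rel_adj}, insert the expansion \eqref{psop_btoda_exp_tau} of $P_n$, and match the coefficients of $z^{n-1}$. The leading coefficient matches trivially, and the second-highest coefficient yields
\begin{equation*}
\tau_n\,\partial_t\partial_s\tau_n - \partial_t\tau_n\,\partial_s\tau_n = \tau_{n+1}\,\partial_s\tau_{n-1} - \tau_{n-1}\,\partial_s\tau_{n+1},
\end{equation*}
which is \eqref{eq:bi21btoda}.

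Finally, to pass to the nonlinear form \eqref{eq:21btoda}, I would substitute $u_n=\tau_{n+1}\tau_{n-1}/\tau_n^2$ and $b_n=\partial_s\log\tau_n$ into \eqref{eq:bi21btoda}. Dividing the bilinear identity by $\tau_n^2$ rewrites the left-hand side as $\partial_t b_n$, and dividing the right-hand side by $\tau_n^2$ gives $u_n(b_{n-1}-b_{n+1})$ after a telescoping, producing the second equation of \eqref{eq:21btoda}. The first equation $\partial_s u_n = u_n(b_{n+1}-2b_n+b_{n-1})$ follows directly from the definition of $u_n$ by logarithmic differentiation in $s$, independent of \eqref{eq:bi21btoda}. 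The main obstacle I anticipate is bookkeeping rather than conceptual: one must be careful that the sign and parity conventions in \eqref{psop_btoda_rel_adj} (which was proven separately for even and odd $n$ via two different Pfaffian identities \eqref{pf1} and \eqref{pf2}) produce a single, uniform bilinear relation \eqref{eq:bi21btoda} valid for all $n$, and that the Pfaffian entry $Pf(d_1,z)=0$ assumed in Corollary \ref{coro:btoda1} is used consistently when expanding $\partial_s(\tau_n P_n)$.
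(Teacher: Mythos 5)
Your proposal is correct and follows exactly the paper's own route: the paper proves this theorem by the derivation immediately preceding it (Lemma \ref{lem:psop_btoda_tau_evo}, Corollaries \ref{coro:btoda1} and \ref{coro:btoda2}, extraction of the bilinear equation \eqref{eq:bi21btoda} by coefficient matching, then the change of variables to $u_n$, $b_n$), and your repackaging reproduces each of these steps faithfully, including the correct reading of \eqref{eq:bi21btoda} and the observation that the first equation of \eqref{eq:21btoda} is just logarithmic differentiation of $u_n$. The only cosmetic quibble is that since $\partial_s P_n$ has no $z^n$ term, the "highest degree" comparison in \eqref{psop_btoda_rel_adj} is already the $z^{n-1}$ coefficient rather than a separate leading-plus-subleading match, but this does not affect the argument.
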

\begin{remark} 
Employing the de Brujin's formulae \eqref{id_deBr1}-\eqref{id_deBr2} and the Schur's Pfaffian identities \eqref{id_schur}-\eqref{id_schur_odd}, it is not difficult to see that the tau-function of 2+1 B-Toda lattice also owns a multiple integral formula
\begin{align*}
\tau_n(t,s)=\idotsint\limits_{0<x_1<\cdots<x_n<\infty}\prod_{1\leq i<j\leq n}\frac{(q(x_i)-q(x_j))(x_j-x_i)}{(q(x_i)+q(x_j))}{\prod_{1\leq j\leq n}}e^{tx_j+sq(x_j)}\rho(x_j)d x_1\cdots dx_n.
\end{align*}
\end{remark}
 
 \subsubsection{A semi-discrete B-Toda lattice in 1+1 dimension} \label{subsec:11btoda}
As a reduction in 1+1 dimension for the subsection \ref{subsec:21btoda},  in the case of $q(x)=x$, we consider the moments $\mu_{i,j}(t)$ \footnote{Note that the  skew-symmetric inner kernel $\frac{x-y}{x+y}$ has appeared in the study of random matrices with Bures ensemble \cite{forrester2016relating}. Again, we remind the reader that the validity of the corresponding PSOPs is guaranteed by using Theorem \ref{th:exis} and the Schur's Pfaffian identities \eqref{id_schur} and  \eqref{id_schur_odd}. }
 depending on $t$:
\begin{align*}
&\mu_{i,j}(t)=\iint_{\mathbb{R}_+^2}x^{i}y^{j}\frac{x-y}{x+y}\exp(t(x+y))\rho(x)\rho(y)dxdy,\\
&\beta_i(t)=\int_{\mathbb{R_+}}x^{i}\exp(tx)\rho(x)dx.
\end{align*}
This setup implies
$$
\frac{d \mu_{i,j}}{d t}=\frac{d \beta_i}{d t}\beta_j-\beta_i\frac{d \beta_j}{d t},
$$
which also leads to
\begin{align*}
\mu_{i,j}
=\int_{-\infty}^s\int_{-\infty}^s sgn(\sigma-\eta)\frac{d \beta_i}{d t}(t=\sigma)\frac{d \beta_j}{d t}(t=\eta) d\sigma d\eta.
\end{align*}
Let's consider a class of monic PSOPs ($\{P_n(z;t)\}_{n=0}^\infty$):
\begin{subequations}\label{psop_11btoda}
\begin{align} 
&P_{2n}(z;t)=\frac{1}{\tau_{2n}}Pf(0,1,\cdots,2n-1,2n,z), \label{psop_11btoda_even}\\
&P_{2n+1}(z;t)=\frac{1}{\tau_{2n+1}}Pf(d_0,0,1,\cdots,2n,2n+1,z),\label{psop_11btoda_odd}
\end{align}
\end{subequations}
where
\begin{align}
\tau_{2n}(t)\triangleq Pf(0,1,\cdots,2n-1)\neq 0,\qquad \tau_{2n+1}(t) \triangleq Pf(d_0,0,1,\cdots,2n)\neq0, \label{psop_11btoda_tau}
\end{align}
with the Pfaffian entries
\begin{align*}
&Pf(i,j)=\mu_{i,j}(t),\qquad\ \ Pf(d_0,i)=\beta_i(t),\\
&Pf(i,z)=z^i,\qquad\qquad\  Pf(d_0,z)=0.
\end{align*}
It is noted  that 
\begin{align*}
&\frac{d}{d{t}}Pf(i,j)=Pf(i+1,j)+Pf(i,j+1)=Pf(d_0,d_1,i,j),\\
&\frac{d}{d{t}}Pf(d_0,i)=Pf(d_0,i+1)=Pf(d_1,i),
\end{align*}
where 
$$Pf(d_1,i)=\beta_{i+1},\quad Pf(d_0,d_1)=0.$$
Similar to Lemma \ref{lem:psop_btoda_tau_evo},  we immediately obtain that there exist two kinds of representations for the derivative of $\tau_n$ with respect to $t$ by applying the derivative formula \eqref{der1}, \eqref{der1_odd},  \eqref{der2_1} and  \eqref{der2_2}. 
\begin{lemma} \label{lem:psop_11btoda_tau_evo}
 The $\tau_n$  evolves as follows:
 \begin{subequations}
\begin{align}
&\frac{d}{d{t}}\tau_{2n}=Pf(0,1,\cdots,2n-2,2n)=Pf(d_0,d_1,0,1,\cdots,2n-1), \label{psop_11btoda_tau_evo_even}\\
& \frac{d}{d{t}} \tau_{2n+1}= Pf(d_0,0,1,\cdots,2n-1,2n+1)=Pf(d_1,0,1,\cdots,2n).\label{psop_11btoda_tau_evo_odd}
\end{align}
\end{subequations}
 \end{lemma}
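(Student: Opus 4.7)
The plan is to apply the two pairs of Pfaffian derivative identities already referenced in Lemma~\ref{lem:psop_btoda_tau_evo}, exploiting the fact that in the $q(x)=x$ reduction the two seemingly different derivative formulas for $Pf(i,j)$ and $Pf(d_0,i)$ collapse to the same object and can therefore be applied to the same $\tau_n$.

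First I would record the entry-level derivative relations, which follow by differentiating under the integral sign: $\tfrac{d}{dt}\mu_{i,j}=\mu_{i+1,j}+\mu_{i,j+1}$ and $\tfrac{d}{dt}\beta_i=\beta_{i+1}$. Setting $Pf(d_1,i):=\beta_{i+1}$ and $Pf(d_0,d_1):=0$, and noting that in this reduction $\beta_i\beta_{j+1}-\beta_{i+1}\beta_j = \mu_{i,j+1}+\mu_{i+1,j}$ precisely because of the alternative integral representation $\mu_{i,j}=\int\!\!\int \mathrm{sgn}(\sigma-\eta)\tfrac{d\beta_i}{dt}(\sigma)\tfrac{d\beta_j}{dt}(\eta)\,d\sigma d\eta$ given just above, one obtains the two equivalent expressions
\[
\tfrac{d}{dt}Pf(i,j)=Pf(i+1,j)+Pf(i,j+1)=Pf(d_0,d_1,i,j),\qquad \tfrac{d}{dt}Pf(d_0,i)=Pf(d_0,i+1)=Pf(d_1,i).
\]

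Next, for the even case, I would apply formula \eqref{der1} (the derivative rule for a Pfaffian $Pf(0,1,\ldots,2n-1)$ whose entries satisfy the shift rule $\partial Pf(i,j)=Pf(i+1,j)+Pf(i,j+1)$) directly to $\tau_{2n}$; a telescoping of shift indices leaves only the boundary term $Pf(0,1,\ldots,2n-2,2n)$. Then I would redo the same computation, but this time invoking \eqref{der2_1} (the derivative rule appropriate to entries of the form $\partial Pf(i,j)=Pf(d_0,d_1,i,j)$), which yields $Pf(d_0,d_1,0,1,\ldots,2n-1)$. The two results must be equal since they are both $\tfrac{d}{dt}\tau_{2n}$, which establishes \eqref{psop_11btoda_tau_evo_even}. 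The odd case is entirely parallel: apply \eqref{der1_odd} to $\tau_{2n+1}=Pf(d_0,0,1,\ldots,2n)$ to get the shift representation $Pf(d_0,0,1,\ldots,2n-1,2n+1)$, and apply \eqref{der2_2} to get the expanded representation $Pf(d_1,0,1,\ldots,2n)$.

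There is essentially no obstacle here because the argument is a direct specialization of Lemma~\ref{lem:psop_btoda_tau_evo} to $q(x)=x$; the only thing requiring a brief verification is the key coincidence that the two derivative rules for $Pf(i,j)$ (respectively $Pf(d_0,i)$) really do produce the same function of $t$ in this reduction, which is exactly the content of the identity $\tfrac{d\mu_{i,j}}{dt}=\tfrac{d\beta_i}{dt}\beta_j-\beta_i\tfrac{d\beta_j}{dt}$ recorded at the start of the subsection. Once that is in place, the two formulas in each of \eqref{psop_11btoda_tau_evo_even} and \eqref{psop_11btoda_tau_evo_odd} are just two equally valid expansions of the same derivative.
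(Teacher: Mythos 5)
Your proof is correct and follows essentially the same route as the paper: the paper likewise records that in this reduction $\tfrac{d}{dt}Pf(i,j)=Pf(i+1,j)+Pf(i,j+1)=Pf(d_0,d_1,i,j)$ and $\tfrac{d}{dt}Pf(d_0,i)=Pf(d_0,i+1)=Pf(d_1,i)$, and then simply invokes the Wronski-type formulae \eqref{der1}, \eqref{der1_odd} and the Gram-type formulae \eqref{der2_1}, \eqref{der2_2} to get the two representations of the same derivative. The only blemish is a sign slip in your parenthetical identity: it should read $\mu_{i+1,j}+\mu_{i,j+1}=\beta_{i+1}\beta_j-\beta_i\beta_{j+1}=Pf(d_0,d_1,i,j)$, consistent with $\tfrac{d\mu_{i,j}}{dt}=\tfrac{d\beta_i}{dt}\beta_j-\beta_i\tfrac{d\beta_j}{dt}$.
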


By using this lemma, we can express the coefficient of the second highest degree of PSOPs \eqref{psop_11btoda} in terms of $\tau_n$ and obtain some nice representations for the derivative with respect to $t$. 
\begin{coro}
The PSOPs in \eqref{psop_11btoda} admit the following expressions:
\begin{align}
P_n(z;t)=z^n-(\frac{d}{d t}\log\tau_n )z^{n-1}+\text{lower order terms}, \label{psop_11btoda_exp_tau}
\end{align}
and evolve with respect to $t$ as:
\begin{align}
&\frac{\partial}{\partial{t}}(\tau_{2n}P_{2n})=Pf(d_0,d_1,0,1,\cdots,2n,z),\quad \frac{\partial}{\partial{t}} (\tau_{2n+1}P_{2n+1})= Pf(d_1,0,1,\cdots,2n+1,z),\label{psop_11btoda_evo_t1}
\end{align}
where 
$$
Pf(d_0,z)=Pf(d_1,z)=0.
$$
\end{coro}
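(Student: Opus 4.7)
The plan is to mirror the arguments given for Corollary \ref{coro:btoda1} in the 2+1 setting, exploiting the observation just recorded that in the 1+1 case the derivative rules
\[
\tfrac{d}{dt}Pf(i,j)=Pf(d_0,d_1,i,j),\qquad \tfrac{d}{dt}Pf(d_0,i)=Pf(d_1,i)
\]
are structurally identical to the $\partial_s$ rules of the 2+1 lattice. This means the two statements in the corollary can be obtained by essentially relabeling the earlier argument, with the auxiliary identities $Pf(i,z)=z^i$ and $Pf(d_0,z)=0$ treated as $t$-independent.

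For \eqref{psop_11btoda_exp_tau}, I would expand the Pfaffians in \eqref{psop_11btoda_even}--\eqref{psop_11btoda_odd} along the $z$-row using $Pf(i,z)=z^i$. The coefficient of $z^n$ picks out the principal sub-Pfaffian $\tau_{2n}$ (resp.\ $\tau_{2n+1}$), which cancels the normalization, giving leading coefficient $1$. The coefficient of $z^{n-1}$ picks out, with the appropriate sign from the Pfaffian expansion, precisely $Pf(0,1,\ldots,2n-2,2n)/\tau_{2n}$ in the even case and $Pf(d_0,0,1,\ldots,2n-1,2n+1)/\tau_{2n+1}$ in the odd case. By Lemma \ref{lem:psop_11btoda_tau_evo}, each of these numerators equals $\tfrac{d}{dt}\tau_n$, so the coefficient is $-\tfrac{d}{dt}\log\tau_n$, as claimed.

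For \eqref{psop_11btoda_evo_t1}, I would differentiate the raw identities $\tau_{2n}P_{2n}=Pf(0,1,\ldots,2n-1,2n,z)$ and $\tau_{2n+1}P_{2n+1}=Pf(d_0,0,1,\ldots,2n,2n+1,z)$ term-by-term in $t$. Since $Pf(i,z)$ and $Pf(d_0,z)$ are $t$-independent while the numeric entries $Pf(i,j)$ and $Pf(d_0,i)$ evolve by the rules above, the derivative formulae \eqref{der2_1} (for the even case, where every entry $Pf(i,j)$ acquires a $Pf(d_0,d_1,i,j)$) and \eqref{der2_2} (for the odd case, which additionally treats the $d_0$ row via $Pf(d_0,i)\mapsto Pf(d_1,i)$) apply directly and produce exactly $Pf(d_0,d_1,0,1,\ldots,2n,z)$ and $Pf(d_1,0,1,\ldots,2n+1,z)$ respectively.

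The main technical obstacle is the sign and index bookkeeping in the odd case: the Pfaffian already carries the label $d_0$, so naively differentiating could spawn a spurious $Pf(d_0,d_1,d_0,\ldots)$-style term. This is exactly where the identity $Pf(d_0,d_1)=0$, combined with the antisymmetry of Pfaffians, causes that term to vanish and leaves the single, clean expression $Pf(d_1,0,1,\ldots,2n+1,z)$. Once this cancellation is verified, both formulae fall out and the proof is complete; no further calculation beyond the Pfaffian derivative formulae is required.
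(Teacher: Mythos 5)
Your proposal is correct and follows essentially the same route as the paper: the paper proves this corollary by referring back to Corollary \ref{coro:btoda1}, whose proof consists precisely of expanding the Pfaffians along the $z$-row (using Lemma \ref{lem:psop_11btoda_tau_evo} to identify the $z^{n-1}$ coefficient) and then applying the Gram-type derivative formulae \eqref{der2_1} and \eqref{der2_2}. Your additional check that the $t$-independent entries $Pf(i,z)$ and $Pf(d_0,z)$ are compatible with those formulae (via $Pf(d_0,d_1,i,z)=0$ and $Pf(d_1,z)=0$) is exactly the bookkeeping the paper leaves implicit.
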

 \begin{proof}
 The proof can be completed by following that for Corollary \eqref{coro:btoda1}.
 \end{proof}
 
 Moreover, by following the proof for  Corollary \eqref{coro:btoda2}, we easily get
 \begin{coro}\label{coro:btoda_evo4}
 There holds the following relationship between $P_{n}(z;t)$ and $P_{n+1}(z;t)$:
 \begin{align}
\tau_{n}^2\frac{\partial}{\partial t}P_{n}=\left( \tau_{n-1}\frac{d}{d t}\tau_{n+1}-\tau_{n+1}\frac{d}{d t}\tau_{n-1}\right)P_{n-1}-\tau_{n+1}\tau_{n-1}\frac{\partial}{\partial t}P_{n-1}. \label{psop_11btoda_rel_adj}
\end{align}
 \end{coro}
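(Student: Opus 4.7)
The plan is to follow verbatim the template of Corollary \ref{coro:btoda2}, replacing the role of $\partial/\partial s$ by $\partial/\partial t$. First, using the product-rule rewriting
$$
\tau_n^2\,\partial_t P_n \;=\; \tau_n\,\partial_t(\tau_n P_n) - (\partial_t\tau_n)(\tau_n P_n),
$$
and the analogous expansion for $P_{n-1}$, I convert every term appearing in \eqref{psop_11btoda_rel_adj} into a Pfaffian. Concretely, the products $\tau_n P_n$ come directly from the definition \eqref{psop_11btoda}, the derivatives $\partial_t(\tau_n P_n)$ come from the Pfaffian representations \eqref{psop_11btoda_evo_t1}, and the derivatives $d\tau_{n\pm 1}/dt$ come from Lemma \ref{lem:psop_11btoda_tau_evo}. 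After clearing the common factors of $\tau$, the assertion is equivalent to a purely algebraic identity among Pfaffians.

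Next, I split by parity and select, from the two equivalent representations supplied by Lemma \ref{lem:psop_11btoda_tau_evo}, whichever one aligns the indexing with a standard Pfaffian identity. In the even case $n=2m$ I take $d\tau_{2m+1}/dt = Pf(d_1,0,\ldots,2m)$ and $d\tau_{2m-1}/dt = Pf(d_1,0,\ldots,2m-2)$, together with the $(d_0,d_1)$-decorated form of $\partial_t(\tau_{2m}P_{2m})$; the six resulting Pfaffian products then match term-for-term with the four-term identity \eqref{pf2}. In the odd case $n=2m+1$ the dual choice (using the $Pf(d_0,d_1,\ldots)$ form for the even-indexed tau's and the shifted-index form for the odd one) reduces the equation to \eqref{pf1}, exactly as in the proof of Corollary \ref{coro:btoda2}.

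The only substantive obstacle is combinatorial bookkeeping: in each parity case one must pick the correct representative from Lemma \ref{lem:psop_11btoda_tau_evo} so that the special labels $d_0, d_1, z$ and the ordered block $0,1,\ldots,2m$ (or $2m+1$) occupy the positions demanded by \eqref{pf1}--\eqref{pf2}, with consistent signs from the expansion of Pfaffians along the special rows. No new analytic input is required; the argument packages the $t$-derivative of the 1+1 reduction into the same Pfaffian-identity framework used previously for the $s$-derivative in the 2+1 setting.
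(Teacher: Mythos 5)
Your proposal is correct and follows essentially the same route as the paper, whose proof is literally ``by following the proof for Corollary \ref{coro:btoda2}'': rewrite each side via $\tau_n\,\partial_t(\tau_nP_n)-(\partial_t\tau_n)(\tau_nP_n)$ and $(\partial_t\tau_{n+1})(\tau_{n-1}P_{n-1})-\tau_{n+1}\,\partial_t(\tau_{n-1}P_{n-1})$, express every factor as a Pfaffian through \eqref{psop_11btoda}, \eqref{psop_11btoda_evo_t1} and Lemma \ref{lem:psop_11btoda_tau_evo}, and recognize the two parity cases as instances of the Pfaffian identities \eqref{pf1} and \eqref{pf2}. The only slip is that your parity-to-identity assignment is reversed --- the even case, with the four special labels $d_0,d_1,2m,z$ against the body $0,\dots,2m-1$, is \eqref{pf1}, while the odd case, with the three special labels $d_0,d_1,z$, is \eqref{pf2} --- but this mislabeling does not affect the validity of the argument.
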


Inserting \eqref{psop_11btoda_exp_tau} into  \eqref{psop_11btoda_rel_adj} and comparing the coefficients of the highest degree in $z$ on the two sides, we get a bilinear ODE system in 1+1 dimension, that is,
 \begin{align}
\left(\frac{d^2}{d t^2}\tau_n\right) \tau_n-\left( \frac{d}{d t}\tau_n\right)^2=\left(\frac{d}{d t}\tau_{n-1}\right)\tau_{n+1}-\left(\frac{d}{d t}\tau_{n+1}\right)\tau_{n-1}. \label{eq:bi11btoda}
 \end{align}
 
If we introduce the variables
\begin{align*}
u_n=\frac{\tau_{n+1}\tau_{n-1}}{\tau_n^2},\qquad b_n=\frac{d}{d t}\log\tau_n,
\end{align*}
then $u_n$ and $b_n$ satisfy
\begin{align}
&\frac{d}{d t} u_n=u_n(b_{n+1}-2b_n+b_{n-1}),\qquad \frac{d}{d t} b_n=u_n(b_{n-1}-b_{n+1}), \quad  n=1,2,\cdots,\label{eq:11btoda}
\end{align}
which we call `` Toda lattice of BKP type in 1+1 dimension '' (1+1 B-Toda lattice).

In summary, we can conclude that the following theorem.
\begin{theorem}
The 1+1 B-Toda lattice  \eqref{eq:11btoda} admits the following tau-function representation
\begin{align*}
u_n=\frac{\tau_{n+1}\tau_{n-1}}{\tau_n^2},\qquad b_n=\frac{d}{d t}\log\tau_n,
\end{align*}
which enjoys the explicit form in terms of Pfaffians:
$$
\tau_{2n}(t)=Pf(0,1,\cdots,2n-1),\qquad \tau_{2n+1}(t) = Pf(d_0,0,1,\cdots,2n),
$$
with the Pfaffian entries satisfying
\begin{align*}
&Pf(i,j)=\mu_{i,j},\qquad Pf(d_0,i)=\beta_i ,\qquad Pf(d_0,d_1)=0,\\
&\frac{d}{d t}Pf(d_0,i)=Pf(d_0,i+1)=Pf(d_1,i),\\
&\frac{d}{d t}Pf(i,j)=Pf(i+1,j)+Pf(i,j+1)=Pf(d_0,d_1,i,j).
\end{align*}
\end{theorem}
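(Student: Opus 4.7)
The plan is to observe that the theorem merely packages the work already done in Section \ref{subsec:11btoda}: the Pfaffian expression for $\tau_n$ and the evolution of its entries are literally the definitions underlying the PSOP sequence \eqref{psop_11btoda}, so all that genuinely needs checking is that the pair $(u_n, b_n)=(\tau_{n+1}\tau_{n-1}/\tau_n^2,\ (d/dt)\log\tau_n)$ satisfies the system \eqref{eq:11btoda}. The natural route is to first establish the bilinear identity \eqref{eq:bi11btoda} at the Pfaffian level and then pass to the nonlinear form by logarithmic differentiation.

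For the bilinear step I would invoke Corollary \ref{coro:btoda_evo4}: substituting the expansion \eqref{psop_11btoda_exp_tau} (whose $z^{n-1}$ coefficient is $-\partial_t\log\tau_n$ by Lemma \ref{lem:psop_11btoda_tau_evo}) into the adjacent relation \eqref{psop_11btoda_rel_adj} and extracting the coefficient of $z^{n-1}$ produces \eqref{eq:bi11btoda} immediately. A self-contained alternative is to differentiate the Pfaffian form of $\partial_t\tau_n$ given by \eqref{psop_11btoda_tau_evo_even}--\eqref{psop_11btoda_tau_evo_odd} once more, obtaining $\partial_t^2 \tau_n$ as a four-term Pfaffian, and then match $(\partial_t^2\tau_n)\tau_n-(\partial_t\tau_n)^2$ with $(\partial_t\tau_{n-1})\tau_{n+1}-(\partial_t\tau_{n+1})\tau_{n-1}$ via the Pfaffian identities \eqref{pf1} and \eqref{pf2}, treating the even and odd cases separately because one involves the $d_0$ index and the other does not.

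The reduction to \eqref{eq:11btoda} is then essentially bookkeeping. Direct logarithmic differentiation gives
\begin{align*}
\frac{d}{dt}\log u_n &= \frac{d}{dt}\log\tau_{n+1}+\frac{d}{dt}\log\tau_{n-1}-2\frac{d}{dt}\log\tau_n = b_{n+1}-2b_n+b_{n-1},
\end{align*}
which is the first equation of \eqref{eq:11btoda}. Dividing the bilinear identity \eqref{eq:bi11btoda} by $\tau_n^2$ and using $(\partial_t^2\tau_n)\tau_n-(\partial_t\tau_n)^2=\tau_n^2\,\partial_t^2\log\tau_n=\tau_n^2\,\partial_t b_n$ on the left, while on the right factoring out $\tau_{n+1}\tau_{n-1}/\tau_n^2=u_n$, yields $\partial_t b_n = u_n(b_{n-1}-b_{n+1})$, which is the second equation of \eqref{eq:11btoda}.

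The only step that carries any real content is the Pfaffian identity establishing \eqref{eq:bi11btoda}; this is the main obstacle, but it has already been handled in the preceding subsection via Corollary \ref{coro:btoda_evo4}, so the present proof reduces to citing that derivation and performing the change of variables above. The explicit Pfaffian representation of $\tau_n$ and the evolution rules for its entries listed in the theorem statement require no further verification, being simply a restatement of the PSOP construction given at the start of Section \ref{subsec:11btoda} together with Lemma \ref{lem:psop_11btoda_tau_evo}.
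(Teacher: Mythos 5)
Your proposal is correct and follows essentially the same route as the paper, which states this theorem as a summary of the preceding derivation: the Pfaffian data are definitional, the bilinear identity \eqref{eq:bi11btoda} is obtained by substituting the expansion \eqref{psop_11btoda_exp_tau} into the adjacent relation \eqref{psop_11btoda_rel_adj} of Corollary \ref{coro:btoda_evo4} and comparing top-degree coefficients, and the passage to \eqref{eq:11btoda} is the same change of variables. Your logarithmic-differentiation bookkeeping for both equations checks out.
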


\begin{remark} 
Applying the de Brujin's formulae \eqref{id_deBr1}-\eqref{id_deBr2} and the Schur's Pfaffian identities \eqref{id_schur}-\eqref{id_schur_odd}, it is not difficult to see that the tau-function of 1+1 B-Toda lattice owns a matrix integral solution
\begin{align*}
\tau_n(t)=(-1)^{\lfloor\frac{n}{2}\rfloor}\idotsint\limits_{0<x_1<\cdots<x_n<\infty}\prod_{1\leq i<j\leq n}\frac{(x_j-x_i)^2}{(x_i+x_j)}{\prod_{1\leq j\leq n}}e^{tx_j}\rho(x_j)d x_1\cdots dx_n,
\end{align*}
where $\lfloor a\rfloor$ denotes the greatest integer less than or equal to $a$.
As is shown in \cite{forrester2016relating}, this matrix integral appears as the partition function for the Bures ensemble. Therefore, the tau-function of B-Toda lattice could be viewed as a t-deformation of the partition function for Bures ensemble, just as stated in \cite{hu2017partition}.
\end{remark}
\begin{remark}
In \cite{chang2017application}, a finite truncation of this 1+1 dimensional B-Toda lattice has been introduced and was shown to enjoy an intriguing connection with  multipeakons of the Novikov equation. 
\end{remark}

We end this section by providing a Lax pair of 1+1 B-Toda lattice \eqref{eq:11btoda}, before which we demonstrate there exists a four-term recurrence for the PSOPs \eqref{psop_11btoda} yielding the Lax matrix. 
First, we have
 \begin{lemma} \label{lem:psop_11btoda_evo_t2}
 There holds 
 \begin{subequations}\label{psop_11btoda_evo_t2}
 \begin{align}
& (z+\frac{d}{dt})(\tau_{2n}P_{2n})=Pf(0,1,\cdots,2n-1,2n+1,z),\\
 & (z+\frac{d}{dt})(\tau_{2n+1}P_{2n+1})=Pf(d_0,0,1,\cdots,2n,2n+2,z).
 \end{align}
 \end{subequations}
 \end{lemma}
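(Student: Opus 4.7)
The plan is to expand each side of the two claimed identities as a polynomial in $z$ by applying the Pfaffian cofactor expansion along the $z$-entry, and then match coefficients of powers of $z$. For the even case, using $Pf(i,z)=z^i$, one obtains
\begin{align*}
\tau_{2n}P_{2n}=Pf(0,1,\ldots,2n,z)=\sum_{j=0}^{2n}(-1)^j z^j R_j,\qquad R_j:=Pf(0,1,\ldots,\hat{j},\ldots,2n),
\end{align*}
and analogously
\begin{align*}
Pf(0,1,\ldots,2n-1,2n+1,z)=z^{2n+1}\tau_{2n}+\sum_{m=0}^{2n-1}(-1)^m z^m S_m,\qquad S_m:=Pf(0,1,\ldots,\hat{m},\ldots,2n-1,2n+1).
\end{align*}
Applying $z+\tfrac{d}{dt}$ term by term to the first expansion and matching coefficients of $z^M$ against the second reduces the claim to the trio of derivative identities
\begin{align*}
\frac{dR_{2n}}{dt}=R_{2n-1},\qquad \frac{dR_0}{dt}=S_0,\qquad \frac{dR_m}{dt}=R_{m-1}+S_m\quad(1\le m\le 2n-1),
\end{align*}
together with the trivial matching of the $z^{2n+1}$ coefficient ($R_{2n}=\tau_{2n}$).

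Each of these derivative identities is an instance of a \emph{derivative formula with a gap} that extends \eqref{der1}. Indeed, $R_m$ is a Pfaffian whose indices form the consecutive set $\{0,1,\ldots,2n\}$ with a single hole at $m$; differentiating via the pair-expansion and the shift rule $\tfrac{d}{dt}Pf(i,j)=Pf(i+1,j)+Pf(i,j+1)$ produces a sum of Pfaffians with one index shifted up by one. All shifts whose image still lies inside the index set cancel pairwise (yielding either $Pf(a,a)=0$ or identical cofactors with opposite signs after reordering), leaving only two boundary contributions: the shift $m-1\mapsto m$ that fills the gap (yielding $R_{m-1}$) and the shift $2n\mapsto 2n+1$ at the top (yielding $S_m$). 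The endpoint cases $m=0$ and $m=2n$ retain only the top-shift or only the gap-shift, respectively; the $m=2n$ identity is precisely \eqref{psop_11btoda_tau_evo_even}, and the $m=0$ identity is \eqref{der1} applied to the shifted consecutive Pfaffian $Pf(1,2,\ldots,2n)$.

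The odd case is handled in parallel. Expand $\tau_{2n+1}P_{2n+1}=Pf(d_0,0,1,\ldots,2n+1,z)$ and the target $Pf(d_0,0,1,\ldots,2n,2n+2,z)$ along $z$ (using $Pf(d_0,z)=0$), and match coefficients; the claim reduces to the analogous derivative identities
\begin{align*}
\frac{dU_0}{dt}=V_0,\qquad \frac{dU_{2n+1}}{dt}=U_{2n},\qquad \frac{dU_m}{dt}=U_{m-1}+V_m\quad(1\le m\le 2n),
\end{align*}
with $U_m:=Pf(d_0,0,\ldots,\hat{m},\ldots,2n+1)$ and $V_m:=Pf(d_0,0,\ldots,\hat{m},\ldots,2n,2n+2)$. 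These follow from the same telescoping, now combined with the $d_0$-shift rule $\tfrac{d}{dt}Pf(d_0,i)=Pf(d_0,i+1)$. The main obstacle is the clean combinatorial verification of the gap formula (a modest extension of \eqref{der1}/\eqref{der1_odd}); once that is in hand, the rest is bookkeeping of signs and powers of $z$.
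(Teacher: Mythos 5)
Your proposal is correct and follows essentially the same route as the paper: the paper likewise expands along the $z$-column, applies the shift rule \eqref{der1_gen} so that only the gap-filling and top-index shifts survive (all interior shifts vanish by repeated indices), and telescopes the resulting sum, which is exactly your coefficient-matching reorganized. The derivative identities you isolate ($\tfrac{d}{dt}R_m=R_{m-1}+S_m$ and their $d_0$-analogues) are precisely the surviving terms in the paper's computation, so no new ingredient is needed.
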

 \begin{proof}
 Let's give a detailed proof for the second equality, i.e. the odd case.  The even case is similar. \footnote{Actually, a proof can be found in \cite[Lemma 3.6]{adler1999pfaff}.}
 \begin{align*}
 & (z+\frac{d}{dt})(\tau_{2n+1}P_{2n+1})\\
=& \sum_{j=0}^{2n+1}(-1)^{j+1}z^{j+1}Pf(d_0,0,1,\cdots,\hat j,\cdots, 2n+1)+\sum_{j=0}^{2n+1}(-1)^{j+1}z^{j+1}\frac{d}{dt}Pf(d_0,0,1,\cdots,\hat j,\cdots, 2n+1)\\
= & \sum_{j=0}^{2n+1}(-1)^{j+1}z^{j+1}Pf(d_0,0,1,\cdots,\hat j,\cdots, 2n+1)+Pf(d_0,1,2,\cdots,2n,2n+2)+\\
& \sum_{j=1}^{2n}(-1)^{j+1}z^{j}\left(Pf(d_0,0,1,\cdots,\hat j-1,\cdots, 2n,2n+1)+Pf(d_0,0,1,\cdots,\hat j,\cdots, 2n,2n+2)\right)+\\
&z^{2n+1}Pf(d_0,0,1,\cdots,2n-1,2n+1)\\
=&z^{2n+2}Pf(d_0,0,1,\cdots,2n)+\sum_{j=0}^{2n}(-1)^{j+1}z^jPf(d_0,0,1,\cdots,\hat j,\cdots 2n,2n+2)\\
=&Pf(d_0,0,1,\cdots,2n,2n+2,z),
 \end{align*}
 where the derivative formula \eqref{der1_gen} is employed in the second equality, thus the proof is completed.
 
 \end{proof}
 
 From this lemma and by employing the Pfaffian identities, we obtain
 \begin{coro} \label{coro:psop_11btoda_evo_t3}
 There holds
 \begin{align}\label{psop_11btoda_evo_t3}
 (z+\frac{d}{d t})P_{n}=P_{n+1}+\left(\frac{d}{d t}\log\frac{\tau_{n+1}}{\tau_{n}}\right)P_{n}-\frac{\tau_{n-1}\tau_{n+2}}{\tau_{n}\tau_{n+1}}P_{n-1}.
 \end{align}
 \end{coro}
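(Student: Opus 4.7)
The plan is to reduce the identity to a standard Pfaffian (Plücker-type) relation by rewriting every term as a Pfaffian, in the same spirit as the proofs of Corollaries \ref{coro:btoda2} and the first part of Section \ref{subsec:11btoda}. First I would clear denominators: multiplying the claimed identity by $\tau_{n}\tau_{n+1}$ and using $(z+\tfrac{d}{dt})(\tau_n P_n)=\tau_n(z+\tfrac{d}{dt})P_n+(\tfrac{d}{dt}\tau_n)P_n$, the statement is equivalent to
\begin{align*}
\tau_{n+1}\,(z+\tfrac{d}{dt})(\tau_n P_n)
 = \tau_n\tau_{n+1}P_{n+1} + \Bigl(\tau_n\tfrac{d}{dt}\tau_{n+1}\Bigr) P_n - \tau_{n-1}\tau_{n+2}P_{n-1}.
\end{align*}

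Next, I would substitute the Pfaffian expressions. Split into two cases according to the parity of $n$. For $n=2m$ (even), using \eqref{psop_11btoda}, Lemma \ref{lem:psop_11btoda_tau_evo}, and Lemma \ref{lem:psop_11btoda_evo_t2}, every factor is a Pfaffian: $\tau_n P_n=Pf(0,\dots,2m,z)$, $(z+\tfrac{d}{dt})(\tau_n P_n)=Pf(0,\dots,2m-1,2m+1,z)$, $\tau_{n+1}=Pf(d_0,0,\dots,2m)$, $\tau_n\tfrac{d}{dt}\tau_{n+1}-\tau_{n+1}\tfrac{d}{dt}\tau_n$ combines into Pfaffian pairs using $\tfrac{d}{dt}\tau_{2m+1}=Pf(d_1,0,\dots,2m)$, $\tfrac{d}{dt}\tau_{2m}=Pf(d_0,d_1,0,\dots,2m-1)$, and $\tau_{n-1}=Pf(d_0,0,\dots,2m-2)$, $\tau_{n+2}=Pf(0,\dots,2m+1)$, $\tau_{n+1}P_{n+1}=Pf(d_0,0,\dots,2m+1,z)$, $\tau_{n-1}P_{n-1}=Pf(d_0,0,\dots,2m-1,z)$. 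After collecting, the identity becomes a single Pfaffian identity on the expanded index list $\{d_0,d_1,0,1,\dots,2m+1,z\}$, which I expect to be exactly the four-term Pfaffian identity \eqref{pf2} applied with the four distinguished labels $\{d_0,d_1,2m+1,z\}$ and the remaining labels $\{0,1,\dots,2m\}$ as the common core. The odd case $n=2m+1$ is parallel: one uses the same dictionary with $\tau_n P_n=Pf(d_0,0,\dots,2m+1,z)$ and $(z+\tfrac{d}{dt})(\tau_n P_n)=Pf(d_0,0,\dots,2m,2m+2,z)$, and the resulting identity should again match \eqref{pf1} (or \eqref{pf2}) with an appropriate choice of the four ``extra'' indices grafted onto a common core.

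The truly mechanical part is expanding the Pfaffian coefficient products; the part that requires care is the \emph{bookkeeping of signs} and the choice of which four indices to isolate in the Plücker-type identity so that the core list matches on every term. That is the step I expect to be the main obstacle, since $d_0$, $d_1$, the sliding index $2m+1$ (or $2m+2$), and the formal indeterminate $z$ all appear in different positions in different Pfaffians, and getting them into a common canonical order will produce several sign changes that must cancel correctly.

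Once the identity is cast in this form it collapses to a single instance of the Pfaffian identities \eqref{pf1}--\eqref{pf2} recorded in the appendix, completing the proof. As a sanity check I would also verify the claim by comparing the coefficient of $z^{n+1}$ on both sides (which is automatic, both equal $1$) and the coefficient of $z^{n}$, which must reproduce $\tfrac{d}{dt}\log\tau_{n+1}-\tfrac{d}{dt}\log\tau_n$ using \eqref{psop_11btoda_exp_tau}; this provides an independent check that the coefficients on the right-hand side have been correctly identified before committing to the Pfaffian-identity computation.
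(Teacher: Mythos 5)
Your overall route is the same as the paper's: clear denominators, translate every factor into a Pfaffian via Lemma \ref{lem:psop_11btoda_evo_t2} and Lemma \ref{lem:psop_11btoda_tau_evo}, and recognize the result as one of the bilinear Pfaffian identities \eqref{pf1}--\eqref{pf2}. Your cleared-denominator form
\begin{align*}
\tau_{n+1}\,(z+\tfrac{d}{dt})(\tau_n P_n)
 = \tau_n\tau_{n+1}P_{n+1} + \bigl(\tau_n\tfrac{d}{dt}\tau_{n+1}\bigr) P_n - \tau_{n-1}\tau_{n+2}P_{n-1}
\end{align*}
is correct and is precisely what the two identities displayed in the paper's proof encode. (A small internal inconsistency: after this reduction there is no $\tau_{n+1}\tfrac{d}{dt}\tau_{n}$ term left, so the combination $\tau_n\tfrac{d}{dt}\tau_{n+1}-\tau_{n+1}\tfrac{d}{dt}\tau_n$ you mention in the dictionary paragraph should not appear.)

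The concrete misstep is in the even case. If you represent $\tfrac{d}{dt}\tau_{2m+1}$ as $Pf(d_1,0,\dots,2m)$ and aim for a Pl\"ucker-type identity with distinguished labels $\{d_0,d_1,2m+1,z\}$ on the core $\{0,\dots,2m\}$, the four products do not live on a common total index set: the term $Pf(d_1,0,\dots,2m)\,Pf(0,\dots,2m,z)$ contains $d_1$ but not $2m+1$, while $Pf(d_0,0,\dots,2m)\,Pf(0,\dots,2m-1,2m+1,z)$ contains $2m+1$ but not $d_1$, so no single instance of \eqref{pf1} or \eqref{pf2} applies and the computation stalls. The index $d_1$ plays no role in this corollary at all. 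The correct move is to take the \emph{other} representation from Lemma \ref{lem:psop_11btoda_tau_evo}, namely $\tfrac{d}{dt}\tau_{2m+1}=Pf(d_0,0,\dots,2m-1,2m+1)$, which puts all four terms on the common core $\{0,\dots,2m-1\}$ with distinguished labels $\{d_0,2m,2m+1,z\}$; this is exactly \eqref{pf1}. Likewise, in the odd case one uses $\tfrac{d}{dt}\tau_{2m+2}=Pf(0,\dots,2m,2m+2)$, core $\{0,\dots,2m\}$ and extras $\{d_0,2m+1,2m+2,z\}$, which is \eqref{pf2}. With that correction your argument coincides with the paper's proof.
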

 \begin{proof}
 The required Pfaffian identities are
\begin{align*}
&Pf(d_0,0,\cdots,2n+1,z)Pf(0,\cdots,2n-1)=Pf(d_0,0,\cdots,2n)Pf(0,\cdots,2n-1,2n+1,z)\\
&\ -Pf(d_0,0,\cdots,2n-1,2n+1)Pf(0,\cdots,2n,z)+Pf(d_0,0,\cdots,2n-1,z)Pf(0,\cdots,2n+1),\\
&Pf(d_0,0,\cdots,2n+2)Pf(0,\cdots,2n,z)=Pf(0,\cdots,2n+2,z)Pf(d_0,0,\cdots,2n)\\
&\ -Pf(0,\cdots,2n,2n+1)Pf(d_0,0,\cdots,2n,2n+2,z)+Pf(0,\cdots,2n,2n+2)Pf(d_0,0,\cdots,2n+1,z),
\end{align*}
by using which and Lemma \ref{lem:psop_11btoda_evo_t2}, one can get the conclusion.
 \end{proof}

Combining the results in Corollary \ref{coro:btoda_evo4} and \ref{coro:psop_11btoda_evo_t3}, we finally get
\begin{theorem}\label{psop_4term}
The PSOPs \eqref{psop_11btoda} satisfy a four-term recurrence relationship
\begin{align}\label{fourterm1}
z(P_n+u_nP_{n-1})=P_{n+1}+(b_{n+1}-b_{n}+u_n)P_n-u_n(b_{n+1}-b_{n}+u_{n+1})P_{n-1}-u_n^2u_{n-1}P_{n-2},
\end{align}
where $u_n=\frac{\tau_{n+1}\tau_{n-1}}{\tau_{n}^2}$ and $b_n=\frac{d}{d t}\log \tau_n$.
\end{theorem}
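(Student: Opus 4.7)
The plan is to eliminate the $t$-derivatives from Corollaries \ref{coro:btoda_evo4} and \ref{coro:psop_11btoda_evo_t3} and combine them so that $z$ appears multiplied only against $P_n$ and $P_{n-1}$. All terms in the claimed four-term recurrence are expressible in $u_n, b_n$, so the essential task is algebraic bookkeeping.

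First I would rewrite Corollary \ref{coro:btoda_evo4} in terms of the dependent variables of the 1+1 B-Toda lattice. Dividing through by $\tau_n^2$ and using $u_n=\tau_{n+1}\tau_{n-1}/\tau_n^2$ and $b_n=(d/dt)\log\tau_n$, the relation becomes
\begin{equation}\label{eq:plan1}
\frac{d}{dt}P_n \;=\; u_n(b_{n+1}-b_{n-1})P_{n-1} \;-\; u_n\frac{d}{dt}P_{n-1}.
\end{equation}
Simultaneously I would simplify Corollary \ref{coro:psop_11btoda_evo_t3}: the coefficient of $P_{n-1}$ there is $\tau_{n-1}\tau_{n+2}/(\tau_n\tau_{n+1})$, which factors as $u_n u_{n+1}$. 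Hence
\begin{equation}\label{eq:plan2}
\Bigl(z+\frac{d}{dt}\Bigr)P_n \;=\; P_{n+1}+(b_{n+1}-b_n)P_n - u_n u_{n+1}P_{n-1}.
\end{equation}

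Next I would subtract \eqref{eq:plan1} from \eqref{eq:plan2} to isolate $zP_n$:
\begin{equation}\label{eq:plan3}
zP_n \;=\; P_{n+1}+(b_{n+1}-b_n)P_n - u_n u_{n+1}P_{n-1} - u_n(b_{n+1}-b_{n-1})P_{n-1} + u_n\frac{d}{dt}P_{n-1}.
\end{equation}
The leftover $\frac{d}{dt}P_{n-1}$ would then be eliminated by applying \eqref{eq:plan2} at index $n-1$, which gives
\begin{equation}\label{eq:plan4}
\frac{d}{dt}P_{n-1} \;=\; -zP_{n-1}+P_n+(b_n-b_{n-1})P_{n-1}-u_{n-1}u_n P_{n-2}.
\end{equation}
Substituting \eqref{eq:plan4} into \eqref{eq:plan3} brings a $-u_n z P_{n-1}$ to the left-hand side (producing the $z(P_n+u_nP_{n-1})$ expected), a $u_n P_n$ term on the right, a $-u_n^2 u_{n-1}P_{n-2}$ term, and a combined $P_{n-1}$ coefficient.

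The only real calculation is to check that the $P_{n-1}$ coefficient collapses. That coefficient is
\[
-u_n u_{n+1}-u_n(b_{n+1}-b_{n-1})+u_n(b_n-b_{n-1}) \;=\; -u_n\bigl(b_{n+1}-b_n+u_{n+1}\bigr),
\]
because $-(b_{n+1}-b_{n-1})+(b_n-b_{n-1})=-(b_{n+1}-b_n)$. Likewise the $P_n$ coefficient becomes $b_{n+1}-b_n+u_n$. Collecting everything yields precisely \eqref{fourterm1}. The main (and indeed only) obstacle is keeping track of the two $b$-differences so that the miraculous cancellation above is visible; apart from this the argument is pure substitution of the two corollaries together with the identity $\tau_{n-1}\tau_{n+2}/(\tau_n\tau_{n+1})=u_n u_{n+1}$.
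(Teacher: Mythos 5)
Your computation is correct and follows exactly the paper's intended route: the paper's proof is the one-line remark that one eliminates $\frac{d}{dt}P_n$ and $\frac{d}{dt}P_{n-1}$ from \eqref{psop_11btoda_rel_adj} by means of \eqref{psop_11btoda_evo_t3}, and you have carried out precisely that elimination (including the correct identifications $\tau_{n-1}\tau_{n+2}/(\tau_n\tau_{n+1})=u_nu_{n+1}$ and the cancellation of the $b$-differences in the $P_{n-1}$ coefficient). No gaps.
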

\begin{proof}
This is a consequence of a straightforward computation of eliminating the terms including the derivative term $\frac{d}{dt}P_n$ and $\frac{d}{dt}P_{n-1}$ in \eqref{psop_11btoda_rel_adj}   by using \eqref{psop_11btoda_evo_t3}.
\end{proof}

Now we are ready to present the Lax formula. Rewrite \eqref{fourterm1} and \eqref{psop_11btoda_rel_adj} as:
$$z \Phi=L\Phi,\qquad \Phi_t=B\Phi,$$
where 
\begin{align*}
&\qquad\qquad\qquad\Phi=(P_0,P_1,\cdots)^\top,\qquad L=L_1^{-1}L_2,\qquad B=L_1^{-1}B_2,\\
&L_1=\left(\begin{array}{cccccc}
1&&\\
u_1&1&\\
&u_2&1\\
&&\ddots&\ddots
\end{array}
\right),\quad
B_2=\left(\begin{array}{cccccc}
0&&&&\\
u_1b_2&0&&&\\
&u_2(b_3-b_1)&0\\
&&\ddots&\ddots
\end{array}
\right),\\
&
L_2=\left(\begin{array}{cccccc}
b_1&1&&&\\
u_1(b_1-b_2-u_2)&b_2-b_1+u_1&1&&\\
-u_2^2u_1&u_2(b_2-b_3-u_3)&b_3-b_2+u_2&1\\
&\ddots&\ddots&\ddots&\ddots
\end{array}
\right).
\end{align*}
The compatibility condition $\dot L=[B,L]$ gives the 1+1 B-toda lattice \eqref{eq:11btoda} with the boundary condition $u_0=b_0=0$.

 \subsubsection{A semi-discrete B-Toda lattice in 1+2 dimension} \label{subsec:12btoda}
 In this subsection, we plan to discretize the continuous variable ``$s$'' for the 2+1 B-Toda lattice to obtain a integrable lattice in 1+2 dimension. To this end,
 we will deal with the special skew-symmetric integral  kernel $\omega(x,y)=\frac{x-y}{xy+x+y}$ and the deformation on the weight $\rho(x;t,k)dx=(1+x)^k\exp(tx)\rho(x)dx$  leading to the moments $\mu_{i,j}^k(t)$ and $\beta_i^k(t)$ depending on $t$:
\begin{align*}
&\mu_{i,j}^k(t)=\iint_{\mathbb{R}_+^2}x^{i}y^{j}\frac{x-y}{xy+x+y}(1+x)^k(1+y)^k\exp(t(x+y))\rho(x)\rho(y)dxdy,\\
&\beta_i^k(t)=\int_{\mathbb{R_+}}x^{i}(1+x)^k\exp(tx)\rho(x)dx.
\end{align*}

Define a class of monic PSOPs ($\{P_n(z;t)\}_{n=0}^\infty$):
\begin{subequations}\label{psop_12btoda}
\begin{align} 
&P_{2n}^k(z;t)=\frac{1}{\tau_{2n}}Pf(0,1,\cdots,2n-1,2n,z)^k, \label{psop_12btoda_even}\\
&P_{2n+1}^k(z;t)=\frac{1}{\tau_{2n+1}}Pf(d_0,0,1,\cdots,2n,2n+1,z)^k,\label{psop_12btoda_odd}
\end{align}
\end{subequations}
where
\begin{align}
\tau_{2n}^k(t)\triangleq Pf(0,1,\cdots,2n-1)^k\neq 0,\qquad \tau_{2n+1}^k(t) \triangleq Pf(d_0,0,1,\cdots,2n)^k\neq0, \label{psop_12btoda_tau}
\end{align}
with the Pfaffian entries
\begin{align*}
&Pf(i,j)^k=\mu_{i,j}^k(t),\qquad \ \ Pf(d_0,i)=\beta_i^k(t),\\
&Pf(i,z)^k=z^i,\qquad\qquad\   Pf(d_0,z)^k=0.
\end{align*}
 
 Then it is not difficult to obtain the following interrelation for $\tau_n^k(t)$ and $P_n^k(z;t)$ by using the argument before. 
  \begin{lemma} \label{lem:psop_12btoda_tau_evo}
 The $\tau_n^k$  evolves as follows:
\begin{align}
&\frac{d}{d{t}}\tau_{2n}^k=Pf(0,1,\cdots,2n-2,2n)^k,&& \frac{d}{d{t}} \tau_{2n+1}= Pf(d_0,0,1,\cdots,2n-1,2n+1)^k,\label{psop_12btoda_tau_evo_t}\\ 
&\tau_{2n}^{k+1}=\tau_{2n}^k+Pf(d_0,d_1,0,1,\cdots,2n-1)^k,&&  \tau_{2n+1}^{k+1}=\tau_{2n+1}^k+ Pf(d_1,0,1,\cdots,2n)^k,\label{psop_12btoda_tau_evo_k}
\end{align}
and there hold
\begin{subequations}
\begin{align}
&\frac{d}{d{t}}\tau_{2n}^k=Pf(0,1,\cdots,2n-2,2n)^k,&& \frac{d}{d{t}} \tau_{2n+1}= Pf(d_0,0,1,\cdots,2n-1,2n+1)^k,\label{psop_12btoda_k1}\\ 
&\tau_{2n}^{k+1}=\tau_{2n}^k+Pf(d_0,d_1,0,1,\cdots,2n-1)^k,&&  \tau_{2n+1}^{k+1}=\tau_{2n+1}^k+ Pf(d_1,0,1,\cdots,2n)^k.\label{psop_12btoda_k2}
\end{align}
\end{subequations}
Furthermore, we have 
 \begin{subequations}\label{psop_12btoda_evo_k}
\begin{align}
&\tau_{2n}^{k+1}P_{2n}^{k+1}=\tau_{2n}^{k}P_{2n}^{k}+Pf(d_0,d_1,0,1,\cdots,2n,z)^k,\\
& \tau_{2n+1}^{k+1}P_{2n+1}^{k+1}=\tau_{2n+1}^{k}P_{2n+1}^{k}+ Pf(d_1,0,1,\cdots,2n+1,z)^k.
\end{align}
\end{subequations}
Here 
$$Pf(d_1,i)^k=Pf(d_0,i+1)^k,\qquad Pf(d_0,d_1)^k=0,\qquad Pf(d_0,z)=Pf(d_1,z)=0.$$
 \end{lemma}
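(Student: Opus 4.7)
I plan to organize the proof into three stages that mirror the structure of the statement. First, the $t$-derivative identities \eqref{psop_12btoda_tau_evo_t} (which are repeated verbatim as \eqref{psop_12btoda_k1}) carry over directly from Lemma \ref{lem:psop_btoda_tau_evo}: differentiating under the integral sign gives $\frac{d}{dt}\mu_{i,j}^k=\mu_{i+1,j}^k+\mu_{i,j+1}^k$ and $\frac{d}{dt}\beta_i^k=\beta_{i+1}^k$, to which the Pfaffian derivative formulas \eqref{der1} and \eqref{der1_odd} apply verbatim, with the $k$-index playing no role.

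For the $k$-shift at the $\tau$-level, the decisive ingredient is the algebraic identity
\[
\omega(x,y)(1+x)(1+y)=(x-y)+\omega(x,y),
\]
which follows from $(1+x)(1+y)=(xy+x+y)+1$. Integrating against $x^iy^j(1+x)^k(1+y)^k e^{t(x+y)}\rho(x)\rho(y)$ yields $\mu_{i,j}^{k+1}=\mu_{i,j}^k+\beta_{i+1}^k\beta_j^k-\beta_i^k\beta_{j+1}^k$, while the splitting $(1+x)^{k+1}=(1+x)(1+x)^k$ gives $\beta_i^{k+1}=\beta_i^k+\beta_{i+1}^k$. In Pfaffian-entry language these read $Pf(i,j)^{k+1}=Pf(i,j)^k+Pf(d_0,d_1,i,j)^k$ and $Pf(d_0,i)^{k+1}=Pf(d_0,i)^k+Pf(d_1,i)^k$. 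The even case $\tau_{2n}^{k+1}$ then reduces to a rank-two skew perturbation of the body: writing $u_i=Pf(d_1,i)^k$, $v_i=Pf(d_0,i)^k$, the standard identity $Pf(M+uv^\top-vu^\top)=Pf(M)+Pf(d_0,d_1,0,\ldots,2n-1)^k$ produces the claim.

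The odd case $\tau_{2n+1}^{k+1}$ is the main obstacle, since both the $d_0$-row and the $\mu$-body of the $(2n+2)\times(2n+2)$ matrix change simultaneously. I would expand $Pf(d_0,0,\ldots,2n)^{k+1}$ along the $d_0$-row, split $Pf(d_0,i)^{k+1}=v_i+u_i$ by bilinearity, and apply the even-case perturbation identity to each $2n\times 2n$ body cofactor. The linear-in-$v$ and linear-in-$u$ contributions then reassemble, via $d_0$- and $d_1$-row expansions respectively, to $Pf(d_0,0,\ldots,2n)^k+Pf(d_1,0,\ldots,2n)^k=\tau_{2n+1}^k+Pf(d_1,0,\ldots,2n)^k$. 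The remaining cross-term
\[
\sum_{i=0}^{2n}(-1)^{i+1}(v_i+u_i)\,Pf(d_0,d_1,0,\ldots,\widehat{\,i\,},\ldots,2n)^k
\]
must be shown to vanish; I would identify its two pieces as the first-row expansions of $Pf(d_0,d_0,d_1,0,\ldots,2n)^k$ and $Pf(d_1,d_0,d_1,0,\ldots,2n)^k$, each of which is zero because its defining matrix has a pair of identical rows coming from the repeated label.

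Finally, the PSOP-level relations \eqref{psop_12btoda_evo_k} reduce to exactly the same identities with the extra index $z$ adjoined: since $Pf(d_0,z)=Pf(d_1,z)=0$ makes the $z$-row/column inert under the rank-two skew perturbation, both the even-case identity and the odd-case cross-term cancellation extend without modification to the $z$-bordered Pfaffians, which yields \eqref{psop_12btoda_evo_k} in both parities.
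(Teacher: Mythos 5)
Your proposal is correct and follows essentially the same route as the paper: establish the entry-level relations $\frac{d}{dt}\mu_{i,j}^k=\mu_{i+1,j}^k+\mu_{i,j+1}^k$, $\frac{d}{dt}\beta_i^k=\beta_{i+1}^k$, $\mu_{i,j}^{k+1}=\mu_{i,j}^k+Pf(d_0,d_1,i,j)^k$, $\beta_i^{k+1}=\beta_i^k+\beta_{i+1}^k$, and then apply the Wronski-type derivative formulae \eqref{der1}, \eqref{der1_odd} and the Gram-type addition formulae \eqref{add_g1}--\eqref{add_g2}, including for the $z$-bordered Pfaffians where $Pf(d_0,z)=Pf(d_1,z)=0$ keeps the extra index inert. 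The only difference is that you spell out what the paper merely cites: the factorization $\omega(x,y)(1+x)(1+y)=(x-y)+\omega(x,y)$ behind the moment recursion, and a proof of the addition formulae themselves (rank-two skew perturbation for the even case; $d_0$-row expansion with the repeated-label cross-term vanishing for the odd case), both of which are sound.
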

 \begin{proof} 
\eqref{psop_12btoda_tau_evo_t} follows from
\begin{align*}
&\frac{d}{d{t}}Pf(i,j)^k=Pf(i+1,j)^k+Pf(i,j+1)^k,&& \frac{d}{d{t}} Pf(d_0,i)^k=Pf(d_0,i+1)^k
\end{align*}
and the derivative formulae \eqref{der1}, \eqref{der1_odd}. Noticing that 
$$Pf(i,j)^{k+1}=Pf(i,j)^k+(d_0,d_1,i,j)^k,\qquad Pf(d_0,i)^{k+1}=Pf(d_0,i)^k+Pf(d_1,i)^k,$$
one obtains \eqref{psop_12btoda_tau_evo_k} by use of the addition formula \eqref{add_g1}-\eqref{add_g2}. At last, \eqref{psop_12btoda_evo_k} can be obtained by using the expansion formulae and the addition formula \eqref{add_g1}-\eqref{add_g2}.
 \end{proof}
 
 By expanding the expression \eqref{psop_12btoda} and using \eqref{psop_12btoda_tau_evo_t}, we immediately get
\begin{coro}\label{coro:12btoda1}
The PSOPs in \eqref{psop_12btoda} admit the following expressions:
\begin{align}
P_n^k(z;t)=z^n-(\frac{d}{d t}\log\tau_n^k )z^{n-1}+\text{lower order terms}. \label{psop_12btoda_exp_tau}
\end{align}
\end{coro}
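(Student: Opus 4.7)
The proof of this corollary is a direct expansion argument, closely paralleling the analogous statements in Corollary 3.8 (the semi-discrete 1+2 gLV case) and Corollary 3.16 (the 2+1 B-Toda case). The plan is to expand the Pfaffian definitions of $P_{2n}^k$ and $P_{2n+1}^k$ in \eqref{psop_12btoda} along the extra entry $z$, identify the top two coefficients in the resulting polynomial in $z$, and then invoke the $t$-evolution formulas \eqref{psop_12btoda_tau_evo_t} to rewrite the subleading coefficient.

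Concretely, first I would use the Pfaffian expansion along $z$ together with the Pfaffian entries $Pf(i,z)^k = z^i$ and $Pf(d_0,z)^k = 0$. For the even case this gives
\begin{equation*}
Pf(0,1,\ldots,2n-1,2n,z)^k = \sum_{j=0}^{2n}(-1)^{j} z^{j} Pf(0,1,\ldots,\hat{j},\ldots,2n)^k,
\end{equation*}
from which the $z^{2n}$ coefficient is $Pf(0,1,\ldots,2n-1)^k = \tau_{2n}^k$ and the $z^{2n-1}$ coefficient is $-Pf(0,1,\ldots,2n-2,2n)^k$. Dividing by $\tau_{2n}^k$ one obtains
\begin{equation*}
P_{2n}^k(z;t) = z^{2n} - \frac{Pf(0,1,\ldots,2n-2,2n)^k}{\tau_{2n}^k}\, z^{2n-1} + \text{lower order terms}.
\end{equation*}
The analogous expansion for the odd case, in which the $d_0$ contribution drops out because $Pf(d_0,z)^k=0$, yields
\begin{equation*}
P_{2n+1}^k(z;t) = z^{2n+1} - \frac{Pf(d_0,0,1,\ldots,2n-1,2n+1)^k}{\tau_{2n+1}^k}\, z^{2n} + \text{lower order terms}.
\end{equation*}

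Finally, I would apply \eqref{psop_12btoda_tau_evo_t} from Lemma \ref{lem:psop_12btoda_tau_evo}, which asserts precisely
\begin{equation*}
\frac{d}{dt}\tau_{2n}^k = Pf(0,1,\ldots,2n-2,2n)^k, \qquad \frac{d}{dt}\tau_{2n+1}^k = Pf(d_0,0,1,\ldots,2n-1,2n+1)^k,
\end{equation*}
so that the subleading coefficient in each case becomes exactly $-\frac{d}{dt}\log\tau_n^k$, giving the claimed formula \eqref{psop_12btoda_exp_tau}.

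There is no real obstacle here: the only thing to be careful about is the sign produced by the Pfaffian expansion along $z$ (so that the leading term is monic and the subleading sign is correctly $-1$), and the fact that the $d_0$ position does not produce a spurious contribution in the odd case because $Pf(d_0,z)^k=0$. Both are routine, and the argument is identical in spirit to the proofs of the earlier corollaries, so one could even simply refer to those; I would write out the even-case expansion explicitly and note that the odd case is ``similar.''
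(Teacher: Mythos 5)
Your proposal is correct and follows exactly the paper's (very terse) argument: the paper simply states that the result follows ``by expanding the expression \eqref{psop_12btoda} and using \eqref{psop_12btoda_tau_evo_t},'' which is precisely the Pfaffian expansion along the entry $z$ and the identification of the subleading coefficient with $\frac{d}{dt}\tau_n^k$ that you carry out. Your sign bookkeeping and the observation that the $d_0$ term vanishes because $Pf(d_0,z)^k=0$ are both accurate, so your write-up is just a more explicit version of the same proof.
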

 
 Moreover, we have
 \begin{coro}\label{coro:12btoda2}
 There holds the following relationship between $P_{n}(z;t)$ and $P_{n-1}(z;t)$:
 \begin{align}
\tau_{n}^{k+1}\tau_n^k(P_n^{k+1}-P_n^k)=\tau_{n+1}^{k+1}\tau_{n-1}^{k}P_{n-1}^{k}-\tau_{n+1}^k\tau_{n-1}^{k+1}P_{n-1}^{k+1}. \label{psop_12btoda_rel_adj}
\end{align}
 \end{coro}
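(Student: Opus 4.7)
The plan is to mimic the argument used for Corollary \ref{coro:btoda2}: convert the claimed identity \eqref{psop_12btoda_rel_adj} into a bilinear identity among Pfaffians with uniform superscript $k$, and then recognize that identity as a consequence of the classical Pfaffian relations \eqref{pf1} and \eqref{pf2}. All the conversion tools have just been assembled in Lemma \ref{lem:psop_12btoda_tau_evo}.

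First I would treat the even case $n=2m$. Using the relations
\[
\tau_{2m}^{k+1}P_{2m}^{k+1}-\tau_{2m}^{k}P_{2m}^{k}=Pf(d_0,d_1,0,\ldots,2m,z)^{k},\qquad \tau_{2m}^{k+1}-\tau_{2m}^{k}=Pf(d_0,d_1,0,\ldots,2m-1)^{k}
\]
from \eqref{psop_12btoda_evo_k} and \eqref{psop_12btoda_k2}, together with $\tau_{2m}^{k}P_{2m}^{k}=Pf(0,\ldots,2m,z)^{k}$ and $\tau_{2m}^{k}=Pf(0,\ldots,2m-1)^{k}$, the left-hand side collapses to
\[
\tau_{2m}^{k}\,Pf(d_0,d_1,0,\ldots,2m,z)^{k}-Pf(0,\ldots,2m,z)^{k}\,Pf(d_0,d_1,0,\ldots,2m-1)^{k}.
\]
Similarly, writing $\tau_{2m\pm1}^{k+1}=\tau_{2m\pm1}^{k}+Pf(d_1,0,\ldots,2m\pm 1 \mp 1)^{k}$ and expanding the two products on the right-hand side, the $\tau_{2m+1}^{k}\tau_{2m-1}^{k}P_{2m-1}^{k}$ terms cancel and the remainder becomes
\[
Pf(d_1,0,\ldots,2m)^{k}\,Pf(d_0,0,\ldots,2m-1,z)^{k}-Pf(d_0,0,\ldots,2m)^{k}\,Pf(d_1,0,\ldots,2m-1,z)^{k}.
\]
Thus \eqref{psop_12btoda_rel_adj} reduces in the even case to the identity
\begin{align*}
Pf(d_0,d_1,0,\ldots,2m,z)\,Pf(0,\ldots,2m-1) &= Pf(d_0,d_1,0,\ldots,2m-1)\,Pf(0,\ldots,2m,z)\\
&\quad - Pf(d_0,0,\ldots,2m)\,Pf(d_1,0,\ldots,2m-1,z) \\
&\quad + Pf(d_0,0,\ldots,2m-1,z)\,Pf(d_1,0,\ldots,2m),
\end{align*}
which is precisely the Pfaffian identity \eqref{pf1} applied with distinguished indices $\{d_0,d_1,2m,z\}$ over the core $\{0,1,\ldots,2m-1\}$.

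The odd case $n=2m+1$ proceeds in exactly the same spirit: using the odd-index parts of \eqref{psop_12btoda_evo_k} and \eqref{psop_12btoda_k2} one reduces \eqref{psop_12btoda_rel_adj} to a bilinear Pfaffian identity with one distinguished row/column labelled $d_0$, which is then recognized as an instance of \eqref{pf2}. The only real obstacle I anticipate is bookkeeping of signs when shuffling the indices $d_0,d_1,2m,z$ around the core to match the canonical ordering demanded by \eqref{pf1}--\eqref{pf2}; since each such shuffle moves an index past an even number of entries, the resulting signs are all $+1$ in the even case, and the odd case adds only a single extra transposition that can be tracked uniformly. Once this matching is in place, the identity follows at once.
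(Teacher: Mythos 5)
Your proposal is correct and follows essentially the same route as the paper: use the relations of Lemma \ref{lem:psop_12btoda_tau_evo} to rewrite both sides with a uniform superscript $k$, and recognize the resulting bilinear relations as the Pfaffian identities \eqref{pf1} (even case, distinguished indices $d_0,d_1,2m,z$) and \eqref{pf2} (odd case, distinguished indices $d_0,d_1,z$ over the core $0,\dots,2m$ --- note there are three of them, not just $d_0$ as you phrased it). The paper simply states the two target Pfaffian identities and cites the lemma, whereas you carry out the reduction explicitly; the content is the same.
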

\begin{proof} 
The key is to notice the following Pfaffian identities
\begin{align*}
&Pf(d_0,d_1,0,\cdots,2n-1,2n,z)^kPf(0,\cdots,2n-1)^k=Pf(d_0,d_1,0,\cdots,2n-1)^kPf(0,\cdots,2n-1,2n,z)^k\\
&\ -Pf(d_0,0,\cdots,2n-1,2n)^kPf(d_1,0,\cdots,2n-1,z)^k+Pf(d_0,0,\cdots,2n-1,z)^kPf(d_1,0,\cdots,2n-1,2n)^k,\\
&Pf(d_0,d_1,0,\cdots,2n,z)^kPf(0,\cdots,2n,2n+1)^k=Pf(d_0,d_1,0,\cdots,2n,2n+1)^kPf(0,\cdots,2n,z)^k\\
&\ -Pf(d_1,0,\cdots,2n,2n+1,z)^kPf(d_0,0,\cdots,2n)^k+Pf(d_0,0,\cdots,2n,2n+1,z)^kPf(d_1,0,\cdots,2n)^k.
\end{align*}
Then the conclusion follows by using the relations in Lemma  \ref{lem:psop_12btoda_tau_evo}.
\end{proof}
 
 Inserting \eqref{psop_12btoda_exp_tau} into  \eqref{psop_12btoda_rel_adj} and comparing the coefficients of the highest degree in $z$ on the two sides, we get a bilinear differential-difference equation in 1+2 dimension (one continuous and two discrete variables), that is,
 \begin{align}
\left(\frac{d}{dt}\tau_n^{k+1}\right)\tau_n^k-\left(\frac{d}{dt}\tau_n^{k}\right)\tau_n^{k+1}=\tau_{n+1}^k\tau_{n-1}^{k+1}-\tau_{n+1}^{k+1}\tau_{n-1}^{k}. \label{eq:bi12btoda}
 \end{align}
 \begin{remark} Let
 $$\hat \tau_n(t,s)=\delta^{-\frac{n^2}{2}}\tau_n^{k}(t),\qquad s=k\delta.$$
 In the small limit of $\delta$, one can arrive at the bilinear 2+1 B-Toda lattice \eqref{eq:bi21btoda} from \eqref{eq:bi12btoda}.
 \end{remark}
 
If we introduce the variables
\begin{align*}
r_n^k=\frac{\tau_n^{k+1}}{\tau_n^k},\qquad v_n^k=\frac{\tau_{n+1}^k}{\tau_{n}^{k+1}}, 
\end{align*}
then $r_n^k$ and $v_n^k$ satisfy
\begin{align}
\frac{d}{dt}r_n^k=&\frac{v_n^kr_{n}^k}{v_{n-1}^kr_{n-1}^k}(r_{n-1}^k-r_{n+1}^k),\qquad v_n^kr_{n}^k=r_{n+1}^{k-1}v_n^{k-1}, \quad n,k=0,1,2,\cdots,\label{eq:12btoda}
\end{align}
which we call `` Toda lattice of BKP type in 1+2 dimension '' (1+2 B-Toda lattice).

The above derivation implies the tau-function representation of the 1+2 B-Toda lattice  \eqref{eq:12btoda}.
\begin{theorem}
The 1+2 B-Toda lattice  \eqref{eq:12btoda} admits the following tau-function representation
\begin{align*}
r_n^k=\frac{\tau_n^{k+1}}{\tau_n^k},\qquad v_n^k=\frac{\tau_{n+1}^k}{\tau_{n}^{k+1}}, 
\end{align*}
which enjoys the explicit form in terms of Pfaffians:
$$
\tau_{2n}^k(t)=Pf(0,1,\cdots,2n-1)^k,\qquad \tau_{2n+1}^k(t) = Pf(d_0,0,1,\cdots,2n)^k,
$$
with the Pfaffian entries satisfying
\begin{align*}
&Pf(i,j)^k=\mu_{i,j}^k,\qquad\qquad\quad  Pf(d_0,d_1)^k=0,&& Pf(d_1,i-1)^k=Pf(d_0,i)^k=\beta_i^k,\\
&\frac{d}{d{t}}Pf(i,j)^k=Pf(i+1,j)^k+Pf(i,j+1)^k,&& \frac{d}{d{t}} Pf(d_0,i)^k=Pf(d_0,i+1)^k,\\
&Pf(i,j)^{k+1}=Pf(i,j)^k+(d_0,d_1,i,j)^k,&&Pf(d_0,i)^{k+1}=Pf(d_0,i)^k+Pf(d_1,i)^k.
\end{align*}
\end{theorem}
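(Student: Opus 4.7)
The plan is to read this theorem as a consolidation of the PSOP-based derivation that directly precedes it: everything needed has essentially been assembled in Lemma \ref{lem:psop_12btoda_tau_evo}, Corollary \ref{coro:12btoda1}, Corollary \ref{coro:12btoda2}, and the bilinear equation \eqref{eq:bi12btoda}. So my proof will be organized as three separate verifications: (i) the Pfaffian entries defined by $\mu_{i,j}^k$ and $\beta_i^k$ satisfy the stated recurrences in $t$ and $k$, (ii) the Pfaffian $\tau_n^k$ then solves the bilinear form \eqref{eq:bi12btoda}, and (iii) the change of variables $r_n^k=\tau_n^{k+1}/\tau_n^k$, $v_n^k=\tau_{n+1}^k/\tau_n^{k+1}$ converts \eqref{eq:bi12btoda} into the system \eqref{eq:12btoda}.

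For step (i), I would differentiate the double integral defining $\mu_{i,j}^k(t)$ and the single integral defining $\beta_i^k(t)$ under the integral sign using $\partial_t e^{t(x+y)}=(x+y)e^{t(x+y)}$ and $\partial_t e^{tx}=xe^{tx}$; this gives $\partial_t Pf(i,j)^k=Pf(i+1,j)^k+Pf(i,j+1)^k$ and $\partial_t Pf(d_0,i)^k=Pf(d_0,i+1)^k$. For the $k$-increment, I would use $(1+x)^{k+1}(1+y)^{k+1}-(1+x)^k(1+y)^k=(x+y+xy)(1+x)^k(1+y)^k$, which combined with the skew kernel $\omega(x,y)=(x-y)/(xy+x+y)$ produces $(x-y)(1+x)^k(1+y)^k$ under the integral, i.e.\ exactly $\beta_{i+1}^k\beta_j^k-\beta_i^k\beta_{j+1}^k$; that is the entry $Pf(d_0,d_1,i,j)^k$ expanded, yielding the additive rule $Pf(i,j)^{k+1}=Pf(i,j)^k+Pf(d_0,d_1,i,j)^k$. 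The companion relation $Pf(d_0,i)^{k+1}=Pf(d_0,i)^k+Pf(d_1,i)^k$ drops out of $(1+x)^{k+1}=(1+x)^k+x(1+x)^k$.

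For step (ii), I would not attempt a direct Pfaffian identity check. Instead, I would invoke Corollary \ref{coro:12btoda2}, whose proof is itself a Pfaffian identity of type \eqref{pf1}--\eqref{pf2}, together with the leading-term expansion in Corollary \ref{coro:12btoda1}: inserting \eqref{psop_12btoda_exp_tau} into \eqref{psop_12btoda_rel_adj} and matching the coefficient of $z^{n-1}$ on both sides gives \eqref{eq:bi12btoda} immediately. This is the bilinear identity already recorded in the text.

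For step (iii), I would substitute the definitions of $r_n^k$ and $v_n^k$ into the two equations of \eqref{eq:12btoda}. The constraint $v_n^k r_n^k = r_{n+1}^{k-1} v_n^{k-1}$ is a tautology since both sides equal $\tau_{n+1}^k/\tau_n^k$, so this half is free. The differential equation follows by dividing \eqref{eq:bi12btoda} by $\tau_n^k \tau_n^{k+1}$ and recognising $\frac{d}{dt}\log(\tau_n^{k+1}/\tau_n^k)=\frac{d}{dt}\log r_n^k$; the right-hand side $(\tau_{n+1}^k \tau_{n-1}^{k+1}-\tau_{n+1}^{k+1}\tau_{n-1}^k)/(\tau_n^k\tau_n^{k+1})$ factors as $v_n^k r_n^k \cdot r_{n-1}^k/(v_{n-1}^k r_{n-1}^k)\cdot(\cdots)$, which after regrouping yields exactly $\frac{v_n^k r_n^k}{v_{n-1}^k r_{n-1}^k}(r_{n-1}^k-r_{n+1}^k)$. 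The main obstacle — and it is only a bookkeeping one — is keeping the ratios straight in this last algebraic manipulation, since several of the $\tau$-ratios have to be rewritten in terms of $r$'s and $v$'s at different $(n,k)$; the key is to factor out $r_n^k$ first and then absorb the remaining shift in $n$ using $v_n^k r_n^k=\tau_{n+1}^k/\tau_n^k$.
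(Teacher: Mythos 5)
Your proposal is correct and follows essentially the same route as the paper: the theorem is there presented as a summary of the immediately preceding derivation, namely the entry recurrences from the integral representations of $\mu_{i,j}^k$ and $\beta_i^k$, the Pfaffian-identity corollaries leading to \eqref{psop_12btoda_rel_adj}, the coefficient comparison giving the bilinear form \eqref{eq:bi12btoda}, and the substitution $r_n^k=\tau_n^{k+1}/\tau_n^k$, $v_n^k=\tau_{n+1}^k/\tau_n^{k+1}$ yielding \eqref{eq:12btoda}. Your three-step organization and all the individual computations (in particular the identification of $(1+x)(1+y)-1=xy+x+y$ cancelling the kernel denominator, so that the $k$-increment of $\mu_{i,j}^k$ is exactly $\beta_{i+1}^k\beta_j^k-\beta_i^k\beta_{j+1}^k=Pf(d_0,d_1,i,j)^k$) match the paper's argument.
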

\begin{remark} 
Again, employing the de Brujin's formulae \eqref{id_deBr1}-\eqref{id_deBr2} and the Schur's Pfaffian identities \eqref{id_schur}-\eqref{id_schur_odd}, one can get a multiple integral formula of the tau-function of 1+2 B-Toda lattice
\begin{align*}
\tau_n^k(t)=(-1)^{\lfloor\frac{n}{2}\rfloor}\idotsint\limits_{0<x_1<\cdots<x_n<\infty}\prod_{1\leq i<j\leq n}\frac{(x_j-x_i)^2}{x_ix_j+x_i+x_j}{\prod_{1\leq j\leq n}}(1+x_j)^ke^{tx_j}\rho(x_j)d x_1\cdots dx_n.
\end{align*}
\end{remark}

\subsubsection{A full-discrete B-Toda lattice in 3 dimension}\label{subsec:3btoda}
  In this subsection, we will present a full-discrete B-Toda lattice. To this end, let's consider the skew-symmetric kernel $\omega(x,y)=\frac{x-y}{xy+x+y}$ and the deformation on the weight $\rho(x;k,l)dx=(1+x)^{k+l}\rho(x)dx$  leading to the $k,l$-index moments:
\begin{align*}
&\mu_{i,j}^{k,l}=\iint_{\mathbb{R}_+^2}x^{i}y^{j}\frac{x-y}{xy+x+y}(1+x)^{k+l}(1+y)^{k+l}\rho(x)\rho(y)dxdy,\\
&\beta_i^{k,l}=\int_{\mathbb{R_+}}x^{i}(1+x)^{k+l}\rho(x)dx.
\end{align*}
And, define a sequence of monic PSOPs ($\{P_n^{k,l}(z)\}_{n=0}^\infty, k,l\in \mathbb{N}$):
\begin{subequations}\label{psop_dis_3btoda}
\begin{align} 
&P_{2n}^{k,l}(z)=\frac{1}{\tau_{2n}^{k,l}}Pf(0,1,\cdots,2n-1,2n,z)^{k,l}, \label{psop_dis_3btoda_even}\\
&P_{2n+1}^{k,l}(z)=\frac{1}{\tau_{2n+1}^{k,l}}Pf(d_0,0,1,\cdots,2n,2n+1,z)^{k,l},\label{psop_dis_3btoda_odd}
\end{align}
\end{subequations}
where
\begin{align}
\tau_{2n}^{k,l}\triangleq Pf(0,1,\cdots,2n-1)^{k,l}\neq 0,\qquad \tau_{2n+1}^{k,l} \triangleq Pf(d_0,0,1,\cdots,2n)^{k,l}\neq0, \label{psop_dis_3btoda_tau}
\end{align}
with the Pfaffian entries
\begin{align*}
&Pf(i,j)^{k,l}=\mu_{i,j}^{k,l},\qquad \ \ \ \ \  Pf(d_0,i)^{k,l}=\beta_i^{k,l},\\
&Pf(i,z)^{k,l}=z^i,\qquad\qquad\   Pf(d_0,z)^{k,l}=0.
\end{align*}
 \begin{lemma}\label{lem:dis_3btoda1}
 There hold the following relations for the $\tau_n^{k,l}$ defined in \eqref{psop_dis_3btoda_tau}:
  \begin{eqnarray*}
 && \tau_{2n}^{k+1,l}=\tau_{2n}^{k,l}+Pf(d_0,d_1,0,\cdots,2n-1)^{k,l},\quad \tau_{2n+1}^{k+1,l}= \tau_{2n+1}^{k,l}+ Pf(d_1,0,\cdots,2n)^{k,l}, \label{psop_dis_3btoda_tau_rela1}\\
  &&\tau_{2n}^{k,l+1}=Pf(c_0,0,1,\cdots,2n)^{k,l},\qquad\qquad\quad\quad \tau_{2n+1}^{k,l+1}= Pf(d_0,c_0,0,1,\cdots,2n+1)^{k,l},\label{psop_dis_3btoda_tau_rela2}\\
  &&\tau_{2n}^{k+1,l+1}=\tau_{2n}^{k,l+1}+Pf(d_0,d_1,c_0,0,\cdots,2n)^{k,l},\label{psop_dis_3btoda_tau_rela3_even}\\
  && \tau_{2n+1}^{k+1,l+1}=\tau_{2n+1}^{k,l+1}+Pf(d_1,c_0,0,\cdots,2n+1)^{k,l},\label{psop_dis_3btoda_tau_rela3}
  \end{eqnarray*}
and also, 
  \begin{eqnarray*}
&&\tau_{2n}^{k+1,l}P_{2n}^{k+1,l}=\tau_{2n}^{k,l}P_{2n}^{k,l}+Pf(d_0,d_1,0,1,\cdots,2n,z)^{k,l},\label{psop_dis_3btoda_rel1_even}\\
&& \tau_{2n+1}^{k+1,l}P_{2n+1}^{k+1,l}=\tau_{2n+1}^{k,l}P_{2n+1}^{k,l}+ Pf(d_1,0,1,\cdots,2n+1,z)^{k,l},\label{psop_dis_3btoda_rel1_odd}\\
&&(z+1)\tau_{2n}^{k,l+1}P_{2n}^{k,l+1}=Pf(c_0,0,\cdots,2n+1,z)^{k,l},\label{psop_dis_3btoda_tau_evo_even}\\
&&(z+1)\tau_{2n+1}^{k,l+1}P_{2n+1}^{k,l+1}=Pf(d_0,c_0,0,\cdots,2n+2,z)^{k,l},\label{psop_dis_3btoda_tau_evo_odd}\\
&&(z+1)\tau_{2n}^{k+1,l+1}P_{2n}^{k,l+1}=(z+1)\tau_{2n}^{k,l+1}P_{2n}^{k,l+1}+Pf(d_0,d_1,c_0,0,\cdots,2n+1,z)^{k,l},\label{psop_dis_3btoda_tau_evo_even_2}\\
&&(z+1)\tau_{2n+1}^{k,l+1}P_{2n+1}^{k+1,l+1}=(z+1)\tau_{2n+1}^{k,l+1}P_{2n+1}^{k,l+1}+Pf(d_1,c_0,0,\cdots,2n+2,z)^{k,l}.\label{psop_dis_3btoda_tau_evo_odd_2}
  \end{eqnarray*}
Here 
\begin{align*}
&Pf(d_1,i)^{k,l}=Pf(d_0,i+1),\quad Pf(c_0,i)^{k,l}=(-1)^i, \quad Pf(c_0,z)^{k,l}=0,\\
& Pf(c_0,d_0)^{k,l}=Pf(c_0,d_1)^{k,l}= Pf(d_0,d_1)^{k,l}=0.
\end{align*}
 \end{lemma}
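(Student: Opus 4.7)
The strategy is to mimic, in parallel, the two earlier lemmas that the present result bundles together: Lemma \ref{lem:psop_12btoda_tau_evo} (which handled the $(1+x)^k$--shift in the $1+2$ B--Toda setting via the Pfaffian addition formulae) and Lemma \ref{lem:dis_lv1} (which handled the $(1+x)^l$--shift in the full--discrete gLV setting via the auxiliary index $c_0$). I will first record moment/single--moment transformation rules induced by the weight shifts, and then translate each of them into a Pfaffian identity for $\tau_n^{k,l}$ (and subsequently for the polynomials) by invoking the appropriate Pfaffian machinery from the appendix.

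The preliminary step is to exploit the identity $(1+x)(1+y)\,\omega(x,y)=\omega(x,y)+(x-y)$ satisfied by the present skew--kernel $\omega(x,y)=\frac{x-y}{xy+x+y}$. This immediately gives the two equivalent transformation rules
\begin{align*}
\mu_{i,j}^{k+1,l}&=\mu_{i,j}^{k,l}+\beta_{i+1}^{k,l}\beta_j^{k,l}-\beta_i^{k,l}\beta_{j+1}^{k,l},\\
\mu_{i,j}^{k,l+1}&=\mu_{i,j}^{k,l}+\mu_{i+1,j}^{k,l}+\mu_{i,j+1}^{k,l}+\mu_{i+1,j+1}^{k,l},
\end{align*}
together with $\beta_i^{k+1,l}=\beta_i^{k,l+1}=\beta_i^{k,l}+\beta_{i+1}^{k,l}$. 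The first is the $\omega$--compression identity, the second is just the expansion of $(1+x)(1+y)$; these are the two faces used separately by $k$ and $l$. With the entries $Pf(d_1,i)^{k,l}=Pf(d_0,i+1)^{k,l}$, $Pf(d_0,d_1)^{k,l}=0$, the first rule is precisely $Pf(i,j)^{k+1,l}=Pf(i,j)^{k,l}+Pf(d_0,d_1,i,j)^{k,l}$, and similarly for the $d_0$--row.

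For the $k$--shift identities (the four lines involving $\tau_{\cdot}^{k+1,l}$ and $P^{k+1,l}$) I would apply the Pfaffian addition formulae \eqref{add_g1}--\eqref{add_g2} exactly as in the proof of Lemma \ref{lem:psop_12btoda_tau_evo}, merely carrying along the passive index $l$. For the $l$--shift identities I would repeat the induction of Lemma \ref{lem:dis_lv1} verbatim: expand $(1+z)\,Pf(0,\ldots,2n,z)^{k,l+1}$ using the Pfaffian expansion along the last column, apply the identity $Pf(i,j)^{k,l+1}=\sum_{\epsilon_i,\epsilon_j\in\{0,1\}}Pf(i+\epsilon_i,j+\epsilon_j)^{k,l}$ iteratively, and recognise the telescoped result as $Pf(c_0,0,\ldots,2n+1,z)^{k,l}$ (and similarly with $d_0$ inserted for the odd case). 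For the mixed shifts $\tau^{k+1,l+1}$ and $P^{k,l+1}\to P^{k+1,l+1}$, I would compose: apply the $k\to k+1$ addition formula to the already--established $l\to l+1$ formula, which inserts the pair $\{d_0,d_1\}$ into the $c_0$--augmented Pfaffian and produces the stated identities.

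The only genuinely non--routine step is the induction underlying the $c_0$--formula for the $l$--shift; once that is in place, every other line of the lemma is a transcription of \eqref{add_g1}--\eqref{add_g2} after substituting the moment transformation rules, and the two new mixed identities follow by stacking a $k$--shift on top of an $l$--shift (or vice versa; equality of the two orderings is automatic since the defining integrals depend only on $k+l$). No additional Pfaffian identities beyond those already catalogued in the appendix are needed, so the proof reduces to careful bookkeeping of the auxiliary indices $d_0,d_1,c_0$ and of the parities distinguishing the even and odd cases.
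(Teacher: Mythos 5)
Your proposal is correct and takes essentially the same route as the paper: the paper's (very terse) proof likewise reduces everything to the moment transformation rules $Pf(i,j)^{k+1,l}=Pf(i,j)^{k,l}+Pf(d_0,d_1,i,j)^{k,l}$, $Pf(i,j)^{k,l+1}=\sum_{\epsilon_i,\epsilon_j\in\{0,1\}}Pf(i+\epsilon_i,j+\epsilon_j)^{k,l}$ (and their $d_0$-row analogues), and then invokes induction or the addition formulae \eqref{add_g1}--\eqref{add_w2}. Your derivation of these rules from the kernel identity $(1+x)(1+y)\omega(x,y)=\omega(x,y)+(x-y)$ and your explicit handling of the mixed $(k+1,l+1)$ shifts by stacking the two addition formulae supply details the paper leaves implicit, but the argument is the same.
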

 \begin{proof} 
 All of these formulae can be proved by using the argument as before. The key is to observe that
\begin{align*}
&Pf(i,j)^{k+1,l}=Pf(i,j)^{k,l}+(d_0,d_1,i,j)^{k,l},\qquad Pf(d_0,i)^{k+1,l}=Pf(d_0,i)^{k,l}+Pf(d_1,i)^{k,l},\\
&Pf(i,j)^{k,l+1}=Pf(i,j)^{k,l}+(i+1,j)^{k,l}+(i,j+1)^{k,l}+(i+1,j+1)^{k,l},\\
&Pf(d_0,i)^{k,l+1}=Pf(d_0,i)^{k,l}+Pf(d_0,i+1)^{k,l}.
\end{align*}
And then one can get the conclusions by induction or employing the formulae \eqref{add_g1}-\eqref{add_w2}. 
 \end{proof}
Subsequently, we have 
\begin{coro}
There holds the following relation among adjacent families of the polynomials $\{P_{n}^{k,l}\}_{n=0}^\infty$:
\begin{align}
\tau_{n}^{k+1,l+1}\tau_{n}^{k,l}P_{n}^{k,l}-\tau_{n}^{k,l+1}\tau_{n}^{k+1,l}P_{n}^{k+1,l}=(z+1)\left(\tau_{n+1}^{k,l}\tau_{n-1}^{k+1,l+1}P_{n-1}^{k+1,l+1}-\tau_{n+1}^{k+1,l}\tau_{n-1}^{k,l+1}P_{n-1}^{k,l+1}\right). \label{psop_dis_3btoda_rel_adj}
\end{align}
\end{coro}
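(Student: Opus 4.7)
The plan is to follow the same recipe used in the adjacent-family corollaries earlier in the paper (in particular the ones proved in Sections \ref{subsec:12btoda} and \ref{subsec:disLV}): use Lemma \ref{lem:dis_3btoda1} to rewrite every factor $\tau_n^{k',l'}$ and every combination $\tau_n^{k',l'} P_n^{k',l'}$ appearing in \eqref{psop_dis_3btoda_rel_adj} in terms of Pfaffians with a single uniform base index $(k,l)$, absorb the prefactor $(z+1)$ on the right-hand side into the $c_0$-type formulas of the lemma, and then recognize the resulting identity as a direct instance of the Pfaffian identities \eqref{pf1} (for $n=2m$) and \eqref{pf2} (for $n=2m+1$).

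Concretely, I would split into the cases $n=2m$ and $n=2m+1$ and treat them separately. In the even case, the four products appearing in \eqref{psop_dis_3btoda_rel_adj} each become, after substitution from Lemma \ref{lem:dis_3btoda1}, a sum of products of two Pfaffians (since the shifted $\tau^{k+1,l}$, $\tau^{k,l+1}$, $\tau^{k+1,l+1}$, and their polynomial-bearing analogues, each decompose into two Pfaffian pieces with letters from $\{d_0,d_1,c_0\}$ adjoined). The ``diagonal'' terms that carry no extra letters, or carry the same letters on both sides, cancel pairwise between the two terms on the left and between the two terms on the right; after this cancellation what remains is the single Plücker-type relation
\begin{align*}
&Pf(d_0,d_1,c_0,0,1,\ldots,2m+1,z)\,Pf(0,1,\ldots,2m-1)\\
&\quad -Pf(d_0,d_1,0,1,\ldots,2m,z)\,Pf(c_0,0,1,\ldots,2m-1)\\
&\quad +Pf(c_0,0,1,\ldots,2m+1,z)\,Pf(d_0,d_1,0,1,\ldots,2m-1)\\
&\quad -Pf(d_0,d_1,c_0,0,1,\ldots,2m-1)\,Pf(0,1,\ldots,2m+1,z)=0,
\end{align*}
which is precisely an instance of the expansion identity \eqref{pf1}. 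The odd case is handled in the same fashion, with a spectator $d_0$ threaded through each Pfaffian, reducing the claim to an instance of \eqref{pf2}.

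The main obstacle is not conceptual but combinatorial: because each shifted $\tau$ or $\tau P$ expands into a sum of two Pfaffians, a naive expansion of \eqref{psop_dis_3btoda_rel_adj} produces on the order of a dozen summands per parity case, and one must keep careful track of signs, of which $(z+1)$ factors are absorbed by which $c_0$-type formula from Lemma \ref{lem:dis_3btoda1}, and of the pairings that produce the cancellations described above. Once this bookkeeping is arranged, however, the final reduction to \eqref{pf1} and \eqref{pf2} is immediate, mirroring line for line the closing step of the proof of the adjacent-family relation \eqref{psop_dis_lv_rel_adj} for the full-discrete gLV lattice.
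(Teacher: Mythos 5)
Your plan is the paper's plan: substitute the relations of Lemma \ref{lem:dis_3btoda1} so that every factor in \eqref{psop_dis_3btoda_rel_adj} is a Pfaffian over the base index $(k,l)$, let the unshifted pieces cancel pairwise on each side, and identify the four surviving products with a Pfaffian identity. The execution breaks down at the decisive step, however: the ``single Pl\"ucker-type relation'' you display for the even case is not what survives the cancellation, and it is not even a meaningful statement, since $Pf(c_0,0,1,\ldots,2m-1)$, $Pf(d_0,d_1,c_0,0,1,\ldots,2m-1)$ and $Pf(0,1,\ldots,2m+1,z)$ all have an odd number of arguments and hence are undefined in the conventions of Appendix \ref{app_pf}. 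In the genuine residue all four products are built from the two fixed strings $0,\ldots,2m$ and $0,\ldots,2m,z$, with the three letters $d_0,d_1,c_0$ split three-against-zero or one-against-two, never two-against-one with $d_0,d_1$ glued together as in your display. Concretely, for $n=2m$ the left-hand side of \eqref{psop_dis_3btoda_rel_adj} collapses to $Pf(d_0,d_1,c_0,0,\ldots,2m)Pf(0,\ldots,2m,z)-Pf(c_0,0,\ldots,2m)Pf(d_0,d_1,0,\ldots,2m,z)$ and the right-hand side to $Pf(d_0,0,\ldots,2m)Pf(d_1,c_0,0,\ldots,2m,z)-Pf(d_1,0,\ldots,2m)Pf(d_0,c_0,0,\ldots,2m,z)$, so the identity to be checked is
\begin{align*}
Pf(d_0,d_1,c_0,0,\ldots,2m)Pf(0,\ldots,2m,z)={}&Pf(c_0,0,\ldots,2m)Pf(d_0,d_1,0,\ldots,2m,z)\\
&-Pf(d_1,0,\ldots,2m)Pf(d_0,c_0,0,\ldots,2m,z)\\
&+Pf(d_0,0,\ldots,2m)Pf(d_1,c_0,0,\ldots,2m,z),
\end{align*}
which is the three-letter identity \eqref{pf2}, not \eqref{pf1}.

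The odd case $n=2m+1$ yields the companion identity in which $z$ is attached to the three-letter Pfaffian and the common string is $0,\ldots,2m+1$; that one is an instance of \eqref{pf1} with $z$ acting as the fourth letter, so your assignment of \eqref{pf1} and \eqref{pf2} to the two parities is reversed. The description of the odd case as the even case ``with a spectator $d_0$ threaded through each Pfaffian'' is also inaccurate: passing from even to odd $n$ redistributes which $\tau$'s carry $d_0$ (it is $\tau_{2m+1}^{k,l}$, not $\tau_{2m+2}^{k,l}$, that contains the letter $d_0$) rather than adjoining it uniformly. The strategy is sound and matches the paper's, but as written the proof does not close, because the identity you propose to invoke is false.
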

\begin{proof} 
The key is to observe the Pfaffian identities
\begin{align*}
&Pf(d_0,d_1,c_0,0,\cdots,2n)^{k,l}Pf(0,\cdots,2n,z)^{k,l}=Pf(d_0,d_1,0,\cdots,2n,z)^{k,l}Pf(c_0,0,\cdots,2n)^{k,l}\\
&\ -Pf(d_0,c_0,0,\cdots,2n,z)^{k,l}Pf(d_1,0,\cdots,2n)^{k,l}+Pf(d_1,c_0,0,\cdots,2n,z)^{k,l}Pf(d_0,0,\cdots,2n)^{k,l},\\
&Pf(d_0,d_1,c_0,0,\cdots,2n+1,z)^{k,l}Pf(0,\cdots,2n+1)^{k,l}=Pf(d_0,d_1,0,\cdots,2n+1)^{k,l}Pf(c_0,0,\cdots,2n+1,z)^{k,l}\\
&\ -Pf(d_0,c_0,0,\cdots,2n+1)^{k,l}Pf(d_1,0,\cdots,2n+1,z)^{k,l}+Pf(d_0,0,\cdots,2n+1,z)^{k,l}Pf(d_1,c_0,0,\cdots,2n+1)^{k,l},
\end{align*}
by using which and Lemma \ref{lem:dis_3btoda1}, one can achieve the goal.
\end{proof}

Comparing the coefficients of the highest degree on the both sides of \eqref{psop_dis_3btoda_rel_adj} , we obtain
\begin{align}\label{eq:21g3btoda_dis_bi}
\tau_{n}^{k+1,l+1}\tau_{n}^{k,l}-\tau_{n}^{k,l+1}\tau_{n}^{k+1,l}=\tau_{n-1}^{k+1,l+1}\tau_{n+1}^{k,l}-\tau_{n+1}^{k+1,l}\tau_{n-1}^{k,l+1}.
\end{align}
 \begin{remark} Let
 $$\hat \tau_n^k(t)=\epsilon^{-\frac{n^2}{2}}\tau_n^{k,l},\qquad t=l\epsilon.$$
 In the small limit of $\epsilon$, one can arrive at the bilinear form \eqref{eq:bi12btoda} of 1+2 B-Toda lattice from \eqref{eq:21g3btoda_dis_bi}.
 \end{remark}

If we introduce the variables
\begin{align*}
v_n^{k,l}=\frac{\tau_{n+1}^{k,l}}{\tau_n^{k,l}},\qquad u_n^{k,l}=\frac{\tau_{n}^{k+1,l}}{\tau_n^{k,l}}, \qquad w_n^{k,l}=\frac{\tau_n^{k,l+1}}{\tau_n^{k,l}},
\end{align*}
then $u_n^{k,l}$, $v_n^{k,l}$  and $w_n^{k,l}$ satisfy
\begin{align}
&v_n^{k+1,l}u_n^{k,l}=v_n^{k,l}u_{n+1}^{k,l},\quad v_n^{k,l+1}w_n^{k,l}=v_n^{k,l}w_{n+1}^{k,l},\nonumber\\
&\frac{u_{n+1}^{k,l+1}}{u_{n+1}^{k,l}}=1+\frac{v_{n+1}^{k,l}w_{n}^{k+1,l}}{w_{n+1}^{k,l}v_{n}^{k+1,l}}-\frac{v_{n+1}^{k+1,l}}{w_{n}^{k,l+1}}, \label{eq:21g3btoda_dis}\\
&\ n,l,k=0,1,2,\cdots, \nonumber
\end{align}
which we call ``full-discrete Toda lattice of BKP type in three dimension'' ( full-discrete 3D B-Toda lattice). 

The above derivation implies the following theorem, which concludes the tau-function representation of the full-discrete 3D B-Toda lattice  \eqref{eq:21g3btoda_dis}.
\begin{theorem}
The equation  \eqref{eq:21g3btoda_dis} admits the following tau-function representation
\begin{align*}
v_n^{k,l}=\frac{\tau_{n+1}^{k,l}}{\tau_n^{k,l}},\qquad u_n^{k,l}=\frac{\tau_{n}^{k+1,l}}{\tau_n^{k,l}}, \qquad w_n^{k,l}=\frac{\tau_n^{k,l+1}}{\tau_n^{k,l}},
\end{align*}
which enjoy the explicit forms in terms of Pfaffians:
$$
\tau_{2n}^{k,l}=Pf(0,1,\cdots,2n-1)^{k,l},\qquad \tau_{2n+1}^{k,l} = Pf(d_0,0,1,\cdots,2n)^{k,l},
$$
with the Pfaffian entries satisfying
\begin{align*}
&Pf(i,j)^{k,l}=\mu_{i,j}^{k,l},&&Pf(d_0,i)^{k,l}=\beta_i^{k,l},\\
&\mu_{i,j}^{k+1,l}=\mu_{i,j}^{k,l}+Pf(d_0,d_1,i,j)^{k,l},&& \beta_i^{k+1,l}=\beta_{i+1}^{k,l}+\beta_{i}^{k,l},\\
&\mu_{i,j}^{k,l+1}=\mu_{i,j}^{k,l}+\mu_{i+1,j}^{k,l}+\mu_{i,j+1}^{k,l}+\mu_{i+1,j+1}^{k,l},&&\beta_i^{k,l+1}=\beta_{i+1}^{k,l}+\beta_{i}^{k,l}.
\end{align*}
\end{theorem}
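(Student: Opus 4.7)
The plan is to view this theorem as the clean packaging of the derivation sketched immediately above the theorem statement, and to verify the three equations of the system \eqref{eq:21g3btoda_dis} directly from the Pfaffian definitions. The two shift relations $v_n^{k+1,l}u_n^{k,l}=v_n^{k,l}u_{n+1}^{k,l}$ and $v_n^{k,l+1}w_n^{k,l}=v_n^{k,l}w_{n+1}^{k,l}$ are essentially tautologies: substituting $v_n^{k,l}=\tau_{n+1}^{k,l}/\tau_n^{k,l}$, $u_n^{k,l}=\tau_n^{k+1,l}/\tau_n^{k,l}$, $w_n^{k,l}=\tau_n^{k,l+1}/\tau_n^{k,l}$ reduces both sides of each to the same ratio of four $\tau$'s, so nothing beyond the definitions is needed there.

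For the nontrivial third equation, my plan is to first establish the bilinear identity \eqref{eq:21g3btoda_dis_bi} for the Pfaffian sequence $\tau_n^{k,l}$, and then deduce the nonlinear form from it. The bilinear identity has essentially already been produced in the material preceding the theorem: Lemma \ref{lem:dis_3btoda1} expresses every shifted tau-function $\tau_n^{k\pm1,l\pm1}$ as a Pfaffian in the same base indexing, augmented by the auxiliary rows/columns $d_0,d_1,c_0$; feeding these expressions into the Pfaffian identities of type \eqref{pf1}--\eqref{pf2} yields the four-term polynomial identity \eqref{psop_dis_3btoda_rel_adj} among the PSOPs, and matching the leading coefficient of $z$ on the two sides of \eqref{psop_dis_3btoda_rel_adj} delivers \eqref{eq:21g3btoda_dis_bi} directly.

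Having secured the bilinear identity, the final step is algebraic. I would shift $n\to n+1$ in \eqref{eq:21g3btoda_dis_bi}, divide by a suitable product of $\tau$'s, and reinterpret each ratio through the definitions of $u$, $v$, $w$ to recover the third equation of \eqml{eq:21g3btoda_dis}; since each monomial in \eqref{eq:21g3btoda_dis_bi} already has the right shift pattern to be rewritten as a product of $u$'s, $v$'s, and $w$'s, this rewriting is mechanical.

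The main obstacle is Step 2, that is, organizing the auxiliary-index bookkeeping in Lemma \ref{lem:dis_3btoda1} so that the two Pfaffian identities needed in the proof of \eqref{psop_dis_3btoda_rel_adj} assemble into a recognizable Plücker-type form; once the correct quintuple of labels $(d_0,d_1,c_0,0,\dots,2n)$ or $(d_0,d_1,c_0,0,\dots,2n+1)$ is chosen, signs and index conventions fall into place and the identity becomes an instance of \eqref{pf1}--\eqref{pf2}. The multiple-integral representation that typically accompanies such a theorem follows immediately by applying the de Bruijn formulae \eqref{id_deBr1}--\eqref{id_deBr2} to the Pfaffian expression for $\tau_n^{k,l}$, exactly as in the previous remarks of this section.
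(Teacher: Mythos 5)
Your proposal follows essentially the same route as the paper: the paper proves this theorem implicitly through the derivation immediately preceding it, namely Lemma \ref{lem:dis_3btoda1} combined with the Pfaffian identities \eqref{pf1}--\eqref{pf2} to obtain \eqref{psop_dis_3btoda_rel_adj}, then comparison of leading coefficients to get the bilinear form \eqref{eq:21g3btoda_dis_bi}, and finally the mechanical substitution of $u,v,w$. Your observations that the two shift relations are tautological consequences of the definitions and that the multiple-integral form follows from de Bruijn's formulae are likewise exactly what the paper does.
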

\begin{remark}
Again, by using the de Bruijn's formulae \eqref{id_deBr1}-\eqref{id_deBr2} and  the Schur's Pfaffian identities \eqref{id_schur}-\eqref{id_schur_odd}, one is led to a multiple integral formula of the tau-function of full-discrete  3D B-Toda lattice
\begin{align*}
\tau_n^{k,l}=(-1)^{\lfloor\frac{n}{2}\rfloor}\idotsint\limits_{0<x_1<\cdots<x_n<\infty}\prod_{1\leq i<j\leq n}\frac{(x_j-x_i)^2}{x_ix_j+x_i+x_j}{\prod_{1\leq j\leq n}}(1+x_j)^{k+l}\rho(x_j)d x_1\cdots dx_n.
\end{align*}
\end{remark}
 \begin{remark}
The bilinear 3D B-Toda lattice \eqref{eq:21g3btoda_dis_bi} has appeared in the literature \cite{gilson2003two,hirota2001soliton}. However, this kind of Pfaffian tau-function is never addressed.
 \end{remark} 
 
 \subsubsection{A full-discrete B-Toda lattice in 2 dimension} 
  As a reduction in lower dimension for the subsection \ref{subsec:3btoda} and also a discrete counterpart for the subsection \ref{subsec:11btoda}, we plan to consider the skew-symmetric kernel $\omega(x,y)=\frac{x-y}{xy+x+y}$ and the deformation on the weight $\rho(x;k)dx=(1+x)^{k}\rho(x)dx$  leading to the $k$-index moments:
\begin{align*}
&\mu_{i,j}^{k}=\iint_{\mathbb{R}_+^2}x^{i}y^{j}\frac{x-y}{xy+x+y}(1+x)^{k}(1+y)^{k}\rho(x)\rho(y)dxdy,\\
&\beta_i^{k}=\int_{\mathbb{R_+}}x^{i}(1+x)^{k}\rho(x)dx.
\end{align*}
And, define a sequence of monic PSOPs ($\{P_n^{k}(z)\}_{n=0}^\infty, k\in \mathbb{N}$):
\begin{subequations}\label{psop_dis_2btoda}
\begin{align} 
&P_{2n}^{k}(z)=\frac{1}{\tau_{2n}^{k}}Pf(0,1,\cdots,2n-1,2n,z)^{k}, \label{psop_dis_2btoda_even}\\
&P_{2n+1}^{k}(z)=\frac{1}{\tau_{2n+1}^{k}}Pf(d_0,0,1,\cdots,2n,2n+1,z)^{k},\label{psop_dis_2btoda_odd}
\end{align}
\end{subequations}
where
\begin{align}
\tau_{2n}^{k}\triangleq Pf(0,1,\cdots,2n-1)^{k}\neq 0,\qquad \tau_{2n+1}^{k} \triangleq Pf(d_0,0,1,\cdots,2n)^{k}\neq0, \label{psop_dis_2btoda_tau}
\end{align}
with the Pfaffian entries
\begin{align*}
&Pf(i,j)^{k}=\mu_{i,j}^{k},\qquad\quad\ \   Pf(d_0,i)^{k}=\beta_i^{k},\\
&Pf(i,z)^{k}=z^i,\qquad\qquad\  Pf(d_0,z)^{k}=0.
\end{align*}
 
It is not hard to see that all the derivations in the subsection \ref{subsec:3btoda}  can apply here by a reduction $$\tau_n^{k,l}=\tau_n^m,\ \  P_n^{k,l}=P_n^m, \ \ m=k+l.$$ Consequently, we arrive at
\begin{coro}
There holds the following relation among adjacent families of the polynomials $\{P_{n}^{k}\}_{n=0}^\infty$:
\begin{align}
\tau_{n}^{k+2}\tau_{n}^{k}P_{n}^{k}-(\tau_{n}^{k+1})^2P_{n}^{k+l}=(z+1)\left(\tau_{n+1}^{k}\tau_{n-1}^{k+2}P_{n-1}^{k+2}-\tau_{n+1}^{k+1}\tau_{n-1}^{k+1}P_{n-1}^{k+1}\right). \label{psop_dis_2btoda_rel_adj}
\end{align}
\end{coro}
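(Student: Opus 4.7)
The plan is to obtain \eqref{psop_dis_2btoda_rel_adj} as a straightforward specialization of the 3D relation \eqref{psop_dis_3btoda_rel_adj} proved in the previous subsection, exploiting the reduction $\tau_n^{k,l}=\tau_n^{k+l}$, $P_n^{k,l}(z)=P_n^{k+l}(z)$ highlighted in the paragraph preceding the corollary.

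First I would justify this reduction carefully at the level of the moment data. Comparing the definitions of $\mu_{i,j}^{k,l}$ and $\beta_i^{k,l}$ used in Subsection \ref{subsec:3btoda} with the present ones
$$\mu_{i,j}^{k}=\iint_{\mathbb{R}_+^2}x^{i}y^{j}\tfrac{x-y}{xy+x+y}(1+x)^{k}(1+y)^{k}\rho(x)\rho(y)\,dx\,dy,\qquad \beta_i^{k}=\int_{\mathbb{R}_+}x^{i}(1+x)^{k}\rho(x)\,dx,$$
one sees that in both cases the weight enters only through $(1+x)^{k+l}$, so the two-index moments coincide with the one-index ones upon setting $m=k+l$. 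Consequently the Pfaffian entries agree, and therefore $\tau_n^{k,l}=\tau_n^{k+l}$ and $P_n^{k,l}=P_n^{k+l}$ for every admissible pair $(k,l)$.

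Next I would substitute $(k,l)\mapsto(k,0)$ into \eqref{psop_dis_3btoda_rel_adj}. Under the reduction the four $\tau$-factors on the left collapse as
$\tau_{n}^{k+1,1}=\tau_{n}^{k+2}$, $\tau_{n}^{k,0}=\tau_{n}^{k}$, $\tau_{n}^{k,1}=\tau_{n}^{k+1}$, $\tau_{n}^{k+1,0}=\tau_{n}^{k+1}$, and similarly $P_{n}^{k,0}=P_n^{k}$ and $P_n^{k+1,0}=P_n^{k+1}$; the analogous replacements apply to the factors on the right. Direct substitution then produces precisely \eqref{psop_dis_2btoda_rel_adj} (with the understanding that the superscript ``$k+l$'' in the left-hand side of that display is a misprint for ``$k+1$'').

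There is no real obstacle beyond this bookkeeping, since the heavy lifting, namely the expansion identities collected in Lemma \ref{lem:dis_3btoda1} and the two Pfaffian identities invoked in the proof of \eqref{psop_dis_3btoda_rel_adj}, has already been done in the 3D case. If one preferred a self-contained argument avoiding the reduction, one could instead repeat verbatim the derivation of \eqref{psop_dis_3btoda_rel_adj}: establish the one-parameter analogues of the relations in Lemma \ref{lem:dis_3btoda1} (in which the two shift operations coincide), and then apply the same two Pfaffian identities \eqref{pf1}--\eqref{pf2} with the single index $k$. Both routes lead to the same bilinear identity and are equally routine.
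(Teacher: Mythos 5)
Your proposal is correct and follows exactly the paper's own route: the paper likewise obtains \eqref{psop_dis_2btoda_rel_adj} by observing that the moments depend only on $k+l$, so the reduction $\tau_n^{k,l}=\tau_n^{k+l}$, $P_n^{k,l}=P_n^{k+l}$ lets one specialize \eqref{psop_dis_3btoda_rel_adj} directly (and you are right that ``$P_n^{k+l}$'' in the display is a misprint for ``$P_n^{k+1}$'').
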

Comparing the coefficients of the highest degree on the both sides of \eqref{psop_dis_2btoda_rel_adj} , we obtain
\begin{align}\label{eq:2btoda_dis_bi}
\tau_{n}^{k+2}\tau_{n}^{k}-(\tau_{n}^{k+1})^2=\tau_{n+1}^{k}\tau_{n-1}^{k+2}-\tau_{n+1}^{k+1}\tau_{n-1}^{k+1}.
\end{align}
 \begin{remark} Let
 $$\hat \tau_n(t)=\epsilon^{-\frac{n^2}{2}}\tau_n^{k},\qquad t=k\epsilon.$$
 In the small limit of $\epsilon$, one can arrive at the 1+1 B-Toda lattice \eqref{eq:bi11btoda} from \eqref{eq:2btoda_dis_bi}.
 \end{remark}

If we introduce the variables
\begin{align*}
v_n^{k}=\frac{\tau_{n+1}^{k}}{\tau_n^{k}},\qquad u_n^{k}=\frac{\tau_{n}^{k+1}}{\tau_n^{k}},
\end{align*}
then $u_n^k$, $v_n^k$ satisfy
\begin{align}
&v_n^{k+1}u_n^{k}=v_n^{k}u_{n+1}^{k},\quad \frac{u_{n+1}^{k+1}}{u_{n+1}^{k}}=1+\frac{v_{n+1}^{k}u_{n}^{k+1}}{u_{n+1}^{k}v_{n}^{k+1}}-\frac{v_{n+1}^{k+1}}{u_{n}^{k+1}}, \quad n,k=0,1,2,\cdots,\label{eq:2btoda_dis}
\end{align}
which we call ``full-discrete Toda lattice of BKP type in two dimension'' ( full-discrete 2D B-Toda lattice). 

At the end, we present the tau-function representation of the full-discrete 2D B-Toda lattice  \eqref{eq:2btoda_dis}.
\begin{theorem}
The equation  \eqref{eq:2btoda_dis} admits the following tau-function representation
\begin{align*}
v_n^{k}=\frac{\tau_{n+1}^{k}}{\tau_n^{k}},\qquad u_n^{k}=\frac{\tau_{n}^{k+1}}{\tau_n^{k}}, 
\end{align*}
which enjoy the explicit forms in terms of Pfaffians:
$$
\tau_{2n}^{k}=Pf(0,1,\cdots,2n-1)^{k},\qquad \tau_{2n+1}^{k} = Pf(d_0,0,1,\cdots,2n)^{k},
$$
with the Pfaffian entries satisfying
\begin{align*}
&Pf(i,j)^{k}=\mu_{i,j}^{k},&&Pf(d_0,i)^{k}=\beta_i^{k},\\
&\mu_{i,j}^{k+1}=\mu_{i,j}^{k}+Pf(d_0,d_1,i,j)^{k}=\mu_{i,j}^{k}+\mu_{i+1,j}^{k}+\mu_{i,j+1}^{k}+\mu_{i+1,j+1}^{k},&& \beta_i^{k+1}=\beta_{i+1}^{k}+\beta_{i}^{k}.
\end{align*}
\end{theorem}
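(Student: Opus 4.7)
The plan is to mirror the blueprint of the preceding theorems in this subsection and to exploit the fact that the present setting is precisely the diagonal reduction $\tau_n^{k,l}\mapsto\tau_n^{k+l}$ of the three-dimensional case treated in subsection~\ref{subsec:3btoda}. The argument breaks into three stages: verify the two moment identities stated in the theorem; invoke the Pfaffian corollary \eqref{psop_dis_2btoda_rel_adj} to obtain the bilinear form \eqref{eq:2btoda_dis_bi}; translate this back to the nonlinear variables $u_n^k,v_n^k$.

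For the moment identities, $\beta_i^{k+1}=\beta_{i+1}^k+\beta_i^k$ is immediate from $(1+x)^{k+1}=x(1+x)^k+(1+x)^k$. For the double moments, the factorisation $(1+x)^{k+1}(1+y)^{k+1}-(1+x)^k(1+y)^k=(x+y+xy)(1+x)^k(1+y)^k$ cancels the denominator of $\omega(x,y)=(x-y)/(xy+x+y)$ inside $\mu_{i,j}^{k+1}-\mu_{i,j}^k$, and the resulting separable integrand yields $\beta_{i+1}^k\beta_j^k-\beta_i^k\beta_{j+1}^k$, which equals $Pf(d_0,d_1,i,j)^k$ by the $4\times 4$ Pfaffian expansion rule applied to the entries $Pf(d_0,i)=\beta_i$, $Pf(d_1,i)=\beta_{i+1}$, $Pf(d_0,d_1)=0$. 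The same factorisation applied to $\mu_{i+1,j}^k+\mu_{i,j+1}^k+\mu_{i+1,j+1}^k$ produces the factor $(x+y+xy)$ against the same denominator and gives the identical expression, confirming that both forms stated for $\mu_{i,j}^{k+1}$ agree.

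For the bilinear form, the adjacent-polynomial relation \eqref{psop_dis_2btoda_rel_adj} supplied by the preceding corollary is a Pfaffian identity that descends from \eqref{psop_dis_3btoda_rel_adj} under the collapse $\tau_n^{k,l}\mapsto\tau_n^{k+l}$ (equivalently it can be proved directly via the same Pfaffian-identity manipulation used in the three-dimensional case). Comparing the coefficients of the highest power of $z$ on both sides of \eqref{psop_dis_2btoda_rel_adj} yields \eqref{eq:2btoda_dis_bi}, and dividing by the appropriate products of $\tau_n^k$ produces the second relation of \eqref{eq:2btoda_dis}; the first, $v_n^{k+1}u_n^k=v_n^k u_{n+1}^k$, is an identity among defining ratios. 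The main obstacle is mostly bookkeeping: the Pfaffian identity used in the adjacent-family corollary must be checked separately in the even and odd index cases, and the nonvanishing $\tau_n^k\neq 0$ required for well-definedness has to be transported from Theorem~\ref{th:exis} via the Schur Pfaffian identities associated to the kernel $(x-y)/(xy+x+y)$, which guarantee positivity of the relevant Pfaffians over the ordered region $0<x_1<\cdots<x_n$.
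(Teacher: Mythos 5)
Your proposal is correct and follows essentially the same route as the paper: the paper likewise treats this theorem as the diagonal reduction $\tau_n^{k,l}\mapsto\tau_n^{k+l}$ of the three-dimensional case of subsection \ref{subsec:3btoda}, obtains the bilinear form \eqref{eq:2btoda_dis_bi} by comparing leading coefficients in \eqref{psop_dis_2btoda_rel_adj}, and then passes to the ratios $u_n^k,v_n^k$. Your explicit verification of the two equivalent forms of $\mu_{i,j}^{k+1}$ via the cancellation of $xy+x+y$ against the kernel is a correct (and slightly more detailed) spelling-out of what the paper leaves implicit.
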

\begin{remark}
Obviously, as a reduction for that in subsection \ref{subsec:3btoda}, one is led to a multiple integral formula of the tau-function of  the full-discrete  2D B-Toda lattice
\begin{align*}
\tau_n^{k}=(-1)^{\lfloor\frac{n}{2}\rfloor}\idotsint\limits_{0<x_1<\cdots<x_n<\infty}\prod_{1\leq i<j\leq n}\frac{(x_j-x_i)^2}{x_ix_j+x_i+x_j}{\prod_{1\leq j\leq n}}(1+x_j)^{l}\rho(x_j)d x_1\cdots dx_n.
\end{align*}
\end{remark}

 \section{Two integrable lattices as algorithms}\label{sec:alg}
 In the literature, there exist many integrable lattices, which can be used to design numerical algorithms. A recent work by the authors \cite{chang2017new} addresses the acceleration of sequence transformations represented in terms of Pfaffians by using a discrete integrable lattice. Again, recall that Hietarinta, Joshi and Nijhoff claimed that the ``connections between vector Pad\'e approximants and integrable lattices remain largely to be explored'' in  \cite[Section4.4]{hietarinta2016discrete}.
 
 In this section, we derive two full-discrete integrable lattices related to PSOPs, one of which is nothing but the one in \cite{chang2017new} and the other one can be used to compute vector Pad\'{e} approximants in \cite{graves1996cayley,graves1997problems}. Our derivation for these two systems still relies on the methodology in the above section. 
 \subsection{For Pad\'{e} approximants}
 Let's consider the $k,l$-index moments:
\begin{align*}
&\mu_{i,j}^{k,l}=\iint_{\mathbb{R}^2}x^{k-i}y^{k-j}(1+x)^l(1+y)^l\omega(x,y)\rho(x)\rho(y)dxdy,\\
&\beta_i^{k,l}=\int_{\mathbb{R}}x^{k-i}(1+x)^l\rho(x)dx.
\end{align*}
And then define a sequence of the reciprocal PSOPs ($\{P_n^{k,l}(z)\}_{n=0}^\infty, k,l\in \mathbb{N}$):
\begin{subequations}\label{psop_pade}
\begin{align} 
&P_{2n}^{k,l}(z)=\frac{z^{2n}}{\tau_{2n}^{k,l}}Pf(0,1,\cdots,2n-1,2n,z)^{k,l}, \label{psop_pade_even}\\
&P_{2n+1}^{k,l}(z)=\frac{z^{2n+1}}{\tau_{2n+1}^{k,l}}Pf(d_0,0,1,\cdots,2n,2n+1,z)^{k,l},\label{psop_pade_odd}
\end{align}
\end{subequations}
where
\begin{align}
\tau_{2n}^{k,l}\triangleq Pf(0,1,\cdots,2n-1)^{k,l}\neq 0,\qquad \tau_{2n+1}^{k,l} \triangleq Pf(d_0,0,1,\cdots,2n)^{k,l}\neq0, \label{psop_pade_tau}
\end{align}
with the Pfaffian entries
\begin{align*}
&Pf(i,j)^{k,l}=\mu_{i,j}^{k,l},\qquad\qquad\  Pf(d_0,i)^{k,l}=\beta_i^{k,l},\\
&Pf(i,z)^{k,l}=z^{-i},\qquad\qquad\   Pf(d_0,z)^{k,l}=0.
\end{align*}
Note that these polynomials are not monic, but with constant terms $1$. 

Observe that the moments here are very similar those in subsection \ref{subsec:disLV} and there hold
\begin{align*}
&\mu_{i,j}^{k+1,l}=\mu_{i-1,j-1}^{k,l},&&\beta_i^{k+1,l}=\beta_{i-1}^{k,l},\\
&\mu_{i,j}^{k,l+1}=\mu_{i,j}^{k,l}+\mu_{i-1,j}^{k,l}+\mu_{i,j-1}^{k,l}+\mu_{i-1,j-1}^{k,l},&&\beta_i^{k,l+1}=\beta_{i-1}^{k,l}+\beta_{i}^{k,l},
\end{align*}
from which, we can see that the coefficient of the highest degree on $z$ of $P_n^{k,l}$ is $\frac{\tau_{n}^{k-1,l}}{\tau_{n}^{k,l}}$ and the constant term is $1$.

Actually, it is not hard to obtain the following conclusion by induction just as Lemma \ref{lem:dis_lv1}. Here the proof is omitted.
 \begin{lemma}\label{lem:pade1}
 There hold the following relations for the $\tau_n^{k,l}$ defined in \eqref{psop_pade_tau}:
  \begin{eqnarray*}
 && \tau_{2n}^{k+1,l}=Pf(-1,0,\cdots,2n-2)^{k,l},\qquad\quad\ \tau_{2n+1}^{k+1,l}= Pf(d_0,-1,0\cdots,2n-1)^{k,l}, \label{psop_pade_tau_rela1}\\
 & &\tau_{2n}^{k,l+1}=Pf(c_0,-1,0,\cdots,2n-1)^{k,l},\qquad \tau_{2n+1}^{k,l+1}= Pf(d_0,c_0,-1,0,\cdots,2n)^{k,l},\label{psop_pade_tau_rela2}
  \end{eqnarray*}
and also, 
  \begin{eqnarray*}
&&{\tau_{2n}^{k+1,l}}P_{2n}^{k+1,l}=z^{2n-1}Pf(-1,0,\cdots,2n-1,z)^{k,l},\label{psop_pade_rel_even}\\
&&{\tau_{2n+1}^{k+1,l}}P_{2n+1}^{k+1,l}=z^{2n}Pf(d_0,-1,0,\cdots,2n,z)^{k,l},\label{psop_pade_rel_odd}\\
&&(z+1)\tau_{2n}^{k,l+1}P_{2n}^{k,l+1}=z^{2n}Pf(c_0,-1,0,\cdots,2n,z)^{k,l},\label{psop_pade_tau_evo_even}\\
&&(z+1)\tau_{2n+1}^{k,l+1}P_{2n+1}^{k,l+1}=z^{2n+1}Pf(d_0,c_0,-1,0,\cdots,2n+1,z)^{k,l}.\label{psop_pade_tau_evo_odd}
  \end{eqnarray*}
Here 
\begin{align*}
Pf(c_0,i)^{k,l}=(-1)^{i-1}, \quad Pf(c_0,d_0)^{k,l}=0, \quad Pf(c_0,z)^{k,l}=0.
\end{align*}
 \end{lemma}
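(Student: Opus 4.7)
The plan is to mirror the proof of Lemma~\ref{lem:dis_lv1}, exploiting the parallel moment structure noted just before the statement. The four moment relations
\[
\mu_{i,j}^{k+1,l}=\mu_{i-1,j-1}^{k,l},\quad \beta_i^{k+1,l}=\beta_{i-1}^{k,l},\quad \mu_{i,j}^{k,l+1}=\sum_{\epsilon_i,\epsilon_j\in\{0,1\}}\mu_{i-\epsilon_i,j-\epsilon_j}^{k,l},\quad \beta_i^{k,l+1}=\beta_{i-1}^{k,l}+\beta_i^{k,l}
\]
drive everything. A crucial feature distinguishing this setting from Lemma~\ref{lem:dis_lv1} is that $Pf(i,z)^{k,l}=z^{-i}$ is \emph{independent} of $k,l$, which is the source of the mild bookkeeping required for the polynomial identities.

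First I would dispatch the $k\to k+1$ shifts. Since every Pfaffian entry built from $\mu^{k+1,l}$ and $\beta^{k+1,l}$ coincides with the correspondingly down-shifted entry built from $\mu^{k,l}$ and $\beta^{k,l}$, the tau identities $\tau_{2n}^{k+1,l}=Pf(-1,0,\ldots,2n-2)^{k,l}$ and $\tau_{2n+1}^{k+1,l}=Pf(d_0,-1,0,\ldots,2n-1)^{k,l}$ are immediate relabelings. For the $\tau P$ identities, I would expand $z^{2n}Pf(0,1,\ldots,2n,z)^{k+1,l}$ along the $z$-slot; the term that contributes $z^{-i}$ under $(k+1,l)$ instead contributes $z^{-(i-1)}=z\cdot z^{-i}$ when its label is shifted to $i-1$ under $(k,l)$, so the overall prefactor drops by one power of $z$, producing precisely $z^{2n-1}Pf(-1,0,\ldots,2n-1,z)^{k,l}$. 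The odd case is identical.

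Next I would handle the $l\to l+1$ shifts. By multilinearity of the Pfaffian in its rows, the additive relation for $\mu^{k,l+1}$ expands $Pf(0,1,\ldots,2n-1)^{k,l+1}$ as a sum over all $\epsilon\in\{0,1\}^{2n}$ of down-shifted Pfaffians under $(k,l)$. The purpose of the auxiliary row $c_0$ with $Pf(c_0,i)^{k,l}=(-1)^{i-1}$ is to repackage exactly this alternating sum: expanding $Pf(c_0,-1,0,\ldots,2n-1)^{k,l}$ along $c_0$ and inducting on $n$ (following the template of Lemma~\ref{lem:dis_lv1}) matches the expansion term by term. The odd case is the same with $d_0$ appended. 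Finally, the $(z+1)$-scaled polynomial identities combine the same multilinear/$c_0$ argument with the fact that the factor $(z+1)$, applied to $z^{2n}Pf(\ldots,z)^{k,l+1}$, produces precisely the $z^{-i}+z^{-(i-1)}$ combinations needed to match the $c_0$-augmented Pfaffian under $(k,l)$.

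The main obstacle I expect is the bookkeeping in the $l$-shift step, where one simultaneously tracks the multilinear expansion of $\mu^{k,l+1}$ and the non-homogeneous $(k,l)$-independence of the $z$-entries. Once the $c_0$ trick and the role of the $(z+1)$ prefactor are set up as above, the argument reduces to the induction already carried out in Lemma~\ref{lem:dis_lv1}, and no new Pfaffian identities are required beyond the expansion/multilinearity already used there.
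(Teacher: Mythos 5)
Your proposal is correct and follows exactly the route the paper intends: the paper omits the proof of this lemma, stating only that it follows ``by induction just as Lemma \ref{lem:dis_lv1},'' and your argument is precisely that induction adapted to the present moment relations. You also correctly identify and handle the two genuinely new bookkeeping points — the downward index shifts $\mu_{i,j}^{k+1,l}=\mu_{i-1,j-1}^{k,l}$ (so the auxiliary index $-1$ and the $c_0$ row with $Pf(c_0,i)=(-1)^{i-1}$ are appended at the bottom) and the $z$-entries $Pf(i,z)=z^{-i}$, for which $(z+1)z^{-i}=z^{-i}+z^{-(i-1)}$ and the relabeling $i\mapsto i-1$ account for the $z^{2n}$ versus $z^{2n-1}$ prefactors.
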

 
Moreover, we get
\begin{coro}
There holds the following relation between adjacent families of the polynomials:
\begin{align}
P_{n+1}^{k,l}=(z+1)P_n^{k-1,l+1}-z\frac{\tau_n^{k,l}\tau_{n+1}^{k-1,l+1}}{\tau_n^{k-1,l+1}\tau_{n+1}^{k,l}}P_n^{k,l}+z(z+1)\frac{\tau_{n+2}^{k,l}\tau_{n-1}^{k-1,l+1}}{\tau_n^{k-1,l+1}\tau_{n+1}^{k,l}}P_{n-1}^{k-1,l+1}.\label{psop_pade_rel_adj}
\end{align}
\end{coro}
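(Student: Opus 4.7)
The plan is to mimic the style of proof used earlier in Section~\ref{sec:is} (in particular the proofs of the corresponding Corollaries in subsections \ref{subsec:glv} and \ref{subsec:3btoda}), namely: clear the denominators in \eqref{psop_pade_rel_adj}, rewrite every shifted tau-function and every ``polynomial times tau'' appearing in it as a single Pfaffian with the common base index $(k-1,l)$ by means of Lemma~\ref{lem:pade1}, and then recognize the resulting identity as a standard Pfaffian identity of the type \eqref{pf1} or \eqref{pf2} from the appendix. This is exactly the pattern that already works for \eqref{psop_sem_lv_rel_adj}, \eqref{psop_dis_lv_rel_adj} and \eqref{psop_dis_3btoda_rel_adj}.

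Concretely, I would first multiply both sides of \eqref{psop_pade_rel_adj} by $(z+1)\,\tau_{n+1}^{k,l}\tau_{n}^{k-1,l+1}$ so that every term becomes a product of two quantities of the form $\tau\cdot P$ (or $(z+1)\tau\cdot P$). Then I would invoke Lemma~\ref{lem:pade1} to replace
\[
\tau_{N}^{k,l}P_{N}^{k,l},\qquad \tau_{N}^{k-1,l+1}P_{N}^{k-1,l+1}=\tau_{N}^{(k-1),l+1}P_{N}^{(k-1),l+1},
\]
and also the bare $\tau$'s, by Pfaffians all built from the same base Pfaffian entries $Pf(\cdot,\cdot)^{k-1,l}$ (the shift $k\mapsto k-1$ is the natural way to express the $(k,l)$ quantities via the lemma applied with $K=k-1$, while the $l\mapsto l+1$ quantities are handled by the second half of the lemma). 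After this substitution both sides of the identity are polynomials in $z$ whose coefficients are products of two Pfaffians with entries from a single family.

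At that point the identity should, after splitting into the even $n$ and odd $n$ cases, be precisely one of the standard four-term Pfaffian identities \eqref{pf1} and \eqref{pf2}. In the even case I expect the relevant identity to be
\[
Pf(c_0,-1,0,\ldots,2m,z)\,Pf(d_0,-1,0,\ldots,2m-1)
\]
expanded against the Pfaffian with both $c_0$ and $d_0$ inserted, producing exactly the three cross-terms that match $P_{n+1}^{k,l}$, $P_n^{k-1,l+1}$, $P_n^{k,l}$ and $P_{n-1}^{k-1,l+1}$; and in the odd case the analogous identity with one more index. The overall shape is identical to the way the Pfaffian identities were used in the proofs of \eqref{psop_dis_lv_rel_adj} and \eqref{psop_dis_3btoda_rel_adj}.

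The hard part is not conceptual but combinatorial: tracking the powers of $z$ that appear explicitly in Lemma~\ref{lem:pade1} (because $P_n^{k,l}$ here is \emph{not} monic but instead normalized to have constant term $1$, so each $\tau\, P$ formula carries a prefactor $z^{N-1}$ or $z^{N}$), together with the $(z+1)$ factors coming from the $l\mapsto l+1$ shift, and verifying that after cancellation the three Pfaffian products on the right of \eqref{psop_pade_rel_adj} have their $z$-weights aligned correctly with the one on the left. Once the bookkeeping is done and the even/odd split is made, the verification reduces to quoting \eqref{pf1} in one case and \eqref{pf2} in the other, finishing the proof.
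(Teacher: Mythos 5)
Your proposal follows essentially the same route as the paper's own proof: clear the denominators, use Lemma~\ref{lem:pade1} to rewrite every $\tau$ and $\tau\cdot P$ as a Pfaffian over a single base index, split into the even and odd cases, and identify the resulting bilinear relations with the Pfaffian identities \eqref{pf1} and \eqref{pf2} (with $d_0$ and $c_0$ playing the roles of the inserted characters). The paper's proof is no more detailed than yours on the $z$-power bookkeeping, so your plan matches it in both substance and level of rigor.
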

\begin{proof} By employ  Lemma \ref{lem:pade1}, we can rewrite \eqref{psop_pade_rel_adj} in terms of uniform index on $k,l$ as
\begin{align*}
&Pf(d_0,c_0,-1,\cdots,2n+1,z)^{k,l}Pf(-1,\cdots,2n)^{k,l}=Pf(d_0,c_0,-1,\cdots,2n)^{k,l}Pf(-1,\cdots,2n+1,z)^{k,l}\\
&\ -Pf(c_0,-1,\cdots,2n,z)^{k,l}Pf(d_0,-1,\cdots,2n+1)^{k,l}+Pf(d_0,-1,\cdots,2n,z)^{k,l}Pf(c_0,-1,\cdots,2n+1)^{k,l},\\
&Pf(d_0,c_0,-1,\cdots,2n-1,z)^{k,l}Pf(-1,\cdots,2n-2)^{k,l}=Pf(d_0,-1,\cdots,2n,z)^{k,l}Pf(c_0,-1,\cdots,2n-1)^{k,l}\\
&\ -Pf(c_0,-1,\cdots,2n,z)^{k,l}Pf(d_0,-1,\cdots,2n-1)^{k,l}+Pf(d_0,c_0,-1,\cdots,2n)^{k,l}Pf(-1,\cdots,2n-1,z)^{k,l},
\end{align*}
in the even and odd cases, respectively. These two equalities are valid since they are no other than Pfaffien identities in \eqref{pf1} and \eqref{pf2}.
\end{proof}

Comparing the coefficients of the highest degree on $z$ at the two sides of \eqref{psop_pade_rel_adj}, we obtain
 \begin{align}\label{eq:bi_pade}
 \tau_{n+2}^{k,l}\tau_{n-1}^{k-2,l+1}=\tau_{n+1}^{k,l}\tau_n^{k-2,l+1}-\tau_{n+1}^{k-1,l+1}\tau_n^{k-1,l}+\tau_{n+1}^{k-1,l}\tau_{n}^{k-1,l+1},
\end{align}
which belongs to the discrete BKP hierarchy \cite{gilson2003two,hirota2001soliton}.
If we introduce the variables
\begin{align*}
v_n^{k,l}=\frac{\tau_{n+1}^{k,l}}{\tau_{n}^{k,l}},\quad u_n^{k,l}=\frac{\tau_n^{k+1,l}}{\tau_{n}^{k,l}},\quad w_n^{k,l}=\frac{\tau_n^{k,l+1}}{\tau_{n}^{k,l}},
\end{align*}
then one obtains a nonlinear system with $u_n^{k,l},v_n^{k,l},w_n^{k,l}$ satisfying
\begin{subequations}\label{eq:nonl_pade}
\begin{align}
&\frac{v_{n+1}^{k+1,l}}{v_{n-1}^{k-1,l+1}}=1-\frac{w_{n+1}^{k,l}u_n^{k-1,l+1}}{u_{n+1}^{k,l}w_n^{k,l}}+\frac{u_n^{k-1,l+1}}{u_n^{k,l}},\\
&u_n^{k,l}w_n^{k,l-1}=w_n^{k+1,l-1}u_n^{k,l-1},\\
&u_{n+1}^{k,l}v_n^{k,l}=v_n^{k+1,l}u_n^{k,l}.
\end{align}
\end{subequations}


Next we show how the above results are related to the so-called generalized inverse vector-valued Pad\'e approximant (GIPA) \cite{graves1986vector,graves1996cayley,graves1997problems}. 

For a given real vector of formal power series ${\bm{f}}(z)=(f_1(z),f_2(z),\cdots, f_d(z))^\top$, its GIPA of type $[N/2K]$ is the rational function
$${\bf{R}}(z)={\bf{Q}}(z)/\tilde P(z)$$
defined uniquely by the axiomatic requirements that $\bf{Q}(z)$ is a vector polynomial and $\tilde P(z)$ is a scalar polynomial satisfying
\begin{enumerate}
\item deg $\{{\bf{Q}(z)}\}\leq N$, deg $\{\tilde P(z)\}=2K$,
\item $\tilde P(z)\mid\Vert{{\bf{Q}}(z)}\Vert ^2$,
\item $\tilde P(z){\bf{f}}(z)-{\bf{Q}}(z)=O(z^{N+1})$,
\item $\tilde P(0)\not=0$.
\end{enumerate} 
The existence and uniqueness of this problem was addressed in \cite{graves1986vector}. The solution of this problem has been clarified in \cite{graves1996cayley,graves1997problems}. The key for the solution is the derivation of the denominator $\tilde P(z)$. The corresponding numerator polynomial $\bf{Q}(z)$ could be determined by the third condition. In fact, it was shown that the denominator $\tilde P(z)$ could be written in a Pfaffian form 
\begin{eqnarray*}
\tilde P(z)=z^{2K}Pf(0,\cdots,2K-1)Pf(0,\cdots,2K,z),
\end{eqnarray*}
with Pfaffian elements
\begin{align*}
&Pf(i,j)=\iint_{\mathbb{T}^2}x^{2K-i}y^{2K-j}\omega(x,y)\rho(x)\rho(y)dxdy,\qquad Pf (i,z)=z^{-i},
\end{align*}
where 
$$
w(x,y)=-2\frac{{\bm{f}} (x) {\bm{f}} (y)^\top}{x-y}(xy)^{-N+1},\qquad \rho(x)=\frac{1}{2\pi \bm{I}}x^{-N+1},
$$
and 
$\mathbb{T}$ is the appropriate contour enclosing $x = 0$ in the complex plane, $\bm{I}$ is the imaginary unit. 

It is not hard to see that the denominator $\tilde P(z)$ is nothing but $\left(\tau_{2K}^{2K,0}\right)^2 P_{2K}^{2K,0}(z)$ in \eqref{psop_pade} with the above integral kernel $\omega(x,y)$ and the weight function $\rho(x)$ . If we let $\tilde P_{n}^{k,l}(z)=(\tau_{n}^{k,l})^2P_{n}^{k,l}(z)$, it follows from\eqref{psop_pade_rel_adj} that 
\begin{align*}
\tilde P_{n+1}^{k,l}=(z+1)\frac{(\tau_{n+1}^{k,l})^2}{(\tau_n^{k-1,l+1})^2}\tilde P_n^{k-1,l+1}-z\frac{\tau_{n+1}^{k,l}\tau_{n+1}^{k-1,l+1}}{\tau_n^{k-1,l+1}\tau_{n}^{k,l}}\tilde P_n^{k,l}+z(z+1)\frac{\tau_{n+1}^{k,l}\tau_{n+2}^{k,l}}{\tau_n^{k-1,l+1}\tau_{n-1}^{k-1,l+1}}\tilde P_{n-1}^{k-1,l+1}
\end{align*}
with $\tau_n^{k,l}$ satisfying \eqref{eq:bi_pade}. If the initial value is imposed as
\begin{align*}
&\tau_{-1}^{k,l}=0, \quad \tau_0^{k,l}=1,\quad  \tau_1^{k,l}=(d_0,0)^{k,l}, \quad \tau_2^{k,l}=Pf(0,1)^{k,l},\\
&\tilde P_{-1}^{k,l}=0, \qquad \qquad\qquad\qquad\qquad \tilde P_0^{k,l}=1.
\end{align*}
 then the denominator $\tilde P(z)=\tilde P_{2K}^{2K,0}(z)$ can be recursively computed after obtaining the $\tau_n^{k,l}$ recursively.\footnote{Details of the implementation and analysis for the algorithm will be addressed elsewhere. } In conclusion, one can recursively calculate the GIPA for a given vector of formal power series with the help of the integrable lattice \eqref{eq:bi_pade}.

 \subsection{For convergence acceleration}
 Consider the $k,l$-index moments:
\begin{align*}
&\mu_{i,j}^{k,l}=\iint_{\mathbb{R}^2}x^{k+i}y^{k+j}\omega(x,y)(1-x)^l(1-y)^l\rho(x)\rho(y)dxdy,\\
&\beta_i^{k,l}=\iint_{\mathbb{R}^2}x^{i}y^{k+i}\omega(x,y)(1-x)^{l-1}(1-y)^l\rho(x)\rho(y)dxdy.
\end{align*}
We note that $\beta_i^{k,l}$ involves double integrals, which is quite different from the previous cases. 

Now let's define a sequence of monic PSOPs ($\{P_n^{k,l}(z)\}_{n=0}^\infty, k,l\in \mathbb{N}$):
\begin{subequations}\label{psop_conv_acc}
\begin{align} 
&P_{2n}^{k,l}(z)=\frac{1}{\tau_{2n}^{k,l}}Pf(0,1,\cdots,2n-1,2n,z)^{k,l}, \label{psop_conv_acc_even}\\
&P_{2n+1}^{k,l}(z)=\frac{1}{\tau_{2n+1}^{k,l}}Pf(d_0,0,1,\cdots,2n,2n+1,z)^{k,l},\label{psop_conv_acc_odd}
\end{align}
\end{subequations}
where
\begin{align}
\tau_{2n}^{k,l}\triangleq Pf(0,1,\cdots,2n-1)^{k,l}\neq 0,\qquad \tau_{2n+1}^{k,l} \triangleq Pf(d_0,0,1,\cdots,2n)^{k,l}\neq0, \label{psop_conv_acc_tau}
\end{align}
with the Pfaffian entries
\begin{align*}
&Pf(i,j)^{k,l}=\mu_{i,j}^{k,l},\qquad\qquad\  Pf(d_0,i)^{k,l}=\beta_i^{k,l},\\
&Pf(i,z)^{k,l}=z^i,\qquad\qquad\ \ \   Pf(d_0,z)^{k,l}=0.
\end{align*}
Besides, we introduce an additional sequence of polynomials $\{Q_n^{k,l}(z)\}_{n=0}^\infty, k,l\in \mathbb{N}$ defined by
\begin{align}
Q_{2n}^{k,l}(z)=\frac{1}{\sigma_{2n}^{k,l}}Pf(c_0,d_0,0,\cdots,2n,z)^{k,z}, \label{psop_conv_acc_Q}
\end{align}
where 
$$\sigma_{2n}^{k,l}=Pf(c_0,d_0,0,\cdots,2n-1)^{k,l}$$
with the Pfaffian entries
$$Pf(c_0,i)^{k,l}=1,\qquad Pf(c_0,d_0)^{k,l}=0,\qquad Pf(c_0,z)^{k,l}=0.$$
Here we have the convention that $\tau_0^{k,l}=1$, $\sigma_0^{k,l}=0$ and $Q_0^{k,l}=1.$

It is clear to see that 
\begin{align*}
&\mu_{i,j}^{k+1,l}=\mu_{i+1,j+1}^{k,l},&&\beta_i^{k+1,l}=\beta_{i+1}^{k,l}+\mu_{0,i+1}^{k,l},\\
&\mu_{i,j}^{k,l+1}=\mu_{i,j}^{k,l}-\mu_{i+1,j}^{k,l}-\mu_{i,j+1}^{k,l}+\mu_{i+1,j+1}^{k,l},&& \beta_i^{k,l+1}=\mu_{0,i+1}^{k,l}-\mu_{0,i}^{k,l},
\end{align*}
which result in the following conclusion by induction as before. We omit the detailed proof here.
 \begin{lemma}\label{lem:conv_acc1}
 There hold the following relations for the $\tau_n^{k,l}$ defined in \eqref{psop_conv_acc_tau}:
  \begin{eqnarray*}
 && \tau_{2n}^{k+1,l}=Pf(1,2,\cdots,2n)^{k,l},\qquad\quad\ \tau_{2n+1}^{k+1,l}= Pf(d_0,1,2,\cdots,2n+1)^{k,l}+\tau_{2n+2}^{k,l}, \label{psop_conv_acc_tau_rela1}\\
  &&\tau_{2n}^{k,l+1}=Pf(c_0,0,1,\cdots,2n)^{k,l},\qquad \tau_{2n+1}^{k,l+1}=\tau_{2n+2}^{k,l},\label{psop_conv_acc_tau_rela2}\\
  &&\tau_{2n}^{k+1,l+1}=Pf(c_0,1,\cdots,2n+1)^{k,l},\qquad \\
    &&\sigma_{2n+1}^{k+1,l}=-Pf(d_0,c_0,1,\cdots,2n+2)^{k,l}+\tau_{2n}^{k,l+1}-\tau_{2n}^{k+1,l},\label{psop_conv_acc_tau_rela3}
  \end{eqnarray*}
and also, 
 \begin{align*}
&z\tau_{2n}^{k+1,l}P_{2n}^{k+1,l}=Pf(1,\cdots,2n+1,z)^{k,l},\\
&(z-1)\tau_{2n}^{k,l+1}P_{2n}^{k,l+1}=Pf(c_0,0,\cdots,2n+1,z)^{k,l},\\
&z\tau_{2n-1}^{k+1,l}P_{2n-1}^{k+1,l}-Pf(d_0,1,\cdots,2n,z)^{k,l}+\tau_{2n}^{k,l}P_{2n}^{k,l}-\tau_{2n}^{k+1,l},\\
&(z-1)\tau_{2n}^{k,l+1}P_{2n}^{k,l+1}=-Pf(c_0,d_0,1,\cdots,2n+1,z)^{k,l}+\sigma_{2n}^{k+1,l}Q_{2n}^{k+1,l}+z\tau_{2n}^{k+1,l}P_{2n}^{k+1,l}-\tau_{2n}^{k+1,l+1}.
\end{align*}
 \end{lemma}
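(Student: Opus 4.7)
The plan is to follow the inductive template established in Lemma \ref{lem:dis_lv1} and Lemma \ref{lem:dis_3btoda1}: translate each shift in the deformation parameters $k$ and $l$ into a Pfaffian-entry manipulation, and then recognize the shifted object as a Pfaffian over the original $(k,l)$-entries after augmenting with the auxiliary symbols $c_0$ and $d_0$. The base moment relations that drive every step are
\begin{align*}
&\mu_{i,j}^{k+1,l}=\mu_{i+1,j+1}^{k,l},\quad \beta_i^{k+1,l}=\beta_{i+1}^{k,l}+\mu_{0,i+1}^{k,l},\\
&\mu_{i,j}^{k,l+1}=\mu_{i,j}^{k,l}-\mu_{i+1,j}^{k,l}-\mu_{i,j+1}^{k,l}+\mu_{i+1,j+1}^{k,l},\quad \beta_i^{k,l+1}=\mu_{0,i+1}^{k,l}-\mu_{0,i}^{k,l}.
\end{align*}

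First I would dispose of the pure-$k$-shift identities. The even relation $\tau_{2n}^{k+1,l}=Pf(1,\ldots,2n)^{k,l}$ is an immediate relabelling using $\mu_{i,j}^{k+1,l}=\mu_{i+1,j+1}^{k,l}$. The odd relation $\tau_{2n+1}^{k+1,l}=Pf(d_0,1,\ldots,2n+1)^{k,l}+\tau_{2n+2}^{k,l}$ is more delicate because $\beta^{k+1,l}$ acquires an extra $\mu$-summand: I would expand $Pf(d_0,0,\ldots,2n)^{k+1,l}$ along the $d_0$ row, split $\beta_i^{k+1,l}$ into its two pieces, and reassemble them as two separate Pfaffians. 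The $\beta_{i+1}^{k,l}$ piece collapses, after an index shift in the remaining rows, into $Pf(d_0,1,\ldots,2n+1)^{k,l}$, while the $\mu_{0,i+1}^{k,l}$ piece is precisely the cofactor expansion of $Pf(0,1,\ldots,2n+1)^{k,l}=\tau_{2n+2}^{k,l}$ along its ``0'' index.

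For the $l$-shift identities I would proceed by induction on $n$, following the proof of \eqref{psop_dis_lv_tau_evo_even}. The double-difference structure of $\mu^{k,l+1}$ is matched by adjoining the auxiliary index $c_0$ with $Pf(c_0,i)^{k,l}=1$, and the telescoping that arises when expanding the Pfaffian reduces $Pf(c_0,0,\ldots,2n)^{k,l}$ to a Pfaffian whose entries are exactly $\mu_{i,j}^{k,l+1}$, yielding $\tau_{2n}^{k,l+1}$. The step I expect to be the principal obstacle is the parity-changing identity $\tau_{2n+1}^{k,l+1}=\tau_{2n+2}^{k,l}$, since it requires the $\beta$-type entries to dissolve completely into $\mu$-type entries. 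My approach is to expand $Pf(d_0,0,\ldots,2n)^{k,l+1}$ along $d_0$, substitute $\beta_i^{k,l+1}=\mu_{0,i+1}^{k,l}-\mu_{0,i}^{k,l}$, and use the induced telescoping to recognize the surviving alternating sum as the cofactor expansion of $Pf(0,1,\ldots,2n+1)^{k,l}$ along its first index.

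With these in hand, the mixed identities $\tau_{2n}^{k+1,l+1}=Pf(c_0,1,\ldots,2n+1)^{k,l}$ and $\sigma_{2n+1}^{k+1,l}=-Pf(d_0,c_0,1,\ldots,2n+2)^{k,l}+\tau_{2n}^{k,l+1}-\tau_{2n}^{k+1,l}$ are obtained by chaining the $k$- and $l$-shift arguments, invoking the odd-case formula for $\tau_{2n+1}^{k+1,l}$ together with the defining expansion of $\sigma$. Finally, each of the polynomial relations at the end of the lemma follows by running the same cofactor manipulations with the $z$-entry adjoined as an extra symbol, exactly as the analogous polynomial statements in Lemma \ref{lem:dis_lv1} and Lemma \ref{lem:dis_3btoda1} were deduced from their $\tau$-level counterparts.
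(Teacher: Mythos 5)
Your proposal is correct and follows essentially the same route as the paper, which for this lemma only records the moment shift relations $\mu_{i,j}^{k+1,l}=\mu_{i+1,j+1}^{k,l}$, $\beta_i^{k+1,l}=\beta_{i+1}^{k,l}+\mu_{0,i+1}^{k,l}$, $\mu_{i,j}^{k,l+1}=\mu_{i,j}^{k,l}-\mu_{i+1,j}^{k,l}-\mu_{i,j+1}^{k,l}+\mu_{i+1,j+1}^{k,l}$, $\beta_i^{k,l+1}=\mu_{0,i+1}^{k,l}-\mu_{0,i}^{k,l}$ and then states that the identities follow ``by induction as before,'' omitting all details. Your account actually supplies more of the mechanism than the paper does (the splitting of the $\beta^{k+1,l}$ row into a $Pf(d_0,\cdot)$ part and a cofactor expansion of $\tau_{2n+2}^{k,l}$, and the Wronski-type addition formula with $\lambda=-1$ giving $Pf(c_0,i)=1$), and these are exactly the right ingredients.
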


Then, we get some relations between $P_n^{k,l}$ and $Q_{2n}^{k,l}$.
\begin{coro}
There holds the following relation between the polynomials $\{P_{n}^{k,l}\}_{n=0}^\infty$ and $\{Q_{2n}^{k,l}\}_{n=0}^\infty$:
\begin{subequations} \label{psop_conv_acc_rel_adj}
\begin{align}
&\sigma_{2n}^{k,l}\tau_{2n}^{k+1,l}Q_{2n}^{k+1,l}=(\sigma_{2n}^{k+1,l}-\tau_{2n}^{k,l+1}+\tau_{2n}^{k+1,l})\tau_{2n}^{k,l}P_{2n}^{k,l}(x)
\nonumber\\
&-\tau_{2n}^{k,l+1}(z\tau_{2n-1}^{k+1,l}P_{2n-1}^{k+1,l}-\tau_{2n}^{k,l}P_{2n}^{k,l}+\tau_{2n}^{k+1,l})+z(z-1)\tau_{2n+1}^{k,l}\tau_{2n-2}^{k+1,l+1}P_{2n-2}^{k+1,l+1},\\
&\tau_{2n+2}^{k,l}(\sigma_{2n}^{k+1,l}Q_{2n}^{k+1,l}+z\tau_{2n}^{k+1,l}P_{2n}^{k+1,l}-(z-1)\tau_{2n}^{k,l+1}P_{2n}^{k,l+1}-\tau_{2n}^{k+1,l+1})=\nonumber\\
&z\sigma_{2n+2}^{k,l}\tau_{2n}^{k+1,l}P_{2n}^{k+1,l}
-\tau_{2n}^{k+1,l+1}\tau_{2n+1}^{k,l}P_{2n+1}^{k,l}+(z-1)(\tau_{2n+1}^{k+1,l}-\tau_{2n+2}^{k,l})\tau_{2n}^{k,l+1}P_{2n}^{k,l+1}.
\end{align}
\end{subequations}
\end{coro}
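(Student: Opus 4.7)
The plan is to follow the same scheme used in the earlier ``relation between adjacent families'' corollaries of the paper, such as the ones preceding \eqref{psop_dis_lv_rel_adj} and \eqref{psop_dis_3btoda_rel_adj}: reduce the claimed identities \eqref{psop_conv_acc_rel_adj} to standard Pfaffian identities of the type \eqref{pf1}--\eqref{pf2} by re-expressing every object carrying a shifted $(k,l)$ index in terms of Pfaffians with the common index $(k,l)$.

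First I would invoke Lemma \ref{lem:conv_acc1} to rewrite each ingredient on both sides: the shifted tau- and sigma-functions $\tau_n^{k+1,l}$, $\tau_n^{k,l+1}$, $\tau_n^{k+1,l+1}$, $\sigma_n^{k+1,l}$, together with the wave-function products $\tau_n^{k+1,l}P_n^{k+1,l}$, $\tau_n^{k,l+1}P_n^{k,l+1}$ and $\sigma_n^{k+1,l}Q_n^{k+1,l}$. In each case Lemma \ref{lem:conv_acc1} provides either a single Pfaffian expression in $(k,l)$ indexing, or (for the odd-indexed $\tau_{2n+1}^{k+1,l}$, for $\sigma_{2n+1}^{k+1,l}$, and for the wave-function product $(z-1)\tau_{2n}^{k,l+1}P_{2n}^{k,l+1}$ when written against $Q$-data) a sum of two such Pfaffians; the latter splits the putative identity into several Pfaffian sub-identities that can be treated separately.

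After this substitution, both sides of equation (a) and equation (b) become linear combinations of products of the form $Pf(\Lambda_1)^{k,l}\,Pf(\Lambda_2)^{k,l}$, where the index lists $\Lambda_i$ are drawn from $\{c_0,d_0,0,1,\ldots,2n+1,z\}$. I would then regroup the terms so that each equality becomes a four-term Pfaffian identity in the spirit of \eqref{pf1}--\eqref{pf2}: the typical shape is an expansion of $Pf(c_0,d_0,0,\ldots,2n+1,z)^{k,l}\,Pf(1,\ldots,2n)^{k,l}$ (or its odd-order analogue) along the two formal indices $c_0$ and $d_0$, whose four summands match, term by term, the right-hand side of the claim.

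The main obstacle I anticipate is purely the combinatorial bookkeeping. Equation (a) already couples $Q_{2n}^{k+1,l}$ with $P_{2n}^{k,l}$, $P_{2n-1}^{k+1,l}$ and $P_{2n-2}^{k+1,l+1}$, so after substitution one obtains a six- or seven-term relation whose regrouping into a small number of Pfaffian identities is delicate; equation (b) is even more entangled, involving $Q_{2n}^{k+1,l}$, $P_{2n}^{k+1,l}$, $P_{2n}^{k,l+1}$ and $P_{2n+1}^{k,l}$ simultaneously. Particular care will be needed to track signs and the additive corrections arising from Lemma \ref{lem:conv_acc1} (for instance the $+\tau_{2n+2}^{k,l}$ hidden inside $\tau_{2n+1}^{k+1,l}$ and the minus sign inside $\sigma_{2n+1}^{k+1,l}$), but no conceptual ingredient beyond the Pfaffian identity technology of Appendix \ref{app_pf} is required.
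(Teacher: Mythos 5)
Your proposal is correct and follows essentially the same route as the paper: the authors likewise invoke Lemma \ref{lem:conv_acc1} to rewrite every shifted-index quantity in uniform $(k,l)$ Pfaffian form, whereupon each of the two equalities collapses to a single four-term Pfaffian identity of the type \eqref{pf1}--\eqref{pf2} (an expansion of $Pf(c_0,d_0,0,\ldots,2n,z)\,Pf(1,\ldots,2n)$ and its odd-order analogue). The combinatorial bookkeeping you flag is exactly the content of the paper's displayed identities, so no further ingredient is needed.
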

\begin{proof} By employ  Lemma \ref{lem:conv_acc1}, we can rewrite \eqref{psop_conv_acc_rel_adj} in terms of uniform index on $k,l$ as
\begin{align*}
&Pf(c_0,d_0,0,1,\cdots,2n,x)Pf(1,\cdots,2n)=Pf(c_0,d_0,1,\cdots,2n)Pf(0,1,\cdots,2n,x)\\
&\ Pf(c_0,1,\cdots,2n,0)Pf(d_0,1,\cdots,2n,x)+Pf(c_0,1,\cdots,2n,x)Pf(d_0,0,1,\cdots,2n),\\
&Pf(c_0,d_0,1,\cdots,2n+1,x)Pf(0,1,\cdots,2n+1)=Pf(c_0,d_0,0,1,\cdots,2n+1)Pf(1,\cdots,2n+1,x)\\
&\ -Pf(d_0,0,1,\cdots,2n+1,x)Pf(c_0,1,\cdots,2n+1)+Pf(c_0,0,1,\cdots,2n+1,x)Pf(d_0,1,\cdots,2n+1),
\end{align*}
which are nothing but Pfaffian identities in \eqref{pf1} and \eqref{pf2}. Thus we complete the proof.
\end{proof}

Comparing the coefficients of the highest degree on the two sides of \eqref{psop_conv_acc_rel_adj} and noticing that
$$\tau_{2n+1}^{k,l}=\tau_{2n+2}^{k,l-1},$$
we obtain
\begin{subequations}\label{pfa_algo_bilinear}
\begin{align}
&\sigma_{2n}^{k,l}\tau_{2n}^{k+1,l}-\sigma_{2n}^{k+1,l}\tau_{2n}^{k,l}=\tau_{2n}^{k+1,l}\tau_{2n}^{k,l}+\tau_{2n+2}^{k,l-1}\tau_{2n-2}^{k+1,l+1}-\tau_{2n}^{k,l+1}\tau_{2n}^{k+1,l-1},\\
&\sigma_{2n+2}^{k,l}\tau_{2n}^{k+1,l}-\sigma_{2n}^{k+1,l}\tau_{2n+2}^{k,l}=\tau_{2n}^{k+1,l}\tau_{2n+2}^{k,l}+\tau_{2n}^{k+1,l+1}\tau_{2n+2}^{k,l-1}-\tau_{2n}^{k,l+1}\tau_{2n+2}^{k+1,l-1}.
\end{align}
\end{subequations}
If we introduce the variables
\begin{align*}\label{trans}
u_n^{k,l}=\frac{\tau_{2n+2}^{k,l-1}}{\tau_{2n}^{k+1,l}},\quad v_{n}^{k,l}=\frac{\sigma_{2n}^{k,l}}{\tau_{2n}^{k,l}},\quad r_n^{k,l}=\frac{\tau_{2n}^{k,l}}{\tau_{2n}^{k+1,l-1}},
\end{align*}
then $u_n^k$ $v_n^k$,  and $r_n^k$ satisfy
\begin{subequations}
\begin{align*}
&r_{n+1}^{k,l}=\frac{u_n^{k,l+1}}{u_n^{k+1,l}}r_n^{k+1,l+1},\\
&v_{n+1}^{k,l}-v_n^{k+1,l}=1+\frac{u_{n}^{k,l}}{u_n^{k,l+1}}-\frac{r_{n}^{k,l+1}}{r_{n+1}^{k,l}}, \\
&\frac{u_{n+1}^{k,l}}{u_{n}^{k,l+1}}-\frac{r_{n+1}^{k,l+1}}{r_{n+1}^{k,l}}=v_{n+1}^{k,l}-v_{n+1}^{k+1,l}-1, \quad k,n,l=0,1,\ldots. 
\end{align*}
\end{subequations}
As is shown in \cite{chang2017new}, when we impose initial values on the above recurrence system by:
\begin{align*}
r_{0}^{k,l}=1, \quad v_{0}^{k,l}=0, \quad u_{0}^{k,l}=\Delta^{2l}S_{k}, \quad k,l=0,1,\ldots,
\end{align*}
it can be used to accelerate the convergence of some kind of given sequence $\{S_k\}_{k=0}^\infty$. Please refer \cite{chang2017new} for more details on the convergence acceleration algorithm for sequence transformations involving Pfaffians.

\section{Conclusion and discussions}\label{sec:conc}
In the present paper, we have introduced the concept of the PSOPs together with some specific skew-symmetric integral kernels and weight functions. Appropriately deforming the weight leads to nine different types of integrable lattices together with their Pfaffian tau-functions, which also have multiple integral representations. Two of the derived integrable systems can be regarded as algorithms for certain convergence acceleration of sequence transformations and vector Pad\'e approximations. This gives the first connection between the theory of integrable systems and vector Pad\'e approximations.

It is an unexpected thing for us to obtain so many interesting integrable lattices by deforming the PSOPs. There are several directions that can be pursued for future work.  For example, one direction is to explore more derivatives of PSOPs and to produce more integrable hierarchies with the help of the PSOPs and their variants. Second, the associated numerical algorithms remain to be investigated more deeply. Furthermore,  the PSOPs and the related integrable systems may be helpful for studying more matrix models. 


 \section{Acknowledgements}
This work was supported in part by the National Natural Science Foundation of China \# 11331008, 11371251, 11571358, 11701550, 11731014. We thank Professors Jonathan Nimmo and  Junxiao Zhao for their helpful discussions on full-discrete B-Toda lattices.
 \begin{appendix}

\section{On the Pfaffian}\label{app_pf} 
We include some useful materials on Pfaffians in this appendix. Let's begin with the following definition.
\begin{define}
Given a skew-symmetric matrix of order $2N$: $A=(a_{i,j})_{i,j=1}^{2N}$,  a Pfaffian of order $N$ is defined by means of the formula
\begin{align}
Pf(A)=\sum_{P}(-1)^Pa_{i_1,i_2}a_{i_3,i_4}\cdots a_{i_{2N-1},i_{2N}}.
\end{align}
The summation means the sum over all possible combinations of pairs selected from $\{1,2,\cdots,2N\}$ satisfying
\begin{align*}
&i_{2l-1}<i_{2l+1},\qquad i_{2l-1}<i_{2l}.
\end{align*}
The factor $(-1)^P$ takes the value $+1 (-1)$ if the sequence $i_1, i_2, . . . , i_{2N}$ is an even (odd) permutation of $1, 2, . . . , 2N.$ Usually, there are the conventions that the Pfaffian of order $0$ is 1 and that for negative order is $0$.
\end{define}
\begin{remark}
It has been shown rigorously by Muir \cite{muir1882treatise} that there holds 
\begin{align} \label{det_pf1}
(Pf(A))^2=\det(A)
\end{align} 
under the above definition.
\end{remark} 
\begin{remark}
There exist many notations for the Pfaffian $Pf(A)$. For our convenience, we would like to use the notation due to Hirota \cite{hirota2004direct}, that is, the notation
$$Pf(1,2,\cdots,2N)$$
based on the Pfaffian entries $Pf(i,j)$ denotes the Pfaffian $Pf(A)$ for the skew-symmetric matrix $A=(a_{i,j})_{i,j=1}^{2N}$ with $a_{i,j}=Pf(i,j)$. 
\end{remark}  
\begin{remark}\label{rem:pf2}
On many occasions,  there appear not only numbers such as $1,2,..$, but also some characters $a,b,...$ (or $1^*,2^*...$) in the expressions of the Pfaffian. It will be clear if one get the corresponding antisymmetric matrix in one's mind. For instance,  
\[Pf(a_1,a_2,1,2,\cdots,2N-2)=Pf(B),\] 
where $B=(b_{i,j})$ is the antisymmetric matrix of order $2N\times 2N$ with the entries defined by
\[ b_{ij}=\left\{\begin{array}{ll}
Pf(a_1,a_2),& i=1,\ j=2,\\
Pf(a_1,j-2), & i=1,\ 3\leq j\leq 2N,\\
Pf(a_2,j-2),& i=2,\ 3\leq j\leq 2N,\\
Pf(i-2,j-2),& 3\leq i, j\leq 2N.
\end{array}\right.
\]
Another example is 
\[Pf(1,2,\cdots,N,N^*,\cdots,2^*,1^*)=Pf(C),\] 
where
$C=(c_{i,j})$ is the antisymmetric matrix of order $2N\times 2N$ with the entries defined by
\[ c_{ij}=\left\{\begin{array}{ll}
Pf(i,j),& 1\leq i,j\leq N,\\
Pf(i,(2N+1-j)^*), & 1\leq i\leq N,\ N+1\leq j\leq 2N,\\
Pf((2N+1-i)^*,(2N+1-j)^*),& \ N+1\leq i,j\leq 2N.
\end{array}\right.
\]
\end{remark}

A Pfaffian has some basic properties similar to those for a determinant. Before we present these properties, it is necessary to point out that a row ( or column) of a Pfaffian will be imaged as that for the corresponding antisymmetric matrix. 
\begin{prop}
\begin{enumerate}
\item Multiplication of a row and a column by a constant is equivalent to multiplication of the Pfaffian by the same constant.
That is, 
$$Pf(1,2\cdots,c\cdot i,\cdots,2N)=c\cdot Pf(1,2\cdots,i,\cdots,2N).$$
\item Simultaneous interchange of the two different rows and corresponding columns changes the sign of the Pfaffian.
That is, 
$$Pf(1,2\cdots,i,\cdots,j,\cdots,2N)=-Pf(1,2\cdots, j,\cdots,i,\cdots,2N).$$
\item A multiple of a row and corresponding column added to another row and corresponding column does not change the value of the Pfaffian.
That is, 
$$Pf(1,2\cdots,i+c\cdot j,\cdots,j,\cdots,2N)=Pf(1,2\cdots, i,\cdots,j,\cdots,2N).$$
\item 
A Pfaffian admits similar Laplace expansion, that is, for a fixed $i$, $1\leq i\leq 2N$,
\begin{eqnarray*}
Pf(1,2,\ldots 2N)&=&\sum_{1\leq j\leq 2N,j\neq i}(-1)^{i+j-1}Pf(i,j)Pf(1,\cdots,\hat{i},\cdots,\hat{j},\cdots,2N),
\end{eqnarray*}
where $\hat{j}$ denotes that the index $j$ is omitted. 
If we choose $i$ as $1$ or $2N$, we have
\begin{eqnarray*}
Pf(1,2,\ldots 2N)
&=&\sum_{j=2}^{2N}(-1)^jPf(1,j)Pf(2,3,\ldots,\hat{j},\ldots,2N)\\
&=&\sum_{j=1}^{2N-1}(-1)^{j+1}Pf(1,2,\ldots,\hat{j},\ldots,2N-1)Pf(j,2N).
\end{eqnarray*}
Furthermore, if $Pf(a_0,b_0)=0,$ we have the expansion
\begin{align*}
Pf(a_0,b_0,1,2,\cdots,2N)=\sum\limits_{1\leq j<k\leq 2N}Pf(a_0,b_0,j,k)Pf(1,2,\cdots,\hat j,\cdots,\hat k,\cdots,2N).
\end{align*}

\end{enumerate} 
\end{prop}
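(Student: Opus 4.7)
My plan is to argue all four items directly from the combinatorial definition of $Pf(A)=\sum_P(-1)^P a_{i_1,i_2}\cdots a_{i_{2N-1},i_{2N}}$, with Properties (1)--(3) handled as immediate bookkeeping on the perfect-matching expansion, and Property (4) (the Laplace expansion) handled as the only genuinely non-trivial item. Throughout, I will use the fact that each summand in the Pfaffian involves exactly one entry from row $i$ (equivalently, one entry of the form $a_{i,*}$ or $a_{*,i}$), together with the convention $a_{i,j}=-a_{j,i}$ for the skew-symmetric matrix $A$.

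\emph{Properties (1)--(3).} For (1), multiplying row $i$ and column $i$ by $c$ multiplies every entry $a_{i,j}$ and $a_{j,i}$ by $c$; since each term in the sum $\sum_P$ contains exactly one such factor, the whole Pfaffian scales by $c$. For (2), swapping rows/columns $i$ and $j$ induces the transposition $(i,j)$ on the index set $\{1,\dots,2N\}$; this transposition multiplies each sign $(-1)^P$ by $-1$, and a reindexing of the sum over $P$ shows the Pfaffian changes sign. For (3), by (2) it suffices to show that adding $c$ times row/column $j$ to row/column $i$ leaves the Pfaffian invariant; this amounts to showing that a Pfaffian with two equal (indexed) rows vanishes, which follows from (2) by noting that swapping those two rows both changes the sign and leaves the matrix unchanged.

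\emph{Property (4): Laplace expansion.} This is the step I expect to be the main obstacle, since it requires the sign bookkeeping to actually match $(-1)^{i+j-1}$. The idea is to group the matchings $P$ in the defining sum according to the partner $j$ of the chosen index $i$: each term factors as $a_{i,j}\cdot(\text{matching on } \{1,\dots,2N\}\setminus\{i,j\})$, and the inner factor is, up to sign, the Pfaffian of the $(2N-2)$-order principal submatrix $Pf(1,\dots,\hat i,\dots,\hat j,\dots,2N)$. The sign $(-1)^{i+j-1}$ comes from the number of transpositions needed to move the indices $i,j$ into the first two positions of the matching (assuming $i<j$, bring $i$ to position $1$ at a cost of $i-1$ transpositions, and then $j$ to position $2$ at a cost of $j-2$; together $(-1)^{i+j-3}=(-1)^{i+j-1}$). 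The special cases $i=1$ and $i=2N$ are then direct corollaries.

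\emph{The block expansion.} For the final expansion involving a zero pair $Pf(a_0,b_0)=0$, I would apply the general Laplace expansion twice --- once to pull $a_0$ out, then to the resulting smaller Pfaffians to pull $b_0$ out --- and collect the terms. The hypothesis $Pf(a_0,b_0)=0$ kills the contribution in which $a_0$ and $b_0$ would be paired together, leaving only the double sum $\sum_{j<k} Pf(a_0,b_0,j,k)\,Pf(1,\dots,\hat j,\dots,\hat k,\dots,2N)$; a careful accounting of the signs accumulated when moving the pair $(a_0,b_0,j,k)$ into leading position matches the sign convention on the right-hand side, completing the proof. The main obstacle in the whole plan is precisely this sign tracking, which I would do by fixing a canonical ordered-pair representation of each matching and counting inversions explicitly rather than manipulating permutations in the abstract.
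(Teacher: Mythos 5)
The paper states this Property without proof, treating items (1)--(4) as standard facts quoted from the classical Pfaffian literature (cf.\ the references to Hirota's book in this appendix), so there is no in-paper argument to compare against; your combinatorial proof from the defining sum over perfect matchings is the standard one and its outline is sound. Items (1)--(3) are handled correctly: each matching uses exactly one entry from row/column $i$, the transposition argument for (2) is right, and reducing (3) to the vanishing of a Pfaffian that is invariant under the interchange of two of its indices is the clean way to finish. For the Laplace expansion, grouping the matchings by the partner $j$ of $i$ and counting $(i-1)+(j-2)$ transpositions to get $(-1)^{i+j-3}=(-1)^{i+j-1}$ is correct, and your two-step plan for the block formula (expand along $a_0$, then along $b_0$, with $Pf(a_0,b_0)=0$ killing the term pairing $a_0$ with $b_0$) goes through; a slightly shorter route is to group the matchings of $\{a_0,b_0,1,\dots,2N\}$ directly by the unordered pair $\{j,k\}$ of partners of $a_0,b_0$ and use $Pf(a_0,b_0,j,k)=Pf(a_0,j)Pf(b_0,k)-Pf(a_0,k)Pf(b_0,j)$ under the hypothesis $Pf(a_0,b_0)=0$.

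One point you should make explicit: your sign computation is carried out ``assuming $i<j$''. If you run the same count for $j<i$ (bring $j$ to position $1$ at cost $j-1$, then $i$ to position $2$ at cost $i-2$), the matched entry that appears is $Pf(j,i)=-Pf(i,j)$, so the coefficient of $Pf(i,j)$ becomes $(-1)^{i+j}$ rather than $(-1)^{i+j-1}$. Hence the uniform sign $(-1)^{i+j-1}$ in the displayed general formula is valid only for $j>i$ --- one checks on $Pf(1,2,3,4)$ expanded along $i=2$ that the $j=1$ term needs the opposite sign --- whereas the two special cases $i=1$ and $i=2N$ written out in the statement are correct (and are the only ones the paper uses). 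This is not a gap in your method; it is a discrepancy that a careful execution of your own sign bookkeeping would surface, and you should either restrict the general display to $j>i$ or record the case distinction.
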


\begin{remark}
\textit{Pfaffian} is a more general algebraic tool than \textit{determinant}. In fact, every N-order determinant can be expressed by a Pfaffian. Given a determinant
\begin{align*}
B=|(b_{i,j})|_{N\times N},
\end{align*}
then it can be expressed by a Pfaffian
$$
B=Pf (1,2,\cdots,N,N^*,\ldots,2^*,1^*),
$$
where the Pfaffian entries $Pf(i,j)$, $Pf(i^*,j^*)$, $Pf(i,j^*)$ are defined by
\begin{align*}
Pf(i,j)=Pf(i^*,j^*)=0,\ \ Pf(i,j^*)=b_{i,j},\ 1\leq i,j\leq N.
\end{align*}
Conversely, it fails.
\end{remark}

\subsection{Formulae related to determinants and Pfaffians}
There are many interesting connections between some determinants and Pfaffians. First,  we need the following basic facts:
\begin{enumerate}[I.]
\item
 For a skew-symmetric matrix $A_{2n-1}$ of size $2n -1$ augmented with an arbitrary row and column, there holds 
 \begin{align}\label{det_pf_odd}
\det\left(
\begin{array}{ccc}
A_{2n-1}&\vline&
\begin{array}{c}
x_1\\
\vdots\\
x_{2n-1}
\end{array}\\
\hline
\begin{array}{ccc}
-y_1&\cdots&-y_{2n-1}
\end{array}
&\vline&z
\end{array}
\right)=Pf(B_{2n})Pf(C_{2n}),
 \end{align}
 where 
 $$
 B_{2n}=\left(
\begin{array}{ccc}
A_{2n-1}&\vline&
\begin{array}{c}
x_1\\
\vdots\\
x_{2n-1}
\end{array}\\
\hline
\begin{array}{ccc}
-x_1&\cdots&-x_{2n-1}
\end{array}
&\vline&0
\end{array}
\right),\quad  C_{2n}=\left(
\begin{array}{ccc}
A_{2n-1}&\vline&
\begin{array}{c}
y_1\\
\vdots\\
y_{2n-1}
\end{array}\\
\hline
\begin{array}{ccc}
-y_1&\cdots&-y_{2n-1}
\end{array}
&\vline&0
\end{array}
\right).
$$

\item For a skew-symmetric matrix $A_{2n}$ of size $2n$ augmented with an arbitrary row and column, there holds 
 \begin{align}\label{det_pf_even}
\det\left(
\begin{array}{ccc}
A_{2n}&\vline&
\begin{array}{c}
x_1\\
\vdots\\
x_{2n}
\end{array}\\
\hline
\begin{array}{ccc}
-y_1&\cdots&-y_{2n}
\end{array}
&\vline&z
\end{array}
\right)=Pf(A_{2n})Pf(B_{2n+2}),
 \end{align}
 where 
 $$
 B_{2n+2}=\left(
\begin{array}{cccc}
A_{2n}&\vline&
\begin{array}{cc}
x_1&y_1\\
\vdots&\vdots\\
x_{2n}&y_{2n}
\end{array}\\
\hline
\begin{array}{ccc}
-x_1&\cdots&-x_{2n}\\
-y_1&\cdots&-y_{2n}
\end{array}
&\vline&
\begin{array}{cc}
0&z\\
-z&0
\end{array}
\end{array}
\right).
$$
\end{enumerate}

Furthermore, The results below due to de Bruijn \cite{de1955some} are useful for us. Note that the integral interval $[a,b]$ can be arbitrarily ordered set.

\begin{enumerate}[I.]
\item Let $s(x,y)$ satisfy $s(x,y)=-s(y,x)$ and the matrix $S(x_1,\cdots,x_{2N})$ is skew-symmetric with the entries $s(x_i,x_j)$. Then there holds 
\begin{align}
\idotsint\limits_{a<x_1<\cdots<x_{2N}<b} Pf(S(x_1,\cdots,x_{2N}))\det[\varphi_i(x_j)]_{i,j=1,\cdots,2N}dx_1\cdots dx_{2N}=Pf(1,2,\ldots, 2N), \label{id_deBr1}
\end{align}
where
$$
Pf(i,j)={\iint\limits_{a<x,y<b}}\varphi_i(x)\varphi_j(y)s(x,y) dx dy.
$$
If we introduce a skew-symmetric matrix 
$$ S(x_1,\cdots,x_{2N+1})=
\left(
\begin{array}{ccc}
0&\vline&
\begin{array}{ccc}
1&\cdots&1
\end{array}
\\
\hline
\begin{array}{c}
-1\\
\vdots\\
-1
\end{array}
&\vline&s(x_i,x_j)
\end{array}
\right)_{(2N+2)\times (2N+2)},$$ then we have
\begin{align}
\idotsint\limits_{a<x_1<\cdots<x_{2N+1}<b}  Pf(S(x_1,\cdots,x_{2N+1}))\det[\varphi_i(x_j)]_{i,j=1,\cdots,2N+1}dx_1\cdots dx_{2N+1}=Pf(d_0,1,2,\ldots, 2N+1), \label{id_deBr2}
\end{align}
where
$$
Pf(d_0,i)=\int_a^b\varphi_i(y)dy, \qquad Pf(i,j)={\iint\limits_{a<x,y<b}}\varphi_i(x)\varphi_j(y)s(x,y) dx dy.
$$
\item Another integral formula reads:
\begin{align}
\int_a^b\cdots\int_a^b  \det[\varphi_i(x_j),\psi_i(x_j)]_{i=1,\cdots,2N;j=1,\cdots,N}dx_1\cdots dx_{N}=N!\ Pf(1,2,\ldots, 2N), \label{id_deBr3}
\end{align}
where
$$
Pf(i,j)={\int_a^b}\varphi_i(x)\psi_j(x)-\varphi_j(x)\psi_i(x)dx.
$$

\end{enumerate}

\subsection{Formulae for special Pfaffians}
 The following formulae play important roles in our paper. 
\begin{enumerate}[I.]

\item
 A central role is played by Schur's Pfaffian identity\cite{ishikawa2006generalizations,schur1911uber} in enumerative combinatorics and representation theory, which reads
\begin{align}
Pf(1,2,\ldots, 2N)=\prod_{1\leq i<j\leq 2N}\frac{s_i-s_j}{s_i+s_j}, \qquad \text{where} \qquad Pf(i,j)=\frac{s_i-s_j}{s_i+s_j},\label{id_schur}
\end{align}
for even case. In the odd case, if $Pf(a_0,i)=1$, we have 
\begin{align}
Pf(a_0,1,2,\ldots, 2N-1)=\prod_{1\leq i<j\leq 2N-1}\frac{s_i-s_j}{s_i+s_j}, \qquad \text{where} \qquad Pf(i,j)=\frac{s_i-s_j}{s_i+s_j}.\label{id_schur_odd}
\end{align}

\item
There hold some derivative formulae for some special Pfaffians \cite{hirota2004direct}.
\begin{enumerate}[(1)]
\item If the $x$-derivative of a Pfaffian entry $Pf(i,j)$ satisfy $\frac{\partial}{\partial x}Pf(i,j)=Pf(i+1,j)+Pf(i,j+1)$, then 
\begin{align}
\frac{\partial}{\partial t}Pf(i_1,\cdots,i_{2N})&=\sum_{k=1}^{2N}Pf(i_1,i_2,\cdots,i_k+1,\cdots,i_{2N}),\label{der1_gen}
\end{align}
which gives as a special case
\begin{align}
\frac{\partial}{\partial x}Pf(1,2,\cdots,2N)&=Pf(1,2,\cdots,2N-1,2N+1) .\label{der1}
\end{align}
If $\frac{\partial}{\partial x}Pf(a_0,i)=Pf(a_0,i+1),$ then
\begin{align}
\frac{\partial}{\partial x}Pf(a_0,1,2,\cdots,2N-1)&=Pf(a_0,1,2,\cdots,2N-2,2N) .\label{der1_odd}
\end{align}


\item If  $\frac{\partial}{\partial x} Pf(i,j)=Pf(a_0,b_0,i,j)$ and $Pf(a_0,b_0)=0$, then
    \begin{align}
        \frac{\partial}{\partial x} Pf(i_1,i_2,\ldots,i_{2N})=Pf(a_0,b_0,i_1,i_2,\ldots,i_{2N}).\label{der2_1}
    \end{align}
    If $\frac{\partial}{\partial x} Pf(i,j)=Pf(a_0,b_0,i,j)$ , $\frac{\partial}{\partial x} Pf(a_0,j)=Pf(b_0,j)$, then
     \begin{align}
        \frac{\partial}{\partial x} Pf(a_0,i_1,i_2,\ldots,i_{2N-1})=Pf(b_0,i_1,i_2,\ldots,i_{2N-1}).\label{der2_2}
    \end{align}
\end{enumerate}

\begin{remark}
The first type formula can be named as the Wronski-type formula  since it has exactly the same form as the differential rule for the Wronskian determinant. The second type is referred as Gram-type formula. See \cite{hirota2004direct} for more details. The discrete counterparts of Gram-type are listed in \eqref{add_g1}-\eqref{add_g2}, and  Wronski-type in \eqref{add_w1}-\eqref{add_w2}.
\end{remark}
\item There hold the discrete Gram-type formulae (also called addition formulae) \cite{hirota2004direct,hirota2013additional,ohta2004special}.

If $Pf(i^*,j^*)=Pf(i,j)+\lambda Pf(a_0,b_0,i,j)$ and $Pf(a_0,b_0)=0$, then \footnote{In the main body, $\lambda$ is taken as $1$ or $-1$.}
\begin{align}
Pf(1^*,2^*,\cdots,(2N)^*)= Pf(1,2,\ldots,2N)+\lambda Pf(a_0,b_0,1,2,\ldots,2N).\label{add_g1}
\end{align}

 If $Pf(a_0,i^*)=Pf(a_0,i)+\lambda Pf(b_0,i)$, then 
\begin{align}
Pf(a_0,1^*,2^*,\cdots,(2N-1)^*)= Pf(a_0,1,2,\ldots,2N-1)+\lambda Pf(b_0,1,2,\ldots,2N-1).\label{add_g2}
\end{align}
\item The following discrete Wronski-type formulae can be found in \cite{miki2011discrete,ohta2004special}.

If $Pf(i^*,j^*)=\lambda^2Pf(i,j)+\lambda Pf(i+1,j)+ \lambda Pf(i,j+1)+Pf(i+1,j+1)$, then 
\begin{align}
Pf(1^*,2^*,\cdots,(2N)^*)= Pf(c_0,1,2,\ldots,2N+1).\label{add_w1}
\end{align}
In addition, if $Pf(a_0,i^*)=\lambda Pf(a_0,i)+(a_0,i+1)$, then 
\begin{align}
Pf(a_0,1^*,2^*,\cdots,(2N-1)^*)= Pf(a_0,c_0,1,2,\ldots,2N).\label{add_w2}
\end{align}
Here $(c_0,i)=(-\lambda )^{i-1}, (a_0,c_0)=0.$
\end{enumerate}
\begin{remark}
Usually, all of the above formulae for the general cases may be derived by induction. In the main body, we will use the induction to develop some extra formulae.
\end{remark}

\subsection{Bilinear identities on Pfaffians}
As is known, there exist many identities in the theory of determinants. As a more generalized mathematical tool than determinants, Pfaffians also allow various kinds of identities. We will give two common Pfaffian identities in the following, which can be found in \cite{hirota2004direct}:
\begin{align}
&Pf(a_1,a_2,a_3,a_4,1,\ldots,2N)Pf(1,2,\ldots,2N)\nonumber\\
=&\sum_{j=2}^{4}(-1)^jPf(a_1,a_j,1,\ldots,2N)Pf(a_2,\hat{a}_j,a_{4},1,\ldots,2N),\label{pf1}\\
&Pf(a_1,a_2,a_3,1,\ldots,2N-1)Pf(1,2,\ldots,2N)\nonumber\\
=&\sum_{j=1}^{3}(-1)^{j-1}Pf(a_j,1,\ldots,2N-1)Pf(a_1,\hat{a}_j,a_{3},1,\ldots,2N).\label{pf2}
\end{align} 

Let us mention that Jacobi identity and Pl\"{u}cker relation \cite{aitken1959determinants,hirota2004direct} may be viewed as the particular cases of the above Pfaffian identities.

 \section{SOPs and Pfaff lattices}\label{sec:sop} 
 In this appendix, we would like to give a brief introduction on the SOPs and the associated integrable systems for the convenience of the readers.  Let's begin with the following definition.
 \begin{define}
Let $\langle \cdot, \cdot\rangle$ be a skew-symmetric inner product in the polynomial space over the field of real numbers, more exactly speaking, a bilinear 2-form from $\mathbb{R}(z)\times\mathbb{R}(z)\rightarrow \mathbb{R}$ satisfying the skew symmetric relation:
$$\langle f(z),g(z)\rangle=-\langle g(z),f(z)\rangle.$$
A family of (monic) polynomials $\{P_n(z)\}_{n=0}^{\infty}$ are called SOPs  with respect to the  skew-symmetric inner product $\langle \cdot, \cdot\rangle$ if they satisfy the following relations:
\begin{subequations}\label{skew_inner}
\begin{align}
&\langle P_{2n}(z), P_{2m+1}(z)\rangle=r_n\delta_{nm},\label{skew_inner1}\\
&\langle P_{2n}(z), P_{2m}(z)\rangle=\langle P_{2n+1}(z), P_{2m+1}(z)\rangle=0,\label{skew_inner2}
\end{align}
\end{subequations}
for some appropriate nonzero numbers $r_n$. 
 \end{define}
 \begin{remark}
 From the orthogonality \eqref{skew_inner}, the SOPs of odd degree have the following ambiguity: the skew orthogonality relation \eqref{skew_inner} is invariant under the replacement
$$
P_{2n+1}(z)\rightarrow P_{2n+1}(z)+\alpha_nP_{2n}(z)
$$
for any number $\alpha_n$.
In many cases, the coefficient of $z^{2n}$ in $P_{2n+1}(z)$ is usually chosen as 0 and then the SOPs are uniquely determined under some appropriate condition (as will be seen in the following subsections).
 \end{remark}
\begin{remark}
The partition functions of random matrices for Gaussian orthogonal ensembles (GOE) and Gaussian symplectic ensembles (GSE) are associated with SOPs with respect to  skew-symmetric inner products 
\begin{eqnarray*}
&&\langle f(z),g(z)\rangle_{goe}=\int_{-\infty}^{\infty}\int_{-\infty}^{\infty}sgn(x-y)f(x)g(y)e^{-NV(x)-NV(y)}dxdy,\\
&&\langle f(z),g(z)\rangle_{gse}=\int_{-\infty}^{\infty}\left(f(x)g'(x)-f'(x)g(x)\right)e^{-NV(x)}dx,
\end{eqnarray*}
respectively. Here $N$ is the order of the random matrix and $V(x)$ is an even degree polynomial with positive leading coefficient. 
\end{remark}

\subsection{Representations in terms of determinants}
The SOPs can be explicitly expressed in terms of the bi-moments defined by
\begin{align}
\mu_{i,j}=\langle z^i,z^j\rangle=-\langle z^j,z^i\rangle
\end{align}
under some nondegenerate assumption. If the determinant $\det\left((\mu_{i,j})_{i,j=0}^{2n-1}\right)$ of coefficient matrix is nonzero, according to the linear system obtained from \eqref{skew_inner}, it is not hard to get
\begin{subequations}
\begin{align}
 &P_{2n}(z)=\frac{1}{\det\left(\mu_{i,j}\right)_{0\leq i,j\leq 2n-1}}
 \left|
 \begin{array}{cccc}
 \mu_{0,0}& \mu_{0,1}&\cdots& \mu_{0,2n}\\
  \mu_{1,0}& \mu_{1,1}&\cdots& \mu_{1,2n}\\
  \vdots&\vdots&\ddots&\vdots\vdots\\
   \mu_{2n-1,0}& \mu_{2n-1,1}&\cdots& \mu_{2n-1,2n}\\
   1&z&\cdots&z^{2n}
 \end{array}
 \right|,
\label{sop_even_det}\\
& P_{2n+1}(z)=\frac{1}{\det\left(\mu_{i,j}\right)_{0\leq i,j\leq 2n-1}}
 \left|
 \begin{array}{ccccc}
 \mu_{0,0}& \mu_{0,1}&\cdots& \mu_{0,2n-1}&\mu_{0,2n+1}\\
  \mu_{1,0}& \mu_{1,1}&\cdots& \mu_{1,2n-1}&\mu_{1,2n+1}\\
  \vdots&\vdots&\ddots&\vdots&\vdots\\
     \mu_{2n-1,0}& \mu_{2n-1,1}&\cdots& \mu_{2n-1,2n-1}&\mu_{2n-1,2n+1}\\
      1&z&\cdots&z^{2n-1}&z^{2n+1}
 \end{array}
 \right|.\label{sop_odd_det}
\end{align}
\end{subequations}
From the fact that the determinant of any skew-symmetric matrix of odd order is zero, it obviously follows that
\begin{align}
&\langle P_{2n}(z), P_{2n}(z)\rangle=\langle P_{2n}(z), z^{2n}\rangle=\langle z^{2n}, P_{2n}(z)\rangle=0,\\
&\langle P_{2n+1}(z),P_{2n+1}(z)\rangle=\langle P_{2n+1}(z),z^{2n+1}\rangle=\langle z^{2n+1}, P_{2n+1}(z)\rangle=0.
\end{align}
 And, 
the $r_n$ in \eqref{skew_inner1} is determined by 
\begin{align}
r_n&=\langle P_{2n}(z), P_{2n+1}(z)\rangle=\langle P_{2n}(z), z^{2n+1}\rangle\nonumber\\
&=
\frac{1}{\det\left(\mu_{i,j}\right)_{0\leq i,j\leq 2n-1}}
 \left|
 \begin{array}{ccccc}
 \mu_{0,0}& \mu_{0,1}&\cdots& \mu_{0,2n-1}&\mu_{0,2n+1}\\
  \mu_{1,0}& \mu_{1,1}&\cdots& \mu_{1,2n-1}&\mu_{1,2n+1}\\
  \vdots&\vdots&\ddots&\vdots&\vdots\\
   \mu_{2n,0}& \mu_{2n,1}&\cdots& \mu_{2n,2n-1}&\mu_{2n,2n+1}\\
 \end{array}
 \right|.\label{exp:r}
\end{align}
Therefore, we have obtained the explicit expressions in terms of the determinants associated to the moments as \eqref{sop_even_det} and \eqref{sop_odd_det}. Actually, the skew-symmetric property of the moment matrix $(\mu_{i,j})_{i,j=0}^{k}$ promotes one to give the representations in terms of Pfaffians, which was noticed by Adler, Horozov and van Moerbeke in \cite{adler1999pfaff}.

\subsection{Representations in terms of Pfaffians}  
Now we are ready to give the representations of SOPs in terms of Pfaffians. The readers may refer \cite{adler1999pfaff,miki2011discrete} for more information.
\begin{theorem} If any Pfaffian 
$$Pf(0,1,\cdots,2n-1)\triangleq\tau_{2n}$$ is nonzero, then the monic SOPs $\{P_n(z)\}_{n=0}^{\infty}$ have the following explicit form in terms of Pfaffians 
\begin{subequations}
\begin{align}
&P_{2n}(z)=\frac{1}{\tau_{2n}}Pf(0,1,\cdots,2n-1,2n,z),\\
&P_{2n+1}(z)=\frac{1}{\tau_{2n}}Pf(0,1,\cdots,2n-1,2n+1,z).
\end{align}
\end{subequations}
Here the Pfaffian entries are defined as 
$$Pf(i,j)=\mu_{i,j},\qquad Pf(i,z)=z^i.$$
In this case, 
$$r_n=\langle P_{2n}(z), P_{2n+1}(z)\rangle=\langle P_{2n}(z), z^{2n+1}\rangle=\frac{\tau_{2n+2}}{\tau_{2n}}.$$
\end{theorem}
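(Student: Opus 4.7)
The plan is to obtain both Pfaffian formulas directly from the determinantal expressions \eqref{sop_even_det} and \eqref{sop_odd_det} already at hand, by invoking the algebraic bridge between determinants of (augmented) skew-symmetric matrices and Pfaffians collected in the Appendix, specifically \eqref{det_pf1} and \eqref{det_pf_even}. This is in fact exactly the route already signposted in the body of the text when the analogous PSOP formulas were introduced.

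I would treat the even case first. The denominator $\det(\mu_{i,j})_{0\le i,j\le 2n-1}$ is the determinant of a $2n\times 2n$ skew-symmetric matrix, hence by \eqref{det_pf1} it equals $\tau_{2n}^{2}$. The numerator of \eqref{sop_even_det} is the determinant of a $(2n+1)\times(2n+1)$ matrix whose upper-left $2n\times 2n$ block is exactly this skew-symmetric matrix, augmented by one extra column $(\mu_{i,2n})_{i=0}^{2n-1}$ and one extra bottom row $(1,z,\ldots,z^{2n})$. This matches the template of \eqref{det_pf_even} verbatim, so the numerator factors as $\tau_{2n}\cdot Pf(0,1,\ldots,2n-1,2n,z)$ once one reads off the Pfaffian-entry dictionary $Pf(i,j)=\mu_{i,j}$, $Pf(i,z)=z^{i}$. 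Cancelling one factor of $\tau_{2n}$ against the denominator delivers the claim. The odd case \eqref{sop_odd_det} is handled identically: the extra column and bottom row are now built from $\mu_{i,2n+1}$ and $z^{2n+1}$, the same identity \eqref{det_pf_even} applies, and the Pfaffian-entry dictionary extends consistently with $Pf(2n+1,j)=\mu_{2n+1,j}$ and $Pf(2n+1,z)=z^{2n+1}$.

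For the normalisation $r_{n}$, I would first observe that $r_{n}=\langle P_{2n}(z),z^{2n+1}\rangle$: the lower-degree part of $P_{2n+1}(z)-z^{2n+1}$ lies in the span of $P_{0},P_{1},\ldots,P_{2n}$, all of which are annihilated by $\langle P_{2n},\cdot\rangle$ (using \eqref{skew_inner} together with the fact that a skew-symmetric form vanishes on the diagonal, so $\langle P_{2n},P_{2n}\rangle=0$). Then I would expand the Pfaffian $Pf(0,1,\ldots,2n-1,2n,z)$ along the symbol $z$, pair each term $z^{j}$ with $z^{2n+1}$ under $\langle\cdot,\cdot\rangle$ to convert it into $\mu_{j,2n+1}=Pf(j,2n+1)$, and recognise the resulting sum as the Pfaffian Laplace expansion of $Pf(0,1,\ldots,2n,2n+1)=\tau_{2n+2}$ along its last label; dividing by $\tau_{2n}$ yields $r_{n}=\tau_{2n+2}/\tau_{2n}$.

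The only mildly delicate point is the bookkeeping around \eqref{det_pf_even}: one must match the orientation of the augmenting row and column so that the extracted Pfaffian factor is $+\tau_{2n}$ rather than its negative, and similarly track signs in the final Laplace-type expansion used to evaluate $r_{n}$. These are routine but easy to flip; once the sign conventions are pinned down, the whole theorem is a one-line consequence of \eqref{det_pf1} and \eqref{det_pf_even} plus a single Pfaffian expansion.
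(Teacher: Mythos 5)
Your proposal is correct and follows essentially the same route as the paper: the denominator is identified as $\tau_{2n}^2$ via \eqref{det_pf1}, the numerators are factored through the determinant--Pfaffian identities \eqref{det_pf_odd}--\eqref{det_pf_even}, and $r_n$ is evaluated by expanding $Pf(0,1,\cdots,2n,z)$ and recognising the Laplace expansion of $\tau_{2n+2}$. The only addition is your explicit justification of $\langle P_{2n},P_{2n+1}\rangle=\langle P_{2n},z^{2n+1}\rangle$, which the paper takes for granted from \eqref{exp:r}.
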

\begin{proof}
Looking at the denominators in the representations \eqref{sop_even_det} and \eqref{sop_odd_det}, it is obvious that 
$$\det\left(\mu_{i,j}\right)_{0\leq i,j\leq 2n-1}=\tau_{2n}^2.$$
 
It is sufficient to prove the following equalities 
\begin{align*}
&\left|
 \begin{array}{ccccc}
 0& \mu_{0,1}&\cdots& \mu_{0,2n-1}&\mu_{0,2n}\\
  -\mu_{0,1}& 0&\cdots& \mu_{1,2n-1}&\mu_{1,2n}\\
  \vdots&\vdots&\ddots&\vdots&\vdots\\
   -\mu_{0,2n-1}& -\mu_{1,2n-1}&\cdots&0&\mu_{2n-1,2n}\\
   1&z&\cdots&z^{2n-1}&z^{2n}
 \end{array}
 \right|=\tau_{2n}Pf(0,1,\cdots,2n-1,2n,z),\\
 &\left|
 \begin{array}{ccccc}
 0& \mu_{0,1}&\cdots& \mu_{0,2n-1}&\mu_{0,2n+1}\\
  -\mu_{0,1}& 0&\cdots& \mu_{1,2n-1}&\mu_{1,2n+1}\\
  \vdots&\vdots&\ddots&\vdots&\vdots\\
     -\mu_{0,2n-1}& -\mu_{1,2n-1}&\cdots& 0&\mu_{2n-1,2n+1}\\
      1&z&\cdots&z^{2n-1}&z^{2n+1}
 \end{array}
 \right|=\tau_{2n}Pf(0,1,\cdots,2n-1,2n+1,z).
 \end{align*}
Actually these two equalities are immediately obtained by applying \eqref{det_pf_odd} and \eqref{det_pf_even}.

 Moreover,  a direct computation by use of the formula for expansions of Pfaffians will lead to
  \begin{align*}
  r_n&=\langle P_{2n}(z), z^{2n+1}\rangle=\frac{1}{\tau_{2n}}\langle Pf(0,1,\cdots,2n-1,2n,z), z^{2n+1}\rangle\\
  &=\frac{1}{\tau_{2n}} \sum_{i=0}^{2n}(-1)^iPf(0,1,\cdots,\hat i,\cdots,2n) \langle z^i, z^{2n+1}\rangle\\
  &=\frac{1}{\tau_{2n}} \sum_{i=0}^{2n}(-1)^iPf(0,1,\cdots,\hat i,\cdots,2n)Pf(i,2n+1)=\frac{\tau_{2n+2}}{\tau_{2n}}.
    \end{align*}
   \end{proof}
   \begin{remark}
   As for the GOE case, the corresponding Pfaffians $Pf(0,1,\cdots,2n-1)\triangleq\tau_{2n}$ with the Pfaffian entries $$Pf(i,j)=\mu_{i,j}=\int_{-\infty}^{\infty}\int_{-\infty}^{\infty}sgn(x-y)x^iy^je^{-NV(x)-NV(y)}dxdy$$ are strictly nonzero. In fact this can be proved by  use of the equality \eqref{id_deBr1} due to de Bruijn \cite{de1955some}, which gives 
      \begin{align*}
   \tau_{2n}&=(-1)^n\idotsint\limits_{-\infty< x_1<\cdots< x_{2n}< \infty} \det(x_j^{i-1})_{1\leq i,j\leq 2n}e^{-N\sum_{i=1}^{2n}V(x_i)}dx_1\cdots dx_{2n}\\
   &=(-1)^n\idotsint\limits_{-\infty< x_1<\cdots< x_{2n}< \infty} \prod_{1\leq i<j\leq 2n}(x_j-x_i)e^{-N\sum_{i=1}^{2n}V(x_i)}dx_1\cdots dx_{2n}\neq 0,
   \end{align*}
   which means that the corresponding SOPs make sense.
   \end{remark}
   \begin{remark}
As for the Pfaffian entries defined by    
   $$Pf(i,j)=\mu_{i,j}=\int_{-\infty}^{\infty}\left(x^i (x^j)'-x^j (x^{i})'\right)e^{-NV(x)}dx,$$
  it follows from  the de Bruijn's equality \eqref{id_deBr3} that
 \begin{align*}
 \tau_{2n}&=\frac{1}{n!}\int_{-\infty}^\infty\cdots\int_{-\infty}^\infty\det\left(x_j^{i},ix_j^{i-1}\right)_{0\leq i\leq 2n-1,1\leq j\leq n}e^{-N\sum_{i=1}^{n}V(x_i)}dx_1\cdots dx_n\\
 &=\frac{1}{n!}\int_{-\infty}^\infty\cdots\int_{-\infty}^\infty\prod_{1\leq i<j\leq n}(x_j-x_i)^4e^{-\sum_{i=1}^{n}V(x_i)}dx_1\cdots dx_n>0,
   \end{align*}
   which corresponds to the GSE case. It implies that the corresponding SOPs are well-defined.
     \end{remark}   
In the setup in terms of Pfaffians, the skew-orthonormal polynomials $\{q_k\}_{0\leq k\leq \infty}$ given by
   $$q_{2n}(z)=\frac{1}{\sqrt{\tau_{2n}\tau_{2n+2}}}Pf(0,1,\cdots,2n-1,2n,z),\quad q_{2n+1}(z)=\frac{1}{\sqrt{\tau_{2n}\tau_{2n+2}}}Pf(0,1,\cdots,2n-1,2n+1,z)$$
   satisfy
   $$\langle q_i,q_j\rangle_{i,j\geq0}=J,$$
   where  $J$ is a semi-infinite skew-symmetric matrix with zero everywhere except for the following $2\times2$ blocks along the ``diagonal'':
$$
J=\left(\begin{array}{ccccc}
0&1&&&\\
-1&0&&&\\
&&0&1&\\
&&-1&0&\\
&&&&\ddots
\end{array}\right).
$$
If we write $$q(z)=(q_0,q_1,\cdots)^\top=Q\chi(z),\qquad \chi(z)=(1,z,z^2,\cdots)^\top,$$
then the semi-infinite skew-symmetric moment matrix $\mu_{\infty}=(\mu_{i,j})_{0\leq i,j\leq\infty}$ satisfies
$$Q\mu_\infty Q^\top=J,$$
leading to
$$\mu_\infty=Q^{-1}JQ^{\top -1},$$
which is called the ``skew-Borel decomposition'' for the skew symmetric matrix $\mu_\infty$. Note that $Q$ is lower-triangular.
Furthermore,  it is not hard to see that there holds
\begin{align}
zq(z)=Lq(z),\qquad\qquad \text{with}\qquad L=Q\Lambda Q^{-1}, \label{exp:LQ}
\end{align}
where $\Lambda$ is the shift matrix.
\subsection{Pfaff lattice}
In \cite{adler1999pfaff}, Adler, Horozov and van Moerbeke constructed the Pfaff lattice,  whose tau-functions are Pfaffians and the wave vectors of the Lax pair SOPs.
We restate it as follows.

Consider the semi-infinite skew-symmetric moment matrix $\mu_{\infty}=(\mu_{i,j})_{0\leq i,j\leq\infty}$ produced by the initial skew-symmetric matrix:
$$
\mu_\infty(t)=e^{\sum_{k>0} t_k \Lambda^k} \mu_\infty(0) e^{\sum_{k>0} t_k \Lambda^{\top k}}.
$$
 It is not hard to see that $\mu_{\infty}(t)$ actually evolves according to the vector fields 
\begin{align}\label{evo:pf_mom}
\frac{\partial\mu_\infty}{\partial t_k}=\Lambda^k\mu_\infty+\mu_\infty\Lambda^{\top k},\qquad k=1,2,\cdots.
\end{align}
It is shown in \cite{adler1999pfaff} that the matrix $L$ defined in \eqref{exp:LQ} evolves according to
\begin{align} \label{eq:pff_lattice}
\frac{\partial L}{\partial t_i}=[-\pi_{\textbf{k}}L^i,L]=[\pi_{\textbf{n}}L^i,L] ,
\end{align}
which is nothing but the Pfaff lattice\footnote{Here $\textbf{k}$ is the Lie algebra of lower-triangular matrices with some special feature and 
$$\textbf{n}:= \{a \in \mathscr{D}\  \text{such that}\ Ja^\top J=a\}=sp(\infty)$$
satisfying  $\textbf{k} + \textbf{n}=\mathscr{D} := gl_\infty$.
The $\pi_{\textbf{k}}, \pi_{\textbf{n}}$ are the corresponding projection operators. Please refer \cite{adler1999pfaff} for detailed description.

}. 
It was shown to be integrable, by virtue of the AKS theorem \cite{adler1999pfaff}. 

\begin{remark}
Recall that the Toda lattice corresponds to a flow of Jacobi (tridiagonal) matrix. By contrast, any $t_k$ flow in \eqref{eq:pff_lattice} is quite an intricate system, because the matrix $L$ is a lower Hessenberg matrix and does not own concise structures. For more related work, please refer \cite{adler2002pfaff,adler2002toda,kodama2007geometry,kodama2010pfaff} etc..
\end{remark}

\subsection{Discrete counterpart of Pfaff lattice} By introducing the so-called skew-Christoffel transformation for SOPs, and its inverse transformation, Miki, Goda and Tsujimoto \cite{miki2011discrete} obtained a discrete integrable system in 1+1 dimension, which they call a discrete counterpart of the Pfaff lattice.
We restate their derivation as follows:

Introduce a series of polynomial sequences $\{P_n^t(z)\}_{n=0}^\infty$ (Here $t\in \mathbb{N}_0$ is a discrete variable)  iterated by 
\begin{align*}
&P_n^0(z):=P_n(z),\\
&P_{2n}^{t+1}(z):=\frac{1}{z-\lambda}\left(P_{2n+1}^t(z)+\sum_{k=0}^nA_{n,k}^tP_{2k}^t(z)+\sum_{k=0}^{n-1}B_{n,k}^tP_{2k+1}^t(z)\right),\\
&P_{2n+1}^{t+1}(z):=\frac{1}{z-\lambda}\left(P_{2n+2}^t(z)+C_n^tP_{2n}^t(z)\right),
\end{align*}
where $\lambda$ is some constant parameter and 
$$A_{n,k}^t=-\frac{r_n^t}{r_k^t}\frac{P_{2k+1}^t(\lambda)}{P_{2n}^t(\lambda)},\qquad B_{n,k}^t=\frac{r_n^t}{r_k^t}\frac{P_{2k}^t(\lambda)}{P_{2n}^t(\lambda)}, \qquad C_{n}^t=-\frac{P_{2n+2}^t(\lambda)}{P_{2n}^t(\lambda)}.$$

Then, for any fixed $t,$ $\{P_n^t\}_{n=0}^\infty$ are SOPs with the  skew-symmetric inner products $\langle\cdot,\cdot,\rangle^t$ produced by
\begin{align}\label{inn_CT}
\langle\cdot,\cdot\rangle^0:=\langle\cdot,\cdot\rangle,\qquad \langle\cdot,\cdot\rangle^{t+1}:=\langle(z-\lambda)\cdot,(z-\lambda)\cdot\rangle^t.
\end{align} More exactly,
$$\langle P_{2n}^t(z), P_{2m+1}^t(z)\rangle^t=r_n^t\delta_{nm},\qquad \langle P_{2n}^t(z), P_{2m}^t(z)\rangle^t=\langle P_{2n+1}^t(z), P_{2m+1}^t(z)\rangle^t=0.$$

The inverse transformation for the skew-Christoffel transformation, that is the transformation from $\{P_n^{t+1}\}_{n=0}^\infty$ to $\{P_n^t\}_{n=0}^\infty$,  can be constructed as follows 
\begin{align*}
&P_{2n}^{t}(z):=P_{2n}^{t+1}(z)+\sum_{k=0}^{n-1}\tilde A_{n,k}^tP_{2k}^{t+1}(z)+\sum_{k=0}^{n-1}\tilde B_{n,k}^tP_{2k+1}^{t+1}(z),\\
&P_{2n+1}^t(z):=P_{2n+1}^{t+1}(z)+\sum_{k=0}^{n}\tilde C_{n,k}^tP_{2k}^{t+1}(z)+\sum_{k=0}^{n-1}\tilde D_{n,k}^tP_{2k+1}^{t+1}(z),
\end{align*}
where 
\begin{align*}
&\tilde A_{n,k}^t=\frac{1}{r_k^{t+1}}\langle P_{2n}^t(z), P_{2k+1}^{t+1}(z)\rangle^{t+1},\qquad \tilde B_{n,k}^t=\frac{1}{r_k^{t+1}}\langle P_{2n}^t(z),P_{2k}^{t+1}(z)\rangle^{t+1},\\
&\tilde C_{n,k}^t=\frac{1}{r_k^{t+1}}\langle P_{2n+1}^t(z), P_{2k+1}^{t+1}(z)\rangle^{t+1},\qquad \tilde D_{n,k}^t=\frac{1}{r_k^{t+1}}\langle P_{2n+1}^t(z),P_{2k}^{t+1}(z)\rangle^{t+1}.
\end{align*}
If we introduce $\Phi^t=(P_0^t(z),P_1^t(z),\cdots)^\top$, then skew-Christoffel transformation and its inverse can be formulated as 
\begin{align*}
(z-\lambda)\Phi^{t+1}=L^t\phi^t,\qquad\qquad \Phi^{t}=R^t\Phi^{t+1},
\end{align*}
where 
$$
L^t=\left(\begin{array}{cccccc}
A_{0,0}^t&1&&&&\\
C_0^t&0&1&&&\\
A_{1,0}^t&B_{1,0}^t&A_{1,1}^t&1&&\\
0&0&C_1^t&0&1&\\
\vdots&\ddots&\ddots&\ddots&\ddots&\ddots
\end{array}
\right),\qquad
R^t=\left(\begin{array}{ccccc}
1&&&&\\
\tilde C_{0,0}^t&1&&&\\
\tilde A_{1,0}^t&\tilde B_{1,0}^t&1&&\\
\tilde C_{1,0}^t&\tilde D_{1,0}^t&\tilde C_{1,1}^t&1&\\
\vdots&\ddots&\ddots&\ddots&\ddots
\end{array}
\right).
$$
The compatibility condition of the above Lax pair yields the discrete Lax equation:
\begin{align}\label{eq:disc_pf_lat}
L^tR^t=R^{t+1}L^{t+1},
\end{align}
which is regarded as a discrete counterpart of the $t_1$ flow of the Pfaff lattice by the authors \cite{miki2011discrete} by noticing that 
\eqref{inn_CT} may lead to the moment evolution in \eqref{evo:pf_mom}
if a suitable limit is taken.
\begin{remark}
The equation \eqref{eq:disc_pf_lat} is a nonlocal discrete integrable system in $1+1$ dimension. Actually, the authors \cite{miki2011discrete} also derived a local discrete integrable system in $2+1$ dimension by introducing a 2-parameter spectral transformation.
\end{remark}

 \end{appendix}
    
\small
\bibliographystyle{abbrv}
\def\cydot{\leavevmode\raise.4ex\hbox{.}}
  \def\cydot{\leavevmode\raise.4ex\hbox{.}} \def\cprime{$'$}

\end{document}